\documentclass[sigconf, nonacm]{acmart}

\newcommand{\nop}[1]{}

\newcommand\vldbdoi{XX.XX/XXX.XX}
\newcommand\vldbpages{XXX-XXX}
\newcommand\vldbvolume{14}
\newcommand\vldbissue{1}
\newcommand\vldbyear{2020}
\newcommand\vldbauthors{\authors}
\newcommand\vldbtitle{\shorttitle} 
\newcommand\vldbavailabilityurl{URL_TO_YOUR_ARTIFACTS}
\newcommand\vldbpagestyle{plain} 
\newtheorem{definition}{Definition}
\newtheorem{example}{Example}
\newtheorem{lemma}{Lemma}
\usepackage[linesnumbered,ruled,vlined]{algorithm2e}
\usepackage{amsthm}
\usepackage{bm}
\usepackage{float}
 \usepackage{balance}
\usepackage{hyperref}

\usepackage{enumitem}
\usepackage{tcolorbox}

\usepackage{subfigure}

\begin{document}

\title{S$^3$GND: An Effective Learning-Based Approach for Subgraph Similarity Search Under Generalized Neighbor Difference Semantics (Technical Report)}

\author{Qi Wen}
\affiliation{%
  \institution{East China Normal University}
  \city{Shanghai}
  \country{China}
}
\email{51265902057@stu.ecnu.edu.cn}

\author{Xiang Lian}
\affiliation{%
  \institution{Kent State University}
  \city{Kent}
  \state{Ohio}
  \country{USA}
}
\email{xlian@kent.edu}

\author{Nan Zhang}
\affiliation{%
  \institution{East China Normal University}
  \city{Shanghai}
  \country{China}
}
\email{52275902026@stu.ecnu.edu.cn}

\author{Yutong Ye}
\affiliation{%
  \institution{East China Normal University}
  \city{Shanghai}
  \country{China}
}
\email{52205902007@stu.ecnu.edu.cn}

\author{Mingsong Chen}
\affiliation{%
  \institution{East China Normal University}
  \city{Shanghai}
  \country{China}
}
\email{mschen@sei.ecnu.edu.cn}

\begin{abstract}

Subgraph similarity search over large-scale graphs is a fundamental task that retrieves subgraphs similar to a given query graph from a data graph, and it plays a crucial role in real applications such as protein discovery, social network analysis, and recommendation systems. While prior works on subgraph similarity search studied various graph similarity metrics, in this paper, we propose a novel graph similarity semantics, \textit{generalized neighbor difference} (GND), that accounts for both the keyword-set relationships between vertices and edge-weight differences. We formulate the problem of \textit{subgraph similarity search under the generalized neighbor difference semantics} (S$^3$GND), which retrieves those subgraphs similar to a query graph $q$ under GND semantics. To efficiently tackle the S$^3$GND problem, we propose an effective learning-based approach, which constructs a keyword hypergraph from the data graph, and trains a \textit{hypergraph neural network} (HGNN) model to obtain high-quality keyword embedding representations. We design effective pruning strategies, \textit{keyword embedding MBR}, \textit{vertex-Level ND lower bound}, and \textit{graph-level GND lower bound pruning}, to rule out false alarms of candidate vertices/subgraphs, and devise a tree-based indexing mechanism to facilitate efficient S$^3$GND query answering. We develop an efficient S$^3$GND query-processing algorithm that traverses the index, applies pruning strategies, and returns actual S$^3$GND answers. Finally, we conduct extensive experiments to verify the effectiveness and efficiency of our proposed S$^3$GND approach over both real and synthetic graphs.

\end{abstract}

\maketitle

\pagestyle{\vldbpagestyle}
\begingroup\small\noindent\raggedright\textbf{PVLDB Reference Format:}\\
\vldbauthors. \vldbtitle. PVLDB, \vldbvolume(\vldbissue): \vldbpages, \vldbyear.\\
\href{https://doi.org/\vldbdoi}{doi:\vldbdoi}
\endgroup
\begingroup
\renewcommand\thefootnote{}\footnote{\noindent
This work is licensed under the Creative Commons BY-NC-ND 4.0 International License. Visit \url{https://creativecommons.org/licenses/by-nc-nd/4.0/} to view a copy of this license. For any use beyond those covered by this license, obtain permission by emailing \href{mailto:info@vldb.org}{info@vldb.org}. Copyright is held by the owner/author(s). Publication rights licensed to the VLDB Endowment. \\
\raggedright Proceedings of the VLDB Endowment, Vol. \vldbvolume, No. \vldbissue\ %
ISSN 2150-8097. \\
\href{https://doi.org/\vldbdoi}{doi:\vldbdoi} \\
}\addtocounter{footnote}{-1}\endgroup

\ifdefempty{\vldbavailabilityurl}{}{
\vspace{.3cm}
\begingroup\small\noindent\raggedright\textbf{PVLDB Artifact Availability:}\\
The source code, data, and/or other artifacts are available at \url{https://github.com/Luminous-wq}.
\endgroup
}


\begin{figure}[!t]
    \centering
    \subfigure[collaboration network $G$]{
        \scalebox{0.26}[0.26]{\includegraphics{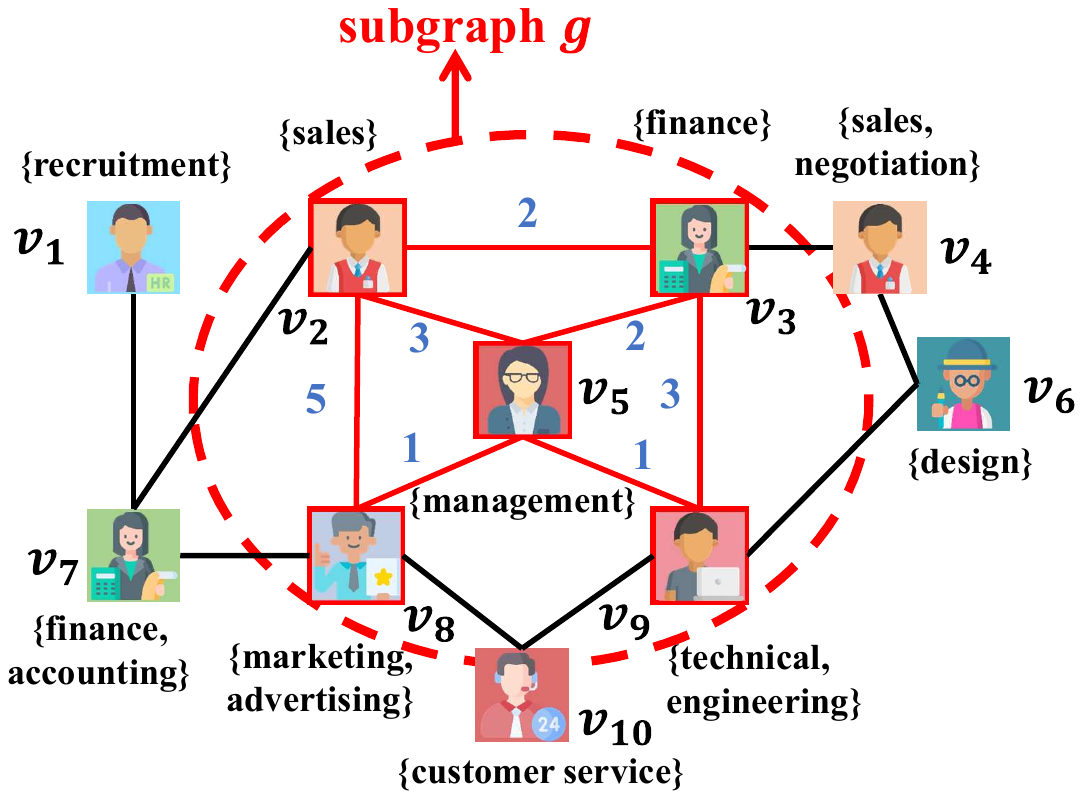}}
        \label{fig:exa1}
    }
    \subfigure[target team $q$]{
        \raisebox{4ex}{\scalebox{0.26}[0.26]{\includegraphics{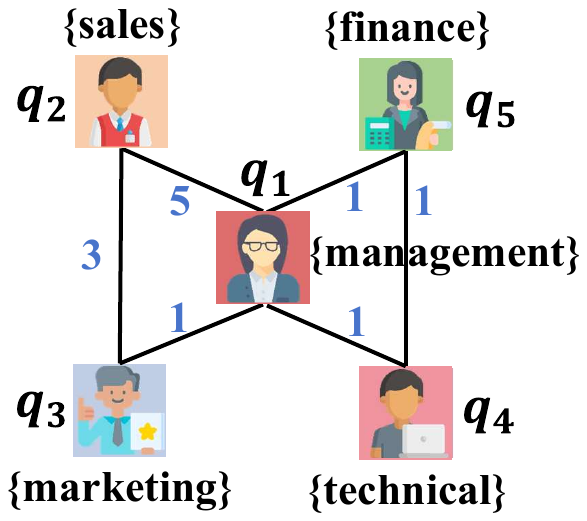}}}
        \label{fig:exa2}
    }\vspace{-3ex}
    \caption{An S$^3$GND example of the skilled team search.}\vspace{-3ex}
    \label{fig:problem_example}
        \vspace{-0.1in}
\end{figure}

\section{Introduction}
\label{sec:intro}
\enlargethispage{2\baselineskip}

One of the fundamental tasks in graph data management is the \textit{subgraph similarity search} problem, which retrieves subgraphs $g$ from a data graph $G$ that are similar to a given query graph $q$. Efficient and effective subgraph similarity search has gained increasing attention from the database community, and has been widely used in many important fields such as chemistry~\cite{shiokawa2025gesim, yoshimori2025context}, bioinformatics~\cite{shen2014applied, bonnici2024arcmatch}, and social network analysis~\cite{heimann2018regal, hassine2023non}.

In this paper, we focus on a large-scale, weighted data graph model, where vertices are associated with keyword sets and edges with weights. We propose a novel metric, namely \textit{generalized neighbor difference} (GND), to measure the similarity between two \textit{weighted} (sub)graphs by accounting for both the keyword-set relationships between matching vertices and the weight differences between edges (from vertices to their neighbors). Under GND semantics, we formulate an important problem, named \textit{subgraph similarity search under the generalized neighbor difference semantics} (S$^3$GND), which retrieves subgraphs $g$ from a weighted graph $G$ that match with a given query graph $q$ (i.e., with GND scores below a threshold).


We have the following motivation example of our S$^3$GND problem in the real application of project team recommendation.

\begin{example}\textbf{(The Skilled Team Search from Weighted Collaboration Networks)} 
\label{example:S3GND_example}
Figure~\ref{fig:exa1} shows a weighted collaboration network $G$ with 10 employee vertices, $v_1 \sim v_{10}$, each of whom has a set of skill keywords (e.g., employee $v_4$ is skilled at ``sales’’ and ``negotiation’’). Each edge in network $G$ is associated with a weight, indicating the past collaboration count (frequency) of two employees (e.g., employees $v_2$ and $v_3$ have collaborated twice before, thus, edge $e_{v_2, v_3}$ has a weight $2$).


A manager plans to launch a new project and seek a desired team (e.g., target team, $q$, as shown in Figure~\ref{fig:exa2}) with the required skills and collaborative experience. Specifically, each team member $q_i$ ($1 \leq i \leq 5$) must have the desired skills (e.g., member $q_3$ is required to possess the ``marketing’’ skill). Moreover, members with specific skills must share a strong collaborative relationship, as evidenced by past collaborations, to reduce communication costs for the project. In the example of Figure~\ref{fig:exa2}, the ``marketing'' member $q_3$ is expected to have previously collaborated with the ``sales'' member $q_2$ at least 3 times (i.e., the weight of edge $e_{q_3, q_2}$ is $3$).


In this case, the manager can conduct an S$^3$GND query to identify some team (e.g., the subgraph $g$ within the dashed circle in Figure~\ref{fig:exa1}) in the collaboration network $G$, such that (1) each team member in subgraph $g$ has all the required skills specified by the corresponding query vertex in $q$ (e.g., $v_8$ matches $q_3$, as $v_8$'s keyword set \{marketing, advertising\} contains $q_3$'s \{marketing\}), and; (2) the past collaboration frequencies of members (i.e., edge weights) should meet the weight constraints of query edges as much as possible (e.g., the collaboration count 5 between $v_2$ and $v_8$ in $g$ is above the threshold 3 between $q_3$ and $q_2$ in $q$). Note that, if an edge weight (i.e., past collaboration frequency) cannot meet the weight threshold of a query edge (e.g., the edge weight 3 between $v_2$ and $v_5$ in $g$ is lower than the threshold 5 between $q_2$ and $q_1$ in $q$), it may incur extra communication cost (relevant to the GND score). Thus, the S$^3$GND will return teams (e.g., subgraph $g$) with such GND scores as low as possible.  \qquad $\blacksquare$


\end{example}


In the example above, the S$^3$GND query can find subgraphs (i.e., teams) in a large-scale, weighted data graph (i.e., a collaboration network) that are similar to the query graph pattern under GND semantics (with strict keyword set containment relationships between vertices and small edge-weight distribution distances).


Note that prior work on the subgraph similarity search proposed various graph similarity metrics, either on an unweighted data graph with each vertex labeled by a single keyword~\cite{sanfeliu2012distance,blumenthal2020comparing,dutta2017neighbor,li2018efficient}, or over an unweighted data graph containing vertices associated with keyword sets~\cite{wen2025s3and}. In contrast, our GND semantic is designed for a more general data graph model, that is, a \textit{weighted data graph} containing vertices with keyword sets (instead of a single keyword per vertex) and edges associated with weights (rather than unweighted edges), which is not uncommon in many real-world applications such as bibliographical networks (i.e., author nodes with research interests as keywords and collaboration edges with the number of co-author publications as weights), social networks (i.e., user nodes with profile attribute values as keywords and friendship edges with influences as weights), and so on. Thus, we cannot directly apply prior techniques to solve our S$^3$GND problem.



Answering S$^3$GND queries over a large-scale data graph is challenging due to the enormous number of candidate subgraphs and the computationally expensive evaluation of keyword set and GND score constraints. To efficiently tackle the S$^3$GND problem, we design a novel learning-based approach that uses a \textit{hypergraph neural network} (HGNN) to obtain high-quality keyword embeddings. We also devise effective pruning strategies for keyword embedding MBRs and vertex- / graph-level GND scores, and construct a tree-based indexing mechanism over offline pre-computed data (i.e., keyword embedding MBRs and edge weight distributions) to facilitate efficient online retrieval of S$^3$GND answers.


Specifically, we made the following contributions in this paper.

\begin{enumerate}
    \item We formulate a novel problem, named \textit{subgraph similarity search under the generalized neighbor difference semantics} (S$^3$GND) over a large-scale, weighted data graph containing a keyword set per vertex in Section~\ref{sec:problem_definition}.
    
    \item We propose a two-phase framework to efficiently tackle the S$^3$GND problem in Section~\ref{sec:framework}.
    
    \item We design effective pruning strategies in Section~\ref{sec:Pruning} to reduce the S$^3$GND search space, enabling which we train a \textit{hypergraph neural network} (HGNN) to compute keyword embeddings and derive the edge-weight distribution distances (i.e., GND scores).
    
    \item We devise offline pre-computation and a tree-based indexing mechanism to facilitate efficient S$^3$GND query answering in Section~\ref{sec:offline}.
    
    \item We develop an online S$^3$GND query-processing algorithm that traverses the tree index and applies the pruning strategies proposed in Section~\ref{sec:online}.
    
    \item We demonstrate through extensive experiments the efficiency and effectiveness of our S$^3$GND query processing approach over real/synthetic graphs in Section~\ref{sec:experiment}.
\end{enumerate}

Section~\ref{sec:related_work} overviews the related work on (learning-based) subgraph similarity search. Finally, Section~\ref{sec:conclusion} concludes this paper.

\section{Problem Definition}
\label{sec:problem_definition}
In this section, we provide the graph data model, present the \textit{generalized neighbor difference} (GND) semantics for subgraph matching, and formally define the problem of \textit{subgraph similarity search under the generalized neighbor difference semantics} (S$^3$GND).

\subsection{Graph Data Model}

In this subsection, we first define the graph data model.

\begin{definition}
    \textbf{(Data Graph, \bm{$G$})} A data graph $G$ is in the form of a triple $(V(G), E(G), \Gamma(G))$, where $V(G)$ is a set of vertices, $v_i$, in graph $G$, each with a keyword set $v_i.K$, $E(G)$ represents a set of edges $e_{v_i, v_j}$ (connecting two ending vertices $v_i$ and $v_j$), each associated with a positive weight $e_{v_i, v_j}.w$, and $\Gamma(G)$ is a mapping function: $V(G) \times V(G) \rightarrow E(G)$.
    \label{def:data_graph}
\end{definition}

The graph data model (given in Definition \ref{def:data_graph}) can be used in many real applications such as social networks (containing user vertices with profile attributes and friendship edges with influence intensities), bibliographical networks (containing author vertices with certain research expertise and edges that indicate the collaboration relationships, with the number of co-authored papers as edge weights), collaboration networks, and so on.

\subsection{Generalized Neighbor Difference (GND) Semantics}
In this subsection, we define the \textit{neighbor difference} (ND) and the \textit{generalized neighbor difference} (GND) semantics.

\noindent{\textbf{The Vertex-to-Vertex Mapping, $M: V(q) \rightarrow V(g)$:} We consider the  matching between a query graph $q$ and a subgraph $g$ with the same graph size. If there is a vertex-to-vertex mapping, $M: V(q) \rightarrow V(g)$, between $q$ and $g$, then it holds that $q_j.K \subseteq v_i.K$, for each matching pair of query vertex $q_j \in V(q)$ and data vertex $v_i \in V(g)$ (denoted as $M(q_j)=v_i$).


\begin{definition}
    \label{def:ND}
    \textbf{(Neighbor Difference, \bm{$ND(q_j, v_i)$})} 
    Given a target vertex, $q_j$, of a query graph $q$, a vertex, $v_i$, of a subgraph $g$, and a vertex-to-vertex mapping $M$ (i.e., $M(q_j)=v_i$), their \textit{neighbor difference} (ND), $ND(q_j, v_i)$, is given by:
    \begin{equation}
         ND(q_j, v_i) = \hspace{-2ex}\sum_{\forall n_j \in N(q_j), n_i \in N(v_i), M(n_j) = n_i} \hspace{-3ex} \max\big\{e_{q_j, n_j}.w 
         -e_{v_i,n_i}.w, 0\big\},
        \label{eq:ND}
    \end{equation}
    where $N(\cdot)$ is a set of 1-hop neighbor vertices.
\end{definition}

Note that, prior works on S$^3$AND semantics \cite{wen2025s3and} considered the \textit{edge existence} between a vertex $q_j$ (or $v_i$) and its neighbors in $N(q_j)$ (or $N(v_i)$) for \textit{unweighted graphs}. Instead, the neighbor difference (ND) in Definition~\ref{def:ND} takes into account \textit{weighted subgraphs} with the intensity of edges (i.e., edge weights $e_{v_i, n_i}.w$) that can reach that of target (query) edges (i.e., edge weights $e_{q_j, n_j}.w$) in the query graph $q$. Thus, the neighbor difference in the S$^3$AND semantics \cite{wen2025s3and} is a special case of our ND definition, where edge weights are 1's if edges exist (0's, otherwise).

Next, we define the generalized neighbor difference (GND), which aggregates neighbor differences across matching vertices.

\begin{definition}
    \label{def:GND}
    \textbf{(Generalized Neighbor Difference, \bm{$GND(q,g)$})} Given a query graph $q$, a subgraph $g$, and a vertex-to-vertex mapping $M$ from $V(q)$ to $V(g)$ (note: $|V(q)|=|V(g)|$), the \textit{generalized neighbor difference} (GND), $GND(q,g)$, between $q$ and $g$ is defined as the aggregation over neighbor differences of all the matching vertex pairs $(q_j, v_i)$, that is,
    \begin{equation}
        GND(q,g)=f\left(\left\{ ND(q_j, v_i) | \forall (q_j, v_i), s.t. M(q_j) = v_i \right\}\right),
        \label{eq:GND}
    \end{equation}
    where $ND(q_j, v_i)$ is given by Eq.~(\ref{eq:ND}), and $f(S)$ is an aggregate function (e.g., MAX, AVG, or SUM) over a set $S$.
\end{definition}


In Definition \ref{def:GND}, function $f(\cdot)$ (given by Eq.~(\ref{eq:GND})) aggregates the NDs for all vertex pairs via MAX/AVG/SUM predicate, which outputs the maximum, average, or summed vertex neighbor differences between query graph $q$ and subgraph $g$, respectively. Intuitively, the GND score measures the dissimilarity between $q$ and $g$ with respect to graph structure and edge weights. 

In Example \ref{example:S3GND_example}, the GND score determines the maximum/average /overall past collaboration intensity and, in turn, the familiarity with other team members in the team collaboration network. For AVG and SUM aggregates (i.e., $f(\cdot)=AVG/SUM$), GND scores imply the amortized performance for the team collaboration, whereas for the MAX aggregate (i.e., $f(\cdot)=MAX$), the GND score can identify the weakest vertex matching (with the most dissimilar neighbor connections/edges). In this paper, we primarily focus on $MAX$ and $SUM$ aggregates. The AVG aggregate can be obtained by dividing $SUM$ by the query graph size $|V(q)|$ (constant). Thus, we will reserve the interesting topic of considering the AVG aggregation function for our future work.



\subsection{The S$^3$GND Problem}

In this subsection, we formally define the problem of \textit{subgraph similarity search under the generalized neighbor difference semantics} (S$^3$GND), which retrieves subgraphs $g$ from a data graph $G$, matching with a given query graph $q$ with low GND score.

\begin{definition}
    \label{def:S3GND}
    \textbf{(Subgraph Similarity Search Under Generalized Neighbor Difference Semantics, \bm{$S^3GND(G,q)$})} Given a data graph $G$, a query graph $q$, a vertex-to-vertex mapping $M: V(q)\to V(g)$, and a generalized neighbor difference threshold $\delta$, a \textit{subgraph similarity search under the generalized neighbor difference semantics} (S$^3$GND) retrieves connected subgraphs $g$ of $G$, such that:
    \begin{itemize}
        \item {\bf (Equal Subgraph Size)} $|V(q)|=|V(g)|$;
        \item {\bf (Keyword Set Containment)} for the mapping vertices $q_j\in V(q)$ and $v_i\in V(g)$, it holds that $q_j.K\subseteq v_i.K$, and;
        \item {\bf (Generalized Neighbor Difference)} the generalized neighbor difference satisfies the condition that $GND(q,g) \leq \delta$,
    \end{itemize}
    where $GND(q,g)$ is given by Eq.~(\ref{eq:GND}).
\end{definition}


Intuitively, S$^3$GND returns subgraphs $g$ similar to the query graph $q$ of the same size (i.e., $|V(q)|=|V(g)|$). Here, each query vertex $q_j\in V(q)$ matches with a similar vertex $v_i\in V(g)$, if their keyword sets satisfy the containment relationship (i.e., $q_j.K \subseteq v_i.K$), and their GND score is low (i.e., $GND(q,g) \leq \delta$).

As shown in Example~\ref{example:S3GND_example}, we can issue an S$^3$GND query to find a team that meets the skill requirements and has extensive collaboration experience, thereby significantly reducing the onboarding costs for the new team. Such a team is more likely to succeed based on the 
query graph pattern customized for the project.


\vspace{0.5ex}\noindent {\bf Challenges:}
Efficient and effective answering of S$^3$GND queries is rather challenging, due to the enormous S$^3$GND search space. First, it is very costly to enumerate many subgraphs (of the same size as query graph $q$) that satisfy the keyword set constraints in a large-scale data graph $G$. Second, calculating GND scores for weighted subgraph candidates is computationally expensive. It is therefore highly desirable to design effective approaches to efficiently and scalably retrieve S$^3$GND query answers.


\begin{table}[t]
\caption{\small Symbols and Descriptions}
\vspace{-2ex}
\scriptsize
\label{tab1}
\begin{center}
\begin{tabular}{|l|p{6.5cm}|}
\hline
\textbf{Symbol}&{\textbf{Description}} \\
\hline\hline
$G$ & a data graph\\
\hline
$V(G)$ & a set of vertices $v_i$\\
\hline
$E(G)$ & a set of edges $e_{v_i,v_j}$\\
\hline
$q$ & a query graph\\
\hline
$g$ & a subgraph of data graph $G$\\
\hline
$v_i.K$ & a keyword set associated with vertex $v_i \in V(G)$\\
\hline
$e_{v_i,v_j}.w$ & a positive weight of edge $e_{v_i, v_j} \in E(G)$\\
\hline
$N(v_i)$ & a set of vertex $v_i$'s 1-hop neighbors \\
\hline
$ND(q_j, v_i)$ & the 1-hop neighbor difference between vertices $q_j$ and $v_i$ w.r.t. edge weights \\
\hline
$GND(q, g)$ & the generalized neighbor difference (GND) score between query graph $q$ and subgraph $g$  \\
\hline
$\delta$ & a threshold for the GND score \\
\hline
$\Phi$ & a trained keyword embedding model\\
\hline
$o(\cdot)$ & the output embedding vector of $\Phi$\\
\hline
$H$ & a keyword hypergraph\\
\hline
\end{tabular}
\end{center}\vspace{-3ex}
\end{table}

Table \ref{tab1} depicts the commonly used notations and their descriptions in this paper.

\section{The S$^3$GND Query Framework}
\label{sec:framework}



Algorithm~\ref{alg:S3GND} outlines the pseudo code of our S$^3$GND query processing framework, which aims to efficiently search subgraphs similar to a given query graph under GND semantics, consisting of two phases: \textit{offline pre-computation} (lines 1-5) and \textit{online S$^3$GND query processing phases} (lines 6-8).
Specifically, in the \textit{offline pre-computation} phase, we offline train a keyword embedding model, $\Phi$, to generate embedding vectors for vertex keywords, such that the size of \textit{Minimum Bounding Rectangle} (MBR)~\cite{papadias1997spatial} over all keyword embedding vectors in each vertex is as small as possible (line 1). Then, for each vertex $v_i$ in the data graph $G$, we use the trained keyword embedding model $\Phi$ to compute the keyword embedding vector $o(v_i.k_j)$ for each keyword $v_i.k_j\in v_i.K$ (lines 2-3). Next, we use the keyword embedding vector to compute the \textit{keyword embedding MBR}, $v_i.MBR$, bounding all keyword embedding vectors $o(v_i.k_j)$ in each vertex $v_i$, and compute a sorted list of edge weights, $v_i.list_w$ (lines 4-5). Finally, we store the pre-computed data $v_i.MBR$ and $v_i.list_w$ in an auxiliary data structure, $v_i.Aux$ (line 6), over which
we construct a tree index $\mathcal{I}$ to facilitate online retrieval, pruning, and querying (line 7).

\setlength{\textfloatsep}{0pt}
\begin{algorithm}[t]
    \caption{\bf  S$^3$GND Query Processing Framework}
      \small
    \label{alg:S3GND}
    \KwIn{
        \romannumeral1) a data graph $G$, 
        \romannumeral2) a query graph $q$, and
        \romannumeral3) a generalized neighbor difference threshold $\delta$
    }
    \KwOut{
        a set, $S$, of subgraphs $g$ matching with the query graph $q$ under GND semantics
    }
    \tcp{\bf Offline Pre-Computation Phase}

    train a keyword embedding model $\Phi$ to compute an embedding vector, $o(v_i.k_j)$, for each keyword $v_i.k_j\in v_i.K$ in vertices $v_i$

    \For{each vertex $v_i \in V(G)$}{
        
        use model $\Phi$ to obtain a keyword embedding vector $o(v_i.k_j)$ for each keyword $v_i.k_j\in v_i.K$
        


        obtain the keyword embedding MBR, $v_i.MBR$

        compute a sorted list of edge weights, $v_i.list_w$

        store $(v_i.MBR,v_i.list_w)$ in an auxiliary data structure, $v_i.Aux$

    }

    construct a tree index $\mathcal{I}$ with pre-computed auxiliary data for the graph $G$
    
    \tcp{\bf Online S$^3$GND Query Processing Phase}

    \For{each S$^3$GND qeury}{

        traverse $\mathcal{I}$ by applying pruning strategies (i.e., keyword set pruning, vertex-level ND lower bound pruning, and graph-level GND lower bound pruning) to retrieve candidate vertices $q_j.V_{cand}$ of each query vertex $q_j\in V(q)$

        assemble/refine candidate subgraphs $g$ and return a set, $S$, of actual S$^3$GND subgraph answers
    
    } 
\end{algorithm}
\setlength{\textfloatsep}{0pt}

In the \textit{online S$^3$GND query processing} phase, for each S$^3$GND query, we traverse the index $\mathcal{I}$ by applying the pruning strategies (i.e., keyword embedding MBR pruning, vertex-level ND lower bound pruning, and graph-level GND lower bound pruning, as discussed later in Section \ref{sec:Pruning}), and retrieve candidate vertices in $q_j.V_{cand}$ for each query vertex $q_j \in V(q)$ (lines 8-9).
Finally, we join candidate vertices of query vertices in $q_j.V_{cand}$ into candidate subgraphs $g$, and refine them to return a set, $S$, of actual subgraph answers that meet S$^3$GND predicates (line 10).

Figure \ref{fig:framework} illustrates a workflow of the S$^3$GND query processing framework.

\begin{figure}[!t]
    \centering
    \includegraphics[width=0.9\linewidth]{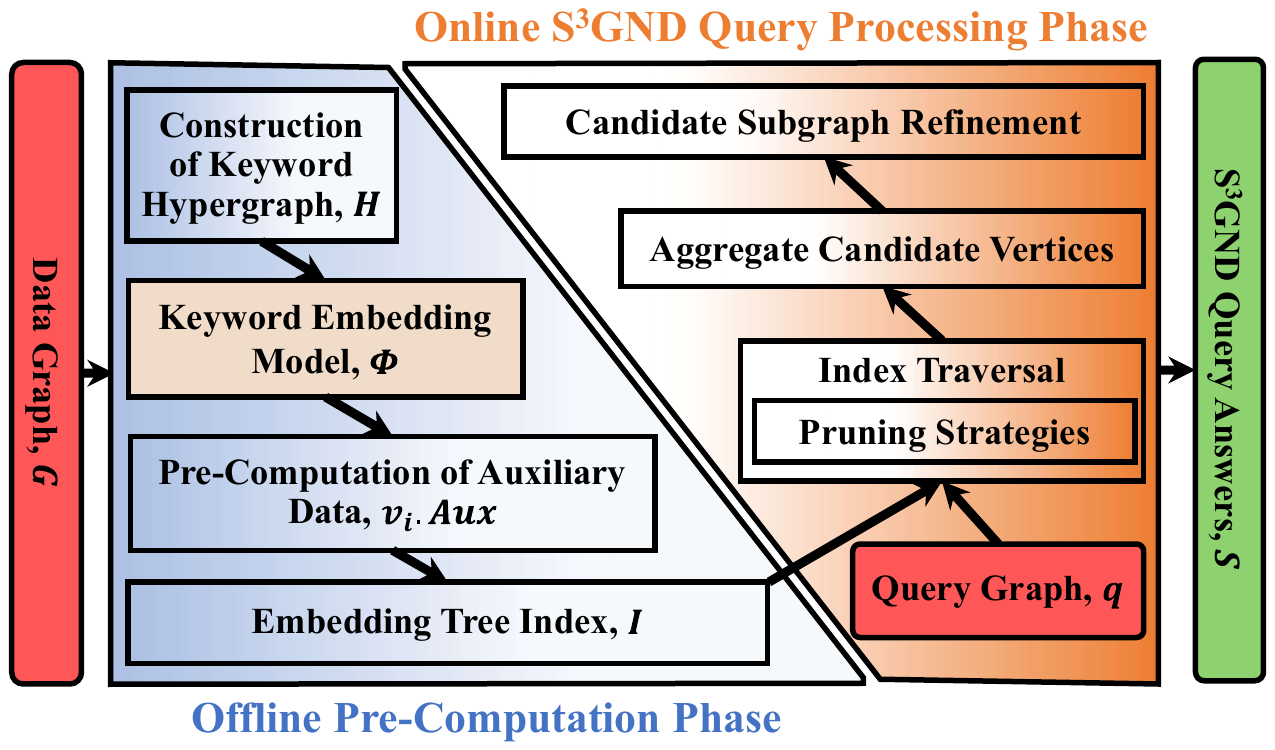}
     \vspace{-0.1in}
    \caption{Workflow for processing S$^3$GND queries.}
    \label{fig:framework}\vspace{2ex}
\end{figure}

\section{Pruning Strategies}
\label{sec:Pruning}
In this section, we present effective pruning strategies (i.e., keyword embedding MBR pruning, vertex-level ND lower bound pruning, and graph-level GND lower bound pruning) that reduce the search space of the S$^3$GND problem.

\subsection{Keyword Embedding MBR Pruning}

In Definition~\ref{def:S3GND}, the keyword set of each candidate vertex $v_i$ in candidate subgraph $g$ must contain all the query keywords in the matching query vertex $q_j$ in query graph $q$. Our goal is to design a learning model to obtain keyword embeddings, and use a \textit{minimum bounding rectangle} (MBR), $v_i.MBR$, to minimally bound keyword embeddings in each vertex $v_i$. Then, we can devise an effective \textit{keyword embedding MBR pruning} method to filter out candidate vertices that do not meet the keyword set containment relationship, as given below.

\begin{lemma}
    \label{lemma:keyword_pruning}
    \textbf{(Keyword Embedding MBR Pruning)} Given a candidate vertex $v_i$, its keyword embedding MBR $v_i.MBR$, and a query graph $q$, vertex $v_i$ can be safely pruned, if it holds that: $v_i.MBR \cap q_j.MBR \neq q_j.MBR$ (i.e., $q_j.MBR  \not\subseteq v_i.MBR$), for all query vertices $q_j \in V(q)$.
\end{lemma}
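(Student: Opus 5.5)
We argue by contraposition: if $v_i$ could appear in some S$^3$GND answer, then $q_j.MBR \subseteq v_i.MBR$ for at least one query vertex $q_j \in V(q)$. This contradicts the hypothesis of Lemma~\ref{lemma:keyword_pruning}, so $v_i$ can be discarded safely. The only link between the keyword containment predicate of Definition~\ref{def:S3GND} and the embedding MBRs is that every keyword is mapped to a fixed vector by the same trained model $\Phi$. The argument therefore rests only on the fact that the embedding $o(\cdot)$ is a deterministic function of the keyword. It uses no learned property of $\Phi$ beyond that.

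\textbf{Step 1 (candidate condition).} Suppose $v_i$ belongs to some answer subgraph $g$ under a mapping $M$. Since $M$ covers all of $V(g)$, there is a query vertex $q_j$ with $M(q_j)=v_i$. By the keyword set containment condition of Definition~\ref{def:S3GND}, we then have $q_j.K \subseteq v_i.K$.

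\textbf{Step 2 (embeddings inherit containment).} For every keyword $k \in q_j.K$ we also have $k \in v_i.K$. Its embedding $o(k)$ is computed by the same model $\Phi$ in both places, so it is one of the vectors used to build $v_i.MBR$. Hence the set of embedding points of $q_j$ is a subset of the set of embedding points of $v_i$.

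\textbf{Step 3 (MBR monotonicity).} An MBR is the coordinate-wise $[\min,\max]$ box of its point set. In each dimension $d$, the minimum over a subset is at least the minimum over the superset, and the maximum over a subset is at most the maximum over the superset. Therefore the box of $q_j$ is contained in the box of $v_i$ dimension by dimension, which gives $q_j.MBR \subseteq v_i.MBR$. Equivalently, $v_i.MBR \cap q_j.MBR = q_j.MBR$.

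\textbf{Step 4 (conclusion).} Step 3 produces a $q_j$ for which the pruning condition fails, contradicting the hypothesis that it holds for all $q_j \in V(q)$. So no answer subgraph can contain $v_i$, and pruning it loses no answers.

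\textbf{Main obstacle.} The only delicate point is Step 2. It requires that query keywords are embedded consistently with data keywords: the same $\Phi$ is used, and $q_j.MBR$ is built from $\{o(k) : k \in q_j.K\}$ in exactly the way $v_i.MBR$ is built. I would state this explicitly as the definition of $q_j.MBR$. A related subtlety is a query keyword that does not occur in $G$ at all. In that case no vertex can match $q_j$ anyway, so the lemma still holds trivially for that $q_j$.
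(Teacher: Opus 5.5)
Your proof is correct and follows essentially the same route as the paper. Keyword containment $q_j.K \subseteq v_i.K$ forces containment of the embedding point sets, hence $q_j.MBR \subseteq v_i.MBR$, and the contrapositive gives the pruning rule. Your coordinate-wise $[\min,\max]$ argument and the explicit requirement that query keywords be embedded by the same $\Phi$ simply spell out what the paper calls ``the MBR property''. One small note: the step ``$M$ covers all of $V(g)$'' tacitly uses injectivity of $M$ together with $|V(q)|=|V(g)|$, whereas the paper's reading of pruning (that $v_i$ can match no $q_j$) needs only your Steps 2--3.
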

\begin{proof} If two sets of keywords have a subset relationship (e.g., $q_j.K \subseteq v_i.K$), their corresponding sets of keyword embedding vectors also satisfy the containment relationship. Based on the MBR property, the keyword embedding MBR $q_j.MBR$ of $q_j.K$ must be inside the other one, $v_i.MBR$, of $v_i.K$ (i.e., $q_j.MBR  \subseteq v_i.MBR$).

Therefore,  if $v_i.MBR \cap q_j.MBR \neq q_j.MBR$ holds for a candidate vertex $v_i$ for all query vertices $q_j \in V(q)$, then query keyword sets $q_j.K$ cannot be subsets of $v_i.K$. According to the keyword set containment property in Definition~\ref{def:S3GND}, we can infer that vertex $v_i$ cannot match with any of the query vertices $q_j$ in the query graph $q$. Hence, we can safely prune the candidate vertex $v_i$, completing the proof. 
\end{proof}

From Lemma \ref{lemma:keyword_pruning}, we can safely prune those vertices $v_i$ whose keyword embedding MBRs $v_i.MBR$ does not fully contain that, $q_j.MBR$, of any query vertex $q_j$. In other words, these vertices $v_i$ cannot match with any of the query vertices in the query graph $q$ due to the keyword set constraint, and thus can be ruled out.


\subsection{Vertex-Level ND Lower Bound Pruning}

We propose a \textit{vertex-level ND lower bound pruning} strategy to filter out those vertices that have high neighbor differences and thus cannot match with query vertices in the query graph $q$.


\begin{lemma}
    \label{lemma:ND_lb_pruning}
    \textbf{(Vertex-Level ND Lower Bound Pruning)} Given a candidate vertex $v_i$, a query vertex $q_j$, and an generalized neighbor difference threshold $\delta$, candidate vertex $v_i$ can be safely pruned w.r.t. $q_j$, if it holds that $lb\_ND(q_j,v_i) > \delta$, where $lb\_ND(q_j,v_i)$ is a lower bound of $ND(q_j,v_i)$.
\end{lemma}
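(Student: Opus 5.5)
The plan is a short chain of inequalities, going from the vertex level to the graph level for the two aggregates the paper focuses on ($f=MAX$ and $f=SUM$).

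\textbf{Step 1 (vertex level).} Fix the query vertex $q_j$ and a candidate vertex $v_i$. Take any subgraph $g$ containing $v_i$ and any mapping $M$ with $M(q_j)=v_i$. By the definition of a lower bound, $ND(q_j,v_i) \ge lb\_ND(q_j,v_i)$. The lemma assumes $lb\_ND(q_j,v_i)>\delta$, so $ND(q_j,v_i)>\delta$.

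\textbf{Step 2 (graph level).} I lift this to $GND(q,g)$ through Eq.~(\ref{eq:GND}).
\begin{itemize}
\item For $f=MAX$: $GND(q,g)=\max_{q_l} ND(q_l,M(q_l)) \ge ND(q_j,v_i) > \delta$.
\item For $f=SUM$: every term of Eq.~(\ref{eq:ND}) is a $\max\{\cdot,0\}$, so every $ND$ value is nonnegative. The sum over all matching pairs is therefore at least its $(q_j,v_i)$ term, giving $GND(q,g)\ge ND(q_j,v_i)>\delta$.
\end{itemize}
Either way, $g$ violates the GND condition $GND(q,g)\le\delta$ of Definition~\ref{def:S3GND}.

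\textbf{Step 3 (conclusion).} No subgraph $g$ in which $v_i$ is matched to $q_j$ can be an S$^3$GND answer. Hence $v_i$ can be safely removed from $q_j$'s candidate set, $q_j.V_{cand}$.

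\textbf{Main obstacle: the AVG aggregate.} Here the implication from one vertex to the whole graph fails, because $GND = SUM/|V(q)|$ could be at most $\delta$ even when one $ND$ exceeds $\delta$. I would state explicitly that the argument covers MAX and SUM, which is consistent with the paper's stated focus on these two aggregates. An AVG query can still be handled by comparing $lb\_ND(q_j,v_i)$ against $|V(q)|\cdot\delta$ rather than $\delta$, using the same nonnegativity argument.

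The remaining subtlety is that $lb\_ND$ must be a valid lower bound for every admissible mapping $M$, not just for one particular $M$. I would make this explicit in the statement: the bound is taken over all possible neighbor matchings.
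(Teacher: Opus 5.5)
Your proposal is correct and follows the paper's own argument: $ND(q_j,v_i)\ge lb\_ND(q_j,v_i)>\delta$, then $GND(q,g)\ge ND(q_j,v_i)$ for $f=MAX$ or $SUM$, so no subgraph that maps $v_i$ to $q_j$ can satisfy $GND(q,g)\le\delta$. Your additional points are sound refinements of details the paper leaves implicit: the nonnegativity justification for SUM, the $|V(q)|\cdot\delta$ threshold for AVG, and the requirement that the bound hold for every mapping $M$.
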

\begin{proof}
Since $lb\_ND(q_j,v_i)$ is a lower bound of $ND(q_j,v_i)$, we have $ND(q_j,v_i) \geq lb\_ND(q_j,v_i)$. Due to the lemma assumption that $lb\_ND(q_j,v_i) > \delta$, via the inequality transition, it holds that $ND(q_j,v_i) \geq lb\_ND(q_j,v_i)> \delta$.

Based on Definition~\ref{def:GND}, for the aggregate function $f(\cdot)$ like MAX or SUM, we have $GND(q,g) \geq ND(q_j,v_i)$. By applying the inequality transition again, we obtain $GND(q,g) \geq ND(q_j,v_i) > \delta$. Thus, from Definition~\ref{def:S3GND}, we know that vertex $v_i$ in any candidate subgraph $g$ cannot match with query vertex $q_j$ in the query graph, due to a high GND score (i.e., violating the GND condition that $GND(q,g)\leq \delta$). Hence, $v_i$ can be safely pruned with respect to $q_j$.
\end{proof}

In Lemma \ref{lemma:ND_lb_pruning}, candidate vertex $v_i$ cannot match with query vertex $q_j$, if its ND lower bound $lb\_ND(q_j,v_i)$ is greater than threshold $\delta$. Intuitively, if the ND lower bound is greater than $\delta$,  their ND score (in turn GND score with MAX or SUM aggregate function) is definitely greater than $\delta$, which violates the GND condition in Definition \ref{def:S3GND}. Thus, we can use such a vertex-level ND lower bound to enable the pruning.


\subsection{Graph-Level GND Lower Bound Pruning}

In this subsection, we present a \textit{graph-level GND lower bound pruning} method to filter out candidate subgraphs with high GND scores (i.e., $GND(q, g) > \delta$). 


\begin{lemma}
    \label{lemma:GND_Lower_Bound_Pruning}
    \textbf{(Graph-Level GND Lower Bound Pruning)} Given a candidate subgraph $g$, a query graph $q$, and a generalized neighbor difference threshold $\delta$, subgraph $g$ can be safely pruned, if it holds that $lb\_GND(q,g) > \delta$, where $lb\_GND(q,g)$ is a lower bound of $GND(q,g)$.
\end{lemma}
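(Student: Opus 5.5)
The argument is a direct chain of inequalities, in the same style as the proof of Lemma~\ref{lemma:ND_lb_pruning} but one level up: the bound is applied to the aggregated score $GND(q,g)$ instead of a single $ND(q_j,v_i)$.

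First, since $lb\_GND(q,g)$ is, by hypothesis, a lower bound of $GND(q,g)$, we have $GND(q,g) \geq lb\_GND(q,g)$. Combining this with the lemma assumption $lb\_GND(q,g) > \delta$ by transitivity gives
\[
GND(q,g) \;\geq\; lb\_GND(q,g) \;>\; \delta .
\]

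Second, I invoke the third condition of Definition~\ref{def:S3GND}, which requires $GND(q,g) \leq \delta$ for every answer $g$. The inequality above violates it, so $g$ cannot be an S$^3$GND answer, whatever the outcome of the size and keyword-containment conditions. Discarding $g$ therefore never loses a true answer, which is what ``safely pruned'' means.

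The only point needing care is that $GND(q,g)$ depends on the mapping $M$. I will state that $lb\_GND(q,g)$ is understood as a lower bound with respect to the mapping $M$ fixed in Definition~\ref{def:S3GND}, or uniformly over all keyword-valid mappings if several are considered. With either reading the chain of inequalities holds for every relevant $M$, so no mapping can rescue $g$. Unlike Lemma~\ref{lemma:ND_lb_pruning}, no property of the aggregate $f$ (MAX or SUM) is needed, because the bound is on $GND$ itself. I do not expect any real obstacle; actually constructing $lb\_GND$ is deferred to later sections.
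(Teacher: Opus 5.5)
Your proposal is correct and follows the paper's own proof: it derives the chain $GND(q,g) \geq lb\_GND(q,g) > \delta$ and then appeals to the GND condition of Definition~\ref{def:S3GND}. Your remark about the dependence on the mapping $M$ is a harmless clarification that the paper leaves implicit; in Algorithm~\ref{alg:refine} the bound is applied to a fixed candidate mapping $curr\_map$.
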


\begin{proof}
    Since $lb\_GND(q,g)$ is a lower bound of the generalized neighbor difference $GND(q,g)$, we have $GND(q,g) \geq lb\_GND(q,g)$. From the lemma assumption that $lb\_GND(q,g) > \delta$ holds, by the inequality transition, it holds that $GND(q,g) \geq lb\_GND(q,g) > \delta$. Thus, from Definition~\ref{def:S3GND}, candidate subgraph $g$ cannot be the S$^3$GND answer and can be safely pruned, which completes the proof.
\end{proof}

Intuitively, the graph-level GND lower bound pruning (as given in Lemma \ref{lemma:GND_Lower_Bound_Pruning}) filters out those subgraphs $g$ with GND score lower bounds (in turn, GND scores) greater than threshold $\delta$, which introduces no false negatives.

\vspace{0.5ex}\noindent\textbf{Discussions on How to Compute Keyword Embedding MBRs and ND/GND Score Lower Bounds:} In order to enable our pruning strategies, we need to offline pre-calculate keyword embedding MBRs for keyword sets of vertices, and sorted lists of edge weights for ND/GND score lower bounds. In this paper, we propose an effective learning-based approach to compute high-quality keyword embeddings via a well-designed \textit{keyword embedding model} over a keyword hypergraph, which will be discussed later in Section \ref{subsec:keyword_embedding_MBR}. Moreover, we will also pre-compute a sorted list of edge weights for each vertex in graph $G$, which can be used for deriving lower bounds of ND/GND scores (as will be discussed later in Section \ref{subsec:ND_GND_LB}).




\nop{We propose a positive containment loss function $\mathcal{L}_{pos}$ to optimize the MBR embedding representation of positive containment pairs. Specifically, we use $\mathcal{L}_{pos}$ to ensure that the MBRs of two hyperedges with a containment relationship also satisfy the containment relationship, and $\mathcal{L}_{pos}$ is shown below:
\begin{equation}
    \mathcal{L}_{pos} = \frac{\lambda_{pos}}{|\mathcal{P}|} \sum_{(\mathcal{E}_A ,\mathcal{E}_B) \in \mathcal{P}}\min (\mathcal{L}_{cont}(\mathcal{E}_A ,\mathcal{E}_B), \mathcal{L}_{cont}(\mathcal{E}_B ,\mathcal{E}_A)),
\end{equation}
where $\lambda_{pos}$ is a hyperparameter that determines the weight of the positive containment loss, $\mathcal{P}$ is the dataset of the positive containment pairs. $\mathcal{L}_{cont}(\mathcal{E}_A ,\mathcal{E}_B)$ denotes the containment loss which measures the loss when $\mathcal{E}_A$ is contained with $\mathcal{E}_B$, and is expressed as:
\begin{equation}
    \begin{aligned}
    \mathcal{L}_{cont}(\mathcal{E}_A ,\mathcal{E}_B) = \sum_{d=1}^{C}[\max(0, MBR_{min}(\mathcal{E}_A)[d]-MBR_{min}(\mathcal{E}_B)[d])+ \\ \max(0, MBR_{max}(\mathcal{E}_A)[d]-MBR_{max}(\mathcal{E}_B)[d])].
\end{aligned}
\end{equation}

We propose a negative non-intersecting loss function, $\mathcal{L}_{neg}$, to optimize the embedding of negative non-intersecting pairs MBR. Specifically, we aim for the distance $dist(\mathcal{E}_A, \mathcal{E}_B)$ in embedding space between two non-intersecting hyperedges $\mathcal{E}_A$ and $\mathcal{E}_B$ to be greater than a predefined margin $\Delta$.
\begin{equation}
    \mathcal{L}_{neg} = \frac{\lambda_{neg}}{|\mathcal{N}|}\sum_{(\mathcal{E}_A, \mathcal{E}_B)\in \mathcal{N}}\max (0, \Delta-dist(\mathcal{E}_A, \mathcal{E}_B)),
\end{equation}
where $\lambda_{neg}$ denotes the weight hyperparameter, $dist(\mathcal{E}_A, \mathcal{E}_B)$ represents a geometric distance based on MBR, which is used to characterize the degree of separation between two hyperedges in the embedding space. The definition is presented as follows.
\begin{equation}
    \begin{aligned}
    dist(\mathcal{E}_A, \mathcal{E}_B) = || \max (0, MBR_{min}(\mathcal{E}_A)-MBR_{max}(\mathcal{E}_B)) \\ + \max (0, MBR_{min}(\mathcal{E}_B)-MBR_{max}(\mathcal{E}_A))||_2
    \end{aligned}
\end{equation}

Moreover, to prevent the collapse of the MBR representation of hyperedges, i.e., the unconstrained expansion of the MBR representation of hyperedges during training, we also introduce a perimeter-based regularization constraint. For a hyperedge, its perimeter is defined as:
\begin{equation}
    P(\mathcal{E}) = 2\sum_{d=1}^{C}\max(0, MBR_{max}(\mathcal{E})[d]-MBR_{min}(\mathcal{E})[d]).
\end{equation}

The perimeter regularization loss is:
\begin{equation}
    \mathcal{L}_{reg} = \frac{\lambda_{reg}}{|\mathcal{P}|+|\mathcal{N}|} \sum_{\forall (\mathcal{E}_A, \mathcal{E}_B)}(P(\mathcal{E}_A)+P(\mathcal{E}_B)),
\end{equation}
where $\lambda_{reg}$ is the regularization weight hyperparameter. Considering all the above losses, the final optimization objective, i.e., MBR-based Containment Contrastive Loss $\mathcal{L}_{CCL}$, of the keyword embedding model $\Phi$ is:
\begin{equation}
    \mathcal{L}_{CCL} = \mathcal{L}_{pos} + \mathcal{L}_{neg} + \mathcal{L}_{reg}.
\end{equation}
}

\nop{
\begin{lemma}
    {\bf (ND Lower Bound)} Given a candidate vertex $v_i$ and a query vertex $q_j$. We have the sorted set $SE_{v_i}$ and $SE_{q_j}$ for the neighbor edge weights of $v_i$ and $q_j$, respectively. The ND lower bound $lb\_ND(q_j, v_i)$ is given by $\sum_{n_j \in N(q_j),n_i \in N(v_i)} \max(e_{q_j,n_j}.w-e_{v_i,n_i}.w, 0)$.
\end{lemma}

\begin{proof}
    We assume that the set which meets the vertex-to-vertex matching is $SE_{v_i}^{M}$ and $SE_{q_j}^{M}$ for candidate vertex $v_i$ and query vertex $q_j$.
    Moreover, we have:
    \begin{equation}
        \begin{split}
            \sum_{n_j \in N(q_j),n_i \in N(v_i)} \max(e_{q_j,n_j}.w-e_{v_i,n_i}.w, 0)  \\= \sum_{n_j \in N(q_j)}e_{q_j,n_j}.w-\sum_{n_j \in N(q_j),n_i \in N(v_i)}\min(e_{q_j,n_j}.w,e_{v_i,n_i}.w).
        \end{split}
        \label{equ:lb}
    \end{equation}
    Here, $\sum_{n_j \in N(q_j)}e_{qj,nj}.w$ is a fixed value. 
    Then, we need proof the $\sum_{n_j \in N(q_j),n_i \in N(v_i)}\min(e_{q_j,n_j}.w,e_{v_i,n_i}.w)$ is a maximum value for any given matching.
    
    Since the order of matching is not important, we assume the set of neighbor edge weights of $q_j$ is sorted, i.e., $SE_{q_j}$.
    For an ordered pair $a=e_{q_j, n_j}.w$ and $b=e_{q_j, n_j}.w'$ ($a \geq b$), and a reverse pair $c=e_{v_i, n_i}.w$ and $d=e_{v_i, n_i}.w'$ ($c \leq d$).
    We consider the difference $\Delta = \min(a,d)+\min(b,c) - (\min(a,c)+\min(b,d))$.
    For the calculation of $\Delta$, we have five different scenarios to consider:
    \begin{enumerate}
        \item ($a \leq c$) $\Delta=a+b-(a+b)=0$,
        \item ($c\leq a \leq d$ and $b \leq c$) $\Delta=a+b-(c+b)=a-c\geq 0$,
        \item ($a \geq d$ and $c \geq b$) $\Delta=d+b-(c+b)=d-c \geq 0$,
        \item ($a \geq d \geq b$ and $b \geq c$) $\Delta=d+c-(c+b)=d-b \geq 0$, and
        \item ($b \geq d$) $\Delta=d+c-(c+d)=0$
    \end{enumerate}
    
    Therefore, $\Delta \geq 0$ holds, and this means that each time we select an exchange in the matching in the set of neighbor edge weights of $v_i$, the value of $\sum_{n_j \in N(q_j),n_i \in N(v_i)}\min(e_{q_j,n_j}.w,e_{v_i,n_i}.w)$ is unchanged.
    
    Furthermore, we perform the exchange to achieve the desired matching order (e.g., $SE_{v_i}^{M}$ and $SE_{q_j}^{M}$), Each time work the reverse matching in $SE_{v_i}$, the relationship between $a,b,c$ and $d$ satisfies $a\geq b$ and $c \leq d$. So that $\Delta \geq 0$ holds for any matching.
    
    In conclusion, the value of Eq.~\ref{equ:lb} reaches its minimum when the ordered matching between $SE_{v_i}$ and $SE_{q_j}$ condition is met.
    That means the ND lower bound $lb\_ND(q_j, v_i)=\sum_{n_j \in N(q_j),n_i \in N(v_i)} \max(e_{q_j,n_j}.w-e_{v_i,n_i}.w, 0)$ holds. \qquad $\blacksquare$\\

\end{proof}
}

\section{Offline Pre-Computations}
\label{sec:offline}
In this section, we discuss how to pre-process the data graph offline to facilitate efficient online S$^3$GND queries, by pre-computing auxiliary data (including keyword embedding MBRs and edge weight lists) and constructing a tree index, $\mathcal{I}$, over such an auxiliary data structure.


\setlength{\textfloatsep}{0pt}

\subsection{Computing Keyword Embedding MBRs}
\label{subsec:keyword_embedding_MBR}

In order to support the keyword embedding MBR pruning (as given in Lemma \ref{lemma:keyword_pruning}), we propose an effective learning-based approach to compute high-quality keyword embeddings. In particular, we first construct a keyword hypergraph $H$ from the original data graph $G$, then train a keyword embedding model $\Phi$, and finally use model $\Phi$ to compute keyword embedding vectors (in turn MBRs).


\noindent{\bf Keyword Hypergraph Construction:} We aim to learn good keyword embeddings that can achieve high pruning power of keyword embedding MBR pruning. Intuitively, we would like to obtain keyword embedding MBRs as small as possible. In order to compute good keyword embeddings, we will transform the original data graph $G$ into a \textit{keyword hypergraph}, $H$, on which a learning model $\Phi$ can be trained and high-quality keyword embeddings can be obtained. Note that the keyword hypergraph $H$ usually consumes much smaller space than the original data graph (as confirmed by our empirical study on the space cost analysis in Figure~\ref{fig:space}), and the model training can thus be much more efficient.


We first define a keyword hypergraph $H$ below.
\begin{definition}
    \label{def:hypergraph}
    \textbf{(Keyword Hypergraph, $H$)} A keyword hypergraph is represented by a triple ($V(H)$, $E(H)$, $\chi$), where $V(H)$ denotes a set of keyword vertices, $k_i$, $E(H)$ is a set of hyperedges $e_{k_i,...,k_j}$, each connecting two or more vertices $k_i$, ..., and $k_j$ and associated with a weight $w_{k_i,...,k_j}$, and $\chi$ is a $|V(H)| \times |E(H)|$ incidence matrix of hypergraph $H$.
\end{definition}

}

    In Definition~\ref{def:hypergraph}, each hyperedge $e_{k_i,...,k_j}$ is associated with a weight $w_{k_i, ..., k_j}$, which is defined as the frequency of the keyword set (or the keyword combination) $\{k_i, ..., k_j\}$ that appears in vertices of the original data graph $G$. Moreover, $\chi$ denotes the incidence matrix of a hypergraph $H$, each element, $\chi[k_i] [e_{k_j,...,k_z}]$, of which indicates the incidence relationship between the keyword vertex $k_i$ and hyperedge $e_{k_j,...,k_z}$ ($i\in \{j,...,z\}$), defined as:
    \begin{equation}
        \chi[k_i] [e_{k_j,...,k_z}] = 
        \begin{cases}
            1, & \text{if } i\in \{j,...,z\};\\
            0, & \text{otherwise}.
        \end{cases}
        \label{eq:chi}
    \end{equation}


\begin{figure}[!t]
    \centering
    \includegraphics[width=0.95\linewidth]{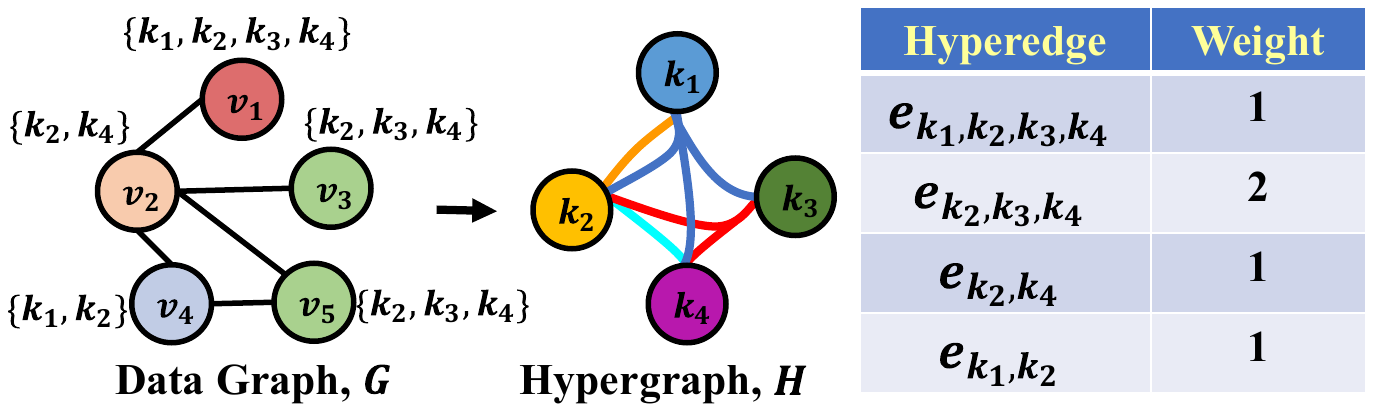}
    \vspace{-0.1in}
    \caption{An example of a keyword hypergraph construction.}
    \label{fig:hypergraph}\vspace{-2ex}
\end{figure}

\begin{example}
(\textbf{Keyword Hypergraph Construction})
Figure~\ref{fig:hypergraph} illustrates an example of the keyword hypergraph construction process. 
Specifically, each unique keyword in the data graph $G$ corresponds to a keyword vertex $k_i$ ($1\leq i\leq 4$) in the keyword hypergraph $H$, whereas the frequency of a keyword set (or a keyword combination) becomes the weight of the corresponding hyperedge.


As shown in Figure~\ref{fig:hypergraph}, data graph $G$ contains keywords $\{k_1,k_2,k_3,k_4\}$ (i.e., the keyword domain), which are transformed to four keyword vertices ``$k_1$'', ``$k_2$'', ``$k_3$'', and ``$k_4$'' in hypergraph $H$. The keyword set, $\{k_1,k_2,k_3,k_4\}$, in vertex $v_1$ of data graph $G$ is converted into a hyperedge $e_{k_1, k_2, k_3, k_4}$ connecting four keyword vertices, associated with weight (frequency) 1. Similarly, vertices $v_3$ and $v_5$ of graph $G$ have the same keyword set $\{k_2,k_3,k_4\}$. Thus, the transformed hyperedge $e_{k_2, k_3, k_4}$ in keyword hypergraph $H$ has the weight (frequency) 2. \qquad $\blacksquare$



\end{example}

 \begin{figure}[t]
    \centering
    \includegraphics[width=0.95\linewidth]{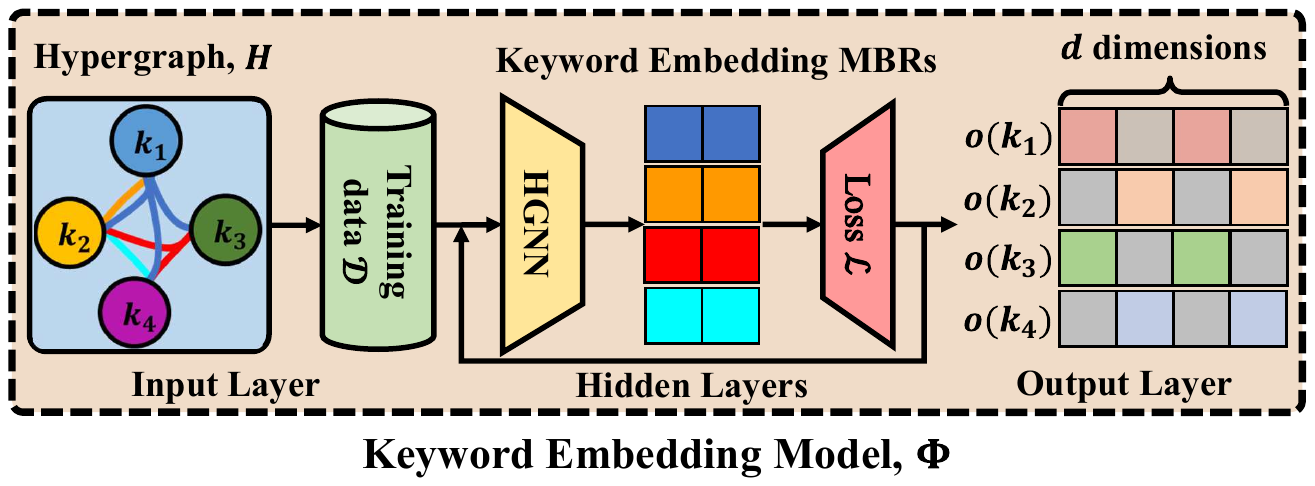}
     \vspace{-0.1in}
    \caption{A keyword embedding model, $\Phi$, for computing high-quality keyword embeddings.} 
    \label{fig:embedding}\vspace{-3ex}
\end{figure}

\noindent{\bf The Learning Framework for a Keyword Embedding Model:} Next, we illustrate how to train/learn a keyword embedding model, $\Phi$, over the keyword hypergraph $H$ in Figure~\ref{fig:embedding}, which can obtain keyword embeddings of high quality (or high pruning power) for our online S$^3$GND search.

Specifically, we train a \textit{hypergraph neural network} (HGNN)~\cite{feng2019hypergraph} over hypergraph $H$, via a carefully-designed contrastive loss w.r.t. keyword embedding MBRs, and a hyperedge sampling mechanism. 




\underline{\it The Structure of the Keyword Embedding Model:} The structure of HGNN over hypergraph $H$ consists of $L$ layers to learn a $d$-dimensional embedding vector, $o(k_i)$, for each keyword vertex $k_i$. In particular, for $0\leq l\leq L$, the $(l+1)$-th layer $O^{(l+1)}$  ($\in \mathbb{R}^{|V(H)| \times d}$) aggregates messages (embeddings) of node neighbors from the $l$-th layer $O^l$ ($\in \mathbb{R}^{|V(H)| \times d}$). That is, we have:
\begin{equation}
    O^{(l+1)} = \sigma(D^{-1/2}_v \chi \overline{W} D_e^{-1}\chi^T D_v^{-1/2} O^{(l)}\Theta^{(l)}),
    \label{eq:HGNN}
\end{equation}
where $O^{(0)} \in \mathbb{R}^{|V(H)| \times d_{in}}$ ($l=0$) is an initial feature matrix (e.g., keyword-based random initialization), for the input feature dimension $d_{in}$.

Note that, in Eq.~(\ref{eq:HGNN}), $\sigma(\cdot)$ is an activation function (e.g., $ReLU$), the incidence matrix $\chi$ is given by Eq.~(\ref{eq:chi}), $\overline{W}\in \mathbb{R}^{|E(H)| \times |E(H)|}$ is the normalized hyperedge weight matrix $W$, $\Theta^{(l)} \in \mathbb{R}^{d_{l} \times d_{l+1}}$ is a learnable parameter matrix (for input and output feature dimensions, $d_l$ and $d_{l+1}$, respectively, on the $l$-th layer), and $D_v \in \mathbb{R}^{|V(H)| \times |V(H)|}$ and $D_e \in \mathbb{R}^{|E(H)| \times |E(H)|}$ are diagonal matrices of vertex and hyperedge degrees, respectively. Here, each diagonal element in $D_v$ represents the sum of the weights of all hyperedges connected to keyword vertex $k_i$ (i.e., $D_v[k_i][k_i] = \sum_{e\in E(H)} \chi[k_i][e] \cdot \overline{W}[e][e]$); each diagonal element in $D_e$ is the number of keyword vertices contained in the hyperedge $e_{k_i, \cdots,k_j}$ (i.e., $D_e[e_{k_i, \cdots,k_j}][e_{k_i, \cdots,k_j}] = \sum_{z=1}^{|V(H)|}\chi[k_z][e_{k_i, \cdots,k_j}]$).








\underline{\it The Contrastive Loss Function, $\mathcal{L}$:} We carefully design a \textit{contrastive loss function} $\mathcal{L}$, with which we aim to minimize the number of false positives w.r.t. keyword set containment constraint in S$^3$GND predicates. In other words, if a query keyword set $q_j.K$ is not a subset of a vertex keyword set $v_i.K$, then we would like to minimize the chance that the query keyword embedding MBR, $q_j.MBR$, is inside the keyword embedding MBR, $v_i.MBR$.

When we train the learning model $\Phi$, we do not know the query keyword embedding MBR in advance. Thus, we will instead train the model $\Phi$ offline, using pairwise keyword embedding MBRs, $e_{k_i,\cdots,k_j}.MBR$ and $e_{k_z,\cdots,k_n}.MBR$, for hyperedges $e_{k_i,\cdots,k_j}$ and $e_{k_z,\cdots,k_n}$, respectively, in keyword hypergraph $H$.

The following are three cases for the relationship between two keyword sets $A=\{k_i,\cdots,k_j\}$ and $B=\{k_z,\cdots,k_n\}$, upon which we define the contrastive loss function $\mathcal{L}$.

 \begin{itemize}
     \item 
    {\bf (Case 1: Keyword Set Containment)} When the containment relationship $A \subseteq B$ holds, we want to minimize the intersection area of their MBRs, that is, $area(e_A.MBR \cap e_B.MBR) = area (e_A.MBR)$,  where $area(\cdot)$ is the area/volume of the MBR (implying the MBR size). We let the loss of this hyperedge pair $(e_A, e_B)$ be: $$\mathcal{L}_1 (e_A, e_B) = area(e_A.MBR).$$

     \item 
    {\bf (Case 2: Keyword Set Intersection)} When two keyword sets $A$ and $B$ have an intersection (but not containment) relationship, that is, $A \cap B \neq \emptyset$ (but $A \nsubseteq B$), we also want to minimize the intersection area of their MBRs. Thus, we set the loss to: $$\mathcal{L}_2 (e_A, e_B) = area(e_A.MBR \cap e_B.MBR).$$

     \item 
    {\bf (Case 3: Keyword Set Non-Intersection)} When two keyword sets $A$ and $B$ do not have any overlaps, that is, $A \cap B = \emptyset$, there are two possible sub-cases below, regarding the relationships of their keyword embedding MBRs.
    \begin{itemize}
        
        \item {\bf (Case 3.1: Intersecting MBRs)} If $e_A.MBR \cap e_B.MBR \neq \emptyset$ holds, we set the loss of hyperedge pair $(e_A, e_B)$ to: $$\qquad\qquad\mathcal{L}_{3.1} (e_A, e_B) = area(e_A.MBR \cap e_B.MBR).$$
        
        \item {\bf (Case 3.2: Non-Overlapping MBRs)} If it holds that $e_A.MBR \cap e_B.MBR = \emptyset$, we aim to minimize the areas of both MBRs to reduce the risk of false positives. Thus, we set the loss to: $$\qquad\qquad\mathcal{L}_{3.2} (e_A, e_B) = area(e_A.MBR) + area(e_B.MBR).$$
        
    \end{itemize}
    
        
        
        
 \end{itemize}

Any hyperedge pair $(e_A, e_B)$ falls into one of the cases (or sub-cases) above. We can therefore obtain our training data set, $\mathcal{D}$, containing hyperedge pairs $(e_A, e_B)$ in the corresponding categories $\mathcal{D}_1$, $\mathcal{D}_2$, $\mathcal{D}_{3.1}$, and $\mathcal{D}_{3.2}$, respectively. 

Now we are ready to define our contrastive loss function, $\mathcal{L}$, for our keyword embedding model $\Phi$ as follows.
\begin{equation}
    \label{eq:loss}
    \mathcal{L} = \sum_{\forall \mathcal{D}_x\in \{\mathcal{D}_1, \mathcal{D}_2, \mathcal{D}_{3.1}, \mathcal{D}_{3.2}\}}\mathbb{E}_{(e_A, e_B)\in \mathcal{D}_x} \mathcal{L}_x(e_A, e_B).
\end{equation}
where $\mathbb{E}(\cdot)$ denotes the expectation operator. 

We aim to train our keyword embedding model $\Phi$ over hypergraph $H$ that minimizes the loss function $\mathcal{L}$ (given in Eq.~(\ref{eq:loss})). Intuitively, the loss function $\mathcal{L}$ minimizes the size(s) of keyword embedding MBR(s) (or their intersections), so that the chance of false positives (i.e., $q_j.MBR$ is inside an MBR $e_B.MBR$, but $q_j.K \nsubseteq B$) is low.

\underline{\it Implementation:} Note that, in our implementation, in order to avoid the \textit{gradient vanishing} problem \cite{tan2019vanishing, hanin2018neural}, we replace MBR areas in the loss function $\mathcal{L}$ (as given in Eq.~(\ref{eq:loss})) with the logarithm of MBR areas. For example, for the area, $area(e_A.MBR)$, of hyperedge $e_A$'s keyword embedding MBR, $e_A.MBR$, we use:
\begin{eqnarray}
  area(e_A.MBR)&=&log\left(\prod_{i=1}^{d} \left(e_A.MBR[2\cdot i+1]-e_A.MBR[2\cdot i]\right)\right)\notag\\
            &=& \sum_{i=1}^{d}log\left(e_A.MBR[2\cdot i+1]-e_A.MBR[2\cdot i]\right).\notag
\end{eqnarray}

The logarithm, $log(\cdot)$, is taken as well in other MBR area calculations of our loss function $\mathcal{L}$.




\underline{\it The Hyperedge Sampling for the Model Training:} It is not efficient to train the keyword embedding model $\Phi$, by enumerating all pairs of hyperedges $(e_A, e_B)$ as the training data set $\mathcal{D}$ over a large-scale hypergraph $H$, which incurs quadratic cost $O(|E(H)|^2)$. Therefore, unlike traditional vertex sampling~\cite{jha2015path,bai2021efficient}, in this paper, we propose to use hyperedge sampling from hypergraph $H$ (rather than all hyperedges in $E(H)$), in order to enable efficient and scalable model training. Specifically, we construct the training set $\mathcal{D}$ by sampling hyperedge subsets $\mathcal{D}_1$, $\mathcal{D}_2$, $\mathcal{D}_{3.1}$, and $\mathcal{D}_{3.2}$ with an initial size ratio of $2:2:1:1$.




\nop{
{\color{orange} 

Unlike traditional vertex sampling~\cite{jha2015path,bai2021efficient}, we sample hyperedges from the hypergraph. 
Our goal is to achieve the following three conditions: a) all non-intersecting hyperedges remain non-intersecting in the Minimum Bounding Rectangle (MBR) space; b) the intersection of intersecting hyperedges in the MBR space is minimized; c) all hyperedges that exhibit containment relationships continue to maintain those relationships in the MBR space.
Therefore, we sample three types of training pairs as follows:
%
\begin{enumerate}
    \item \textbf{\textit{Non-Intersecting Pairs}}: two hyperedges $e_{k_i, \cdots,k_j}$ and $e_{k_z, \cdots, k_x}$ that do not intersect (i.e., $\{i,\cdots ,j\} \cap \{z,\cdots,x\} = \emptyset$), formatted as ``$<e_{k_i, \cdots,k_j}, e_{k_z, \cdots, k_x}, 0>$''.
    \item \textbf{\textit{Containment Pairs}}: two hyperedges $e_{k_i, \cdots,k_j}$ and $e_{k_z, \cdots, k_x}$ have a containment relationship (i.e., $\{i,\cdots ,j\} \cap \{z,\cdots,x\} = \{i,\cdots ,j\}$), formatted as ``$<e_{k_i, \cdots,k_j}, e_{k_z, \cdots, k_x}, 1>$''.
    \item \textbf{\textit{Intersecting Pairs}}: two hyperedges $e_{k_i, \cdots,k_j}$ and $e_{k_z, \cdots, k_x}$ have a intersection relationship (i.e., $\{i,\cdots ,j\} \cap \{z,\cdots,x\} \neq \emptyset$), formatted as ``$<e_{k_i, \cdots,k_j}, e_{k_z, \cdots, k_x}, 2>$''.
\end{enumerate}

In particular, we randomly sample to obtain unique non-intersecting, containment, and intersecting pairs, thereby maximizing the number of samples. To maintain data balance, we use the count of containment pairs as a baseline. This is important because, in real-world scenarios, containment pairs are quite rare due to the diversity of keywords. Consequently, we control the upper limit of the sample size for non-intersecting and intersecting pairs based on the number of containment pairs. {\color{red} \bf [how to control/set the sample size?]}

}
}

\nop{

\noindent\underline{\textbf{MBR-based Containment Contrastive Loss, $\mathcal{L}_{CCL}$}}: 
We use a Minimum Bounding Rectangle (MBR) to define the boundary of the embedding space for a hyperedge. Our goal is to ensure that hyperedges with no intersection do not exhibit an MBR intersection relationship in the embedding space. 

For a hyperedge $\mathcal{E}$, the MBR-based embedding is defined as:
\begin{equation}
    MBR(\mathcal{E}) = [lb_{\mathcal{E}}^{(1)}, ub_{\mathcal{E}}^{(1)}; lb_{\mathcal{E}}^{(2)}, ub_{\mathcal{E}}^{(2)};\cdots;lb_{\mathcal{E}}^{(d)}, ub_{\mathcal{E}}^{(d)}],
\end{equation}
where $lb_{\mathcal{E}}^{(i)}$ and $ub_{\mathcal{E}}^{(i)}$ denote the lower bound and upper bound of the $i$-th (i.e., $1 \leq i \leq d$) dimension of MBR embedding, respectively.
Specifically, the $lb_{\mathcal{E}}^{(i)}$ and $ub_{\mathcal{E}}^{(i)}$ are defined as follow:

\begin{equation}
    lb_{\mathcal{E}}^{(i)} = \min_{i \in \mathcal{E}} O_{i}^{(i)},     ub_{\mathcal{E}}^{(i)} = \max_{i \in \mathcal{E}} O_{i}^{(i)}.
\end{equation}

We propose the MBR-based containment contrastive loss, $\mathcal{L}_{CCL}$, comprising three loss functions that leverage the relationship between hyperedges and MBRs. We aim to minimize the probability that a query MBR, when contained within one data MBR, is also completely contained within another data MBR.

In the first case, two hyperedges, $\mathcal{E}_A$ and $\mathcal{E}_B$ in data sets $\mathcal{D}_1$, have a containment relationship, $\mathcal{E}_A \subseteq \mathcal{E}_B$, then their MBRs must overlap. We minimize the area of intersection of $\mathcal{E}_A$ and $\mathcal{E}_B$'s MBR embedding using the proposed $\mathcal{L}_1$ loss to promote our objective:
\begin{equation}
    \mathcal{L}_{1} = \mathbb{E}_{(\mathcal{E}_A, \mathcal{E}_B)\in \mathcal{D}_1} \prod_{i=1}^{d} |\min(ub_{A}^{(i)}, ub_{B}^{(i)})-\max(lb_{A}^{(i)}, lb_{B}^{(i)})|,
\end{equation}
where $\mathbb{E}$ denotes the expectation operator.

In the second case, where two hyperedges, $\mathcal{E}_A$ and $\mathcal{E}_B$ in data set $\mathcal{D}_2$, have an intersection relationship, $\mathcal{E}_A \cap \mathcal{E}_B \neq \emptyset$, their MBRs must also intersect. It's worth noting that the first case is a special case of the second. Then, we minimize the intersection of $\mathcal{E}_A$ and $\mathcal{E}_B$'s MBR embedding using the proposed $\mathcal{L}_2$ loss:
\begin{equation}
    \mathcal{L}_{2} = \mathbb{E}_{(\mathcal{E}_A, \mathcal{E}_B)\in \mathcal{D}_2} \prod_{i=1}^{d} |\min(ub_{A}^{(i)}, ub_{B}^{(i)})-\max(lb_{A}^{(i)}, lb_{B}^{(i)})|.
\end{equation}

In the third case, where two hyperedges, $\mathcal{E}_A$ and $\mathcal{E}_B$ in data set $\mathcal{D}_3$, have no intersection relationship, $\mathcal{E}_A \cap \mathcal{E}_B = \emptyset$, there are three possible relationships between their MBRs: containment ($\mathcal{D}_3^1$), intersection ($\mathcal{D}_3^2$), or non-intersection ($\mathcal{D}_3^3$). For containment and intersection relationships, we minimize the intersection area through the designed loss function $\mathcal{L}_3^{1,2}$: 
\begin{equation}
    \mathcal{L}_{3}^{1,2} = \mathbb{E}_{(\mathcal{E}_A, \mathcal{E}_B)\in \mathcal{D}_3^{1,2}} \prod_{i=1}^{d} |\min(ub_{A}^{(i)}, ub_{B}^{(i)})-\max(lb_{A}^{(i)}, lb_{B}^{(i)})|.
\end{equation}

For the non-intersection relationship, we design the loss function $\mathcal{L}_3^3$ by minimizing the area of the two MBRs:
\begin{equation}
    \mathcal{L}_{3}^3 = \mathbb{E}_{(\mathcal{E}_A, \mathcal{E}_B)\in \mathcal{D}_3^3} (\prod_{i=1}^d |ub_A^{(i)}-lb_A^{(i)}|+\prod_{i=1}^d |ub_B^{(i)}-lb_B^{(i)}|).
\end{equation}

Considering all the above losses, the final optimization objective, i.e., MBR-Based Containment Contrastive Loss $\mathcal{L}_{CCL}$, of the keyword embedding model $\Phi$ is:
\begin{equation}
    \label{eq:loss}
    \mathcal{L}_{CCL} = \mathcal{L}_{1} + \mathcal{L}_{2} + \mathcal{L}_{3}^{1,2}+\mathcal{L}_{3}^{3}.
\end{equation}

}

\begin{algorithm}[!t]
    \caption{\bf The Training Process of the Keyword Embedding Model}
    \label{alg:train}
      \small
    \KwIn{
        \romannumeral1) a data graph, $G$, and
        \romannumeral2) \# of training iterations, $iters$
    }
    \KwOut{a trained keyword embedding model, $\Phi$}

    \tcp{training data preparation phase}
    
    transform graph $G$ to keyword hypergraph $H$

    sample hyperedge pairs in hypergraph $H$ and obtain training data set $\mathcal{D}=\mathcal{D}_1 \cup \mathcal{D}_2 \cup \mathcal{D}_{3.1} \cup \mathcal{D}_{3.2}$

    initialize an HGNN model $\Phi$ for hypergraph $H$

    \tcp{hypergraph training phase}
    
    \For{epoch = 1 to $iters$}{
        
        $O \leftarrow \Phi(H)$

        compute $e_{k_i, ..., k_j}.MBR$ of each hyperedge $e_{k_i, ..., k_j}$ in $\mathcal{D}$
        
        obtain the contrastive loss function $\mathcal{L}$ via Eq.~(\ref{eq:loss})

        $\Phi \leftarrow optimize(\Phi, \mathcal{L})$ \qquad\textit{// backpropagation}
        
    }
    \Return $\Phi$

\end{algorithm}
\setlength{\textfloatsep}{0pt}


\underline{\it The Training Details:} The detailed training process is outlined in Algorithm~\ref{alg:train}. Specifically, we first prepare a training data set $\mathcal{D}$, by converting the data graph $G$ into a keyword hypergraph $H$, and sampling hyperedge pairs in $H$ as training data (lines 1-2). Next, we initialize an HGNN model $\Phi$ over hypergraph $H$ for the training (line 3). During the hypergraph training phase, for each iteration, we first obtain the current keyword vertex embeddings in $O$ from model $\Phi$, then compute keyword embedding MBR $e_{k_i, ..., k_j}.MBR$ of each hyperedge $e_{k_i, ..., k_j}$ in the training data set, and finally obtain the contrastive loss function $\mathcal{L}$ (given by Eq.~(\ref{eq:loss})) (lines 4-7). With the loss $\mathcal{L}$, we perform the backpropagation and update the parameters of the model $\Phi$ (line 8). Once the training is complete, we return the trained keyword embedding model $\Phi$ (line 9).




\subsection{Computing ND and GND Lower Bounds}
\label{subsec:ND_GND_LB}

In order to enable the pruning strategies in Lemmas \ref{lemma:ND_lb_pruning} and \ref{lemma:GND_Lower_Bound_Pruning}, we propose efficient approaches to compute an ND lower bound $lb\_ND(q_j, v_i)$ and a GND lower bound $lb\_GND(q,g)$, respectively, based on the summed difference between sorted lists of neighboring edge weights.

\noindent{\bf The Computation of ND Lower Bound, $lb\_ND(q_j, v_i)$:} Given a query vertex $q_j$ and a candidate vertex $v_i$, we compute a lower bound, $lb\_ND(q_j, v_i)$, of the ND score $ND(q_j, v_i)$ as follows.

First, we obtain two sorted lists, $q_j.list_w$ and $v_i.list_w$, of 1-hop neighbor edge weights for $q_j$ and $v_i$, respectively, in non-ascending order. That is, for $q_z \in N(q_j)$ and $v_z \in N(v_i)$, we have: $$q_j.list_w [z]= e_{q_j, q_z}.w,$$ and $$v_i.list_w [z] = e_{v_i, v_z}.w,$$ where $q_j.list_w [z]\geq q_j.list_w [z+1]$ and $v_i.list_w [z]\geq v_i.list_w [z+1]$. Note that, when edge $e_{q_j, q_z}$ (or $e_{v_i, v_z}$) does not exist, we have edge weight $e_{q_j, q_z}.w = 0$ (or $e_{v_i, v_z}.w =0$).

Next, we have the following lemma to compute an ND lower bound, $lb\_ND(q_j, v_i)$.

\begin{lemma} \textbf{(ND Lower Bound Computation)}
Given a query vertex $q_j$ and a candidate vertex $v_i$, the lower bound, $lb\_ND(q_j, v_i)$, of the ND score $ND(q_j, v_i)$ is given by:
\begin{equation}
    \label{eq:lb_ND}
    lb\_ND(q_j, v_i) = \sum_{z=1}^{|N(q_j)|} \max\{q_j.list_w[z]-v_i.list_w[z], 0\}.
\end{equation}
\label{lemma:lb_ND}
\end{lemma}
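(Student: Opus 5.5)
Under any mapping $M$, the true score $ND(q_j,v_i)$ pairs each query edge weight $a_z = e_{q_j,q_z}.w$ with some data edge weight $b_{\pi(z)} = e_{v_i,M(q_z)}.w$. Here $\pi$ is injective, because $M$ is one-to-one, and a weight is taken as $0$ when the corresponding edge is absent. So
$$ND(q_j,v_i)=\sum_{z}\max\{a_z-b_{\pi(z)},0\}.$$
The plan is to show that, among all injective assignments of data weights (padded with zeros) to query weights, this sum is smallest when both lists are sorted in non-ascending order and paired index by index. That minimum is exactly Eq.~(\ref{eq:lb_ND}), so it is at most $ND(q_j,v_i)$ for every $M$, which is what a lower bound requires.

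\emph{Reduction.} I first pad $v_i.list_w$ with zeros to length at least $|N(q_j)|$. If $v_i$ has more neighbors than $q_j$, the unused data weights do not appear in the sum, so only the $|N(q_j)|$ selected data weights matter. I then rewrite each term as
$$\max\{a-b,0\}=a-\min\{a,b\}.$$
Because $\sum_z a_z$ is fixed, minimizing the ND sum is the same as maximizing $\sum_z \min\{a_z,b_{\pi(z)}\}$.

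\emph{Selection.} Replacing any selected data weight by a larger unselected one never decreases $\min\{a,b\}$. Hence some optimal assignment uses the $|N(q_j)|$ largest data weights, namely $v_i.list_w[1..|N(q_j)|]$.

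\emph{Ordering by exchange.} For the pairing, I use an exchange argument. Take $a\ge b$ on the query side, and suppose they are matched in reverse order with data weights $c\le d$, that is, $a$ with $c$ and $b$ with $d$. I need to show
$$\min\{a,d\}+\min\{b,c\}\;\ge\;\min\{a,c\}+\min\{b,d\}.$$
I would check this by splitting into cases according to where $c$ and $d$ fall relative to $a$ and $b$. Alternatively, it follows because $\min$ is supermodular on the totally ordered reals: $\min\{x,y\}$ has increasing differences. So swapping an inverted pair never lowers the objective. Repeatedly fixing inversions, for instance by bubble-sort steps, turns any assignment into the sorted-to-sorted one without decreasing $\sum\min$. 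The sorted pairing therefore attains the maximum of $\sum\min$, and hence the minimum of the ND sum.

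\emph{Main obstacle.} The step I expect to need the most care is the exchange inequality, together with the bookkeeping that makes the swaps legitimate. There are two points to handle. First, the zero-padding must be treated as real (weight-$0$) options, so that a query neighbor mapped to a non-adjacent data vertex is covered; by Definition~\ref{def:ND} such a pair contributes the full $a_z$, which matches pairing with $0$. Second, the selection step and the reordering step must combine correctly into a single comparison against an arbitrary $M$. With these in place, chaining them gives $lb\_ND(q_j,v_i)\le ND(q_j,v_i)$, which is the claimed bound.
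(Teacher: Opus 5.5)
Your proposal is correct and is essentially the paper's argument: the paper proves the same exchange inequality in the equivalent form $P(r_1,t_1)+P(r_2,t_2)\le P(r_1,t_2)+P(r_2,t_1)$ for $P(r,t)=\max\{r-t,0\}$ with $r_1\ge r_2$, $t_1\ge t_2$, and then bubble-sorts the matched data weights. The only difference there is that the paper proves the inequality by an interval-covering (integral) representation of $P$, whereas you rewrite $\max\{a-b,0\}=a-\min\{a,b\}$ and use supermodularity of $\min$. Your explicit selection step covers a point the paper's proof skips: it simply identifies the sorted matched weights with $v_i.list_w$ even when $v_i$ has more neighbors than $q_j$. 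One small fix: append $|N(q_j)|$ zeros unconditionally rather than padding only up to length $|N(q_j)|$, so that query neighbors mapped to non-adjacent vertices are still represented when $|N(v_i)|\ge|N(q_j)|$.
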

\begin{proof}

    Let $R=q_j.list_w$ and $T=v_i.list_w$ be two sorted lists of length $n=|N(q_j)|$.
    We first consider a function $P(r, t)=\max\{r-t,0\}$. 
    Then, the ND score $ND(q_j, v_i)$  between $q_j$ and $v_i$ (as given in Eq.~(\ref{eq:ND})) can be rewritten as: 
    \begin{eqnarray}
      ND(q_j, v_i)&=&\sum_{\forall n_j \in N(q_j), n_i \in N(v_i), M(n_j) = n_i} \hspace{-3ex} \max\big\{e_{q_j, n_j}.w 
         -e_{v_i,n_i}.w, 0\big\}\notag\\
         &=&    \sum_{\forall R[z]= n_j \in N(q_j), n_i \in N(v_i), M(n_j) = n_i} P(R[z], e_{v_i,n_i}.w).  \notag
    \end{eqnarray}
    \noindent where $e_{v_i, n_i}.w \in T$.

    \underline{\it Property of $P(r, t)$:} We first give the property of $P(r, t)$. Specifically, for any $r_1 \geq r_2$ in $R$ and $t_1 \geq  t_2$ in $T$, function $P(r,t)$ has the property that: $$P(r_1,t_1)+P(r_2,t_2)\leq P(r_1,t_2)+P(r_2,t_1).$$
    
    Note that, $P(r, t)$ can be viewed as the length from $t$ to $r$ on the number line. Specifically, if $r>t$, the length is $r-t$, otherwise (i.e., $r\leq t$), the length is 0.
    Therefore, we can express $P(r, t)$ in integral form, i.e., $$P(r, t)=\int_{0}^{\infty} 1_{\{t < x \leq r\}} \, dx.$$
    
    Then, the property inequality is equivalent to $$\int_{0}^{\infty}(1_{x\in(t_1,r_1]}+1_{x\in(t_2,r_2]})dx \leq \int_{0}^{\infty}(1_{x\in(t_2,r_1]}+1_{x\in(t_1,r_2]})dx.$$ 
    Since $r_1\geq r_2$ and $t_1\geq t_2$, we observe that the interval $(t_2,r_1]$ encompasses both $(t_1,r_1]$ and $(t_2,r_2]$. This implies that the combination of intervals on the right side of the inequality covers the combination of intervals on the left side in terms of the set. Therefore, the inequality $P(r_1,t_1)+P(r_2,t_2)\leq P(r_1,t_2)+P(r_2,t_1)$ holds when the conditions $r_1\geq r_2$ and $t_1\geq t_2$ are satisfied.

    
    

    


    \underline{\it Derivation of ND Lower Bound:} Since the list $R = q_j.list_w$ is sorted in non-ascending order, we always have $R[z]\geq R[z+1]$. If $R[z]$ and $R[z+1]$ have the corresponding weights $t_1$ and $t_2$ in $T$ (via vertex mapping $M(\cdot)$), respectively, and $t_1 \leq t_2$ holds, then we can always swap $t_1$ with $t_2$ to obtain a lower bound of ND score $ND(q_j, v_i)$. Overall, there exist a finite number of such swaps (e.g., bubble sort with $O(n\cdot log (n))$ cost) to sort weights of edges $e_{v_i,n_i}$ in subgraph $g$, such that the resulting ND score formula always becomes lower (or remains the same) for each swapping. After edge weights w.r.t. vertex $v_i$ have been sorted in non-ascending order, resulting in the sorted list $T = v_i.list_w$, we can now obtain a lower bound, $lb\_ND(q_j, v_i)$, of ND score $ND(q_j, v_i)$ below:
    \begin{eqnarray}
        lb\_ND(q_j, v_i) &=& \sum_{z=1}^{|N(q_j)|} P(R[z], T[z]) \notag\\
        &=&\sum_{z=1}^{|N(q_j)|}  \max\{q_j.list_w[z]-v_i.list_w[z], 0\},\notag
    \end{eqnarray}
    which completes the proof.
    
\end{proof}




\noindent{\bf The Computation of GND Lower Bound, $lb\_GND(q,g)$:} Similarly, we calculate the lower bound, $lb\_GND(q,g)$, of the GND score $GND(q,g)$ on the graph level below.



\begin{lemma}  \textbf{(GND Lower Bound Computation)}
Given a query graph $q$, a candidate subgraph $g$, and a vertex-to-vertex mapping $M$ from $V(q)$ to $V(g)$, the lower bound, $lb\_GND(q, g)$, of the GND score $GND(q, g)$ is given by:
\begin{equation}
    \label{eq:lb_GND}
    lb\_GND(q, g) = f(lb\_ND(q_j, v_i) | \forall (q_j, v_i), s.t. M(q_j) = v_i),
\end{equation}
where $f(\cdot)$ is a $MAX$ or $SUM$ aggregation function.
\label{lemma:lb_GND}
\end{lemma}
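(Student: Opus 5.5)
The plan is to reduce the statement to Lemma~\ref{lemma:lb_ND} and then use the monotonicity of the aggregate $f(\cdot)$.

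\textbf{Vertex-wise bound.} Fix the mapping $M$ and any matched pair $(q_j, v_i)$ with $M(q_j)=v_i$. Lemma~\ref{lemma:lb_ND} already gives $lb\_ND(q_j, v_i) \le ND(q_j, v_i)$. Its proof swaps the data-side weights assigned under $M$ into sorted order, and that argument holds for whatever assignment of neighbor weights the mapping induces. So the inequality holds for each matched pair individually. One point needs care: the neighbors of $v_i$ in $g$ are a subset of its neighbors in $G$. I will note that $v_i.list_w$ is taken over the neighbor weights relevant to the mapping, with zeros padded for missing edges. Alternatively, a sorted list over a superset gives entries at least as large position-wise, which only lowers each term $\max\{R[z]-T[z],0\}$. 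Either way the vertex-level inequality survives.

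\textbf{Monotonicity of the aggregate.} Both MAX and SUM are monotone non-decreasing in each argument. That is, if $a_j \le b_j$ for every index $j$, then $\max_j a_j \le \max_j b_j$ and $\sum_j a_j \le \sum_j b_j$. For SUM this follows by adding the inequalities termwise. For MAX, let $j^*$ attain $\max_j a_j$; then $a_{j^*} \le b_{j^*} \le \max_j b_j$.

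\textbf{Conclusion.} Apply monotonicity to the two families $\{lb\_ND(q_j,v_i)\}$ and $\{ND(q_j,v_i)\}$, indexed over the same set of pairs with $M(q_j)=v_i$. Then by Eq.~(\ref{eq:GND}),
\[
lb\_GND(q,g) = f(\{lb\_ND(q_j,v_i)\}) \le f(\{ND(q_j,v_i)\}) = GND(q,g),
\]
so $lb\_GND(q,g)$ is a valid lower bound.

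I expect no real obstacle. The only subtle step is confirming that Lemma~\ref{lemma:lb_ND} applies under the specific mapping $M$ used for $g$, which means handling the list-length and padding issue described above. I would also remark that AVG follows directly from SUM, since it is SUM divided by the constant $|V(q)|$.
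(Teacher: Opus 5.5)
Your proposal is correct and matches the paper's proof: apply Lemma~\ref{lemma:lb_ND} to each matched pair $(q_j, v_i)$, then use that MAX and SUM are monotone non-decreasing in every argument. Your extra remark is a valid refinement the paper leaves implicit. The precomputed $v_i.list_w$ ranges over all neighbors of $v_i$ in $G$, and by injectivity of $M$ the mapped weights form a sub-multiset of it. Its sorted list is therefore position-wise at least as large, which only lowers each term.
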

\begin{proof}
    According to Definition~\ref{def:GND}, the GND score is calculated by the aggregation function $f(\cdot)$ from the ND scores of each vertex pair $(q_j, v_i)$. From Lemma~\ref{lemma:lb_ND}, for each pair of matching vertices $(q_j, v_i)$, the ND score is always greater than or equal to its theoretical lower bound (i.e., $ND(q_j, v_i) \geq lb\_ND(q_j, v_i)$).
    Since the aggregation function $f(\cdot)$ (i.e., $MAX$ or $SUM$) is monotonically non-decreasing in all dimensions, we have: $f(ND(q_j, v_i) | \forall (q_j, v_i), s.t. M(q_j) = v_i)\geq f(lb\_ND(q_j, v_i) | \forall (q_j, v_i), s.t. M(q_j) = v_i)$. Substituting into Eq.~(\ref{eq:GND}), we have: 
    \begin{eqnarray}
        GND(q,g)&\geq& f(lb\_ND(q_j, v_i) | \forall (q_j, v_i), s.t. M(q_j) = v_i) \notag\\
        &=& lb\_GND(q, g).\notag
    \end{eqnarray}

    Hence, Eq.~(\ref{eq:lb_GND}) provides a lower bound of the GND score $GND(q, g)$, which completes the proof.
\end{proof}


\nop{
{\color{orange}
For a query graph $q$ and a candidate subgraph $g$, we have their descending edge weight lists $q.list_w = \{\bigvee_{e_{q_l,q_j}\in E(q)}e_{q_l,q_j}.w\}$ and $g.list_w= \{\bigvee_{e_{v_l,v_i}\in E(g)}e_{v_l,v_i}.w\}$, respectively. We compute the GND lower bound  below:
\begin{equation}
    lb\_GND(q,g) = \sum_{k=1}^{|q.list_w|} \max (q.list_w[k]-g.list_w[k],0).
\end{equation}
When $k>|g.list_w|$, we have $g.list_w[k]=0$.
We calculate only the portion of the query graph $q$ whose weight exceeds that of the subgraph $g$, which effectively reflects the weight cost that the subgraph $g$ needs to fill in to match $q$.

{\color{red} \bf [use notations like $v_i.list_w$ below in the lemma; the lemma is not readable. please either remove it, or clarify it in the proofs of the two lemmas above]}


Since the score calculations for ND and GND in Definitions~\ref{def:ND} and ~\ref{def:GND} are based on set operations under a given sequence (i.e., the keyword set matching sequence), we find that the calculation under a sorted sequence (such as descending-order edge weight sorting) is a lower bound for the scores of ND and GND. We first give the \textit{special monotonic swap} lemma to show that sequential sequence operations are a lower bound for target sequence operations.

\begin{lemma}
    \label{lemma:SMS}
    \textbf{(Special Monotonic Swap)} Given a descending sequence $\mathcal{S}^q = \{\mathcal{S}^q_1,\cdots, \mathcal{S}^q_n\}$ (where $\mathcal{S}^q_1 \geq \mathcal{S}^q_2 \geq \cdots \geq \mathcal{S}^q_n$), and a value set $\mathcal{S}^g = \{\mathcal{S}^g_1,\cdots, \mathcal{S}^g_n\}$. Let $\gamma$ be a permutation of $\{1,\cdots,n\}$, and define the objective function as follow:
    \begin{equation}
        F(\gamma)=\sum_{i=1}^n \min(\mathcal{S}^q_i,\mathcal{S}^g_{\gamma(i)}).
    \end{equation}
    
    If there exits $i < j$ such that $\mathcal{S}^g_{\gamma(i)} \leq \mathcal{S}^g_{\gamma(j)}$ (called a reverse pair), then swapping $\gamma(i)$ and $\gamma(j)$, we have a new resulting permutation $\gamma'$. And $F(\gamma') \geq F(\gamma)$ holds if and only if the equality holds when $\min(\mathcal{S}^q_i,\mathcal{S}^g_{\gamma(j)})+\min(\mathcal{S}^q_j,\mathcal{S}^g_{\gamma(i)})=\min(\mathcal{S}^q_i,\mathcal{S}^g_{\gamma(i)})+\min(\mathcal{S}^q_j,\mathcal{S}^g_{\gamma(j)})$.

    Furthermore, starting from any permutation $\gamma$, a finite number of these specific monotonic swaps can transform it into a sequential permutation $\gamma''$ (i.e., $\mathcal{S}^g_{\gamma''(1)} \geq \mathcal{S}^g_{\gamma''(2)} \geq \cdots \geq \mathcal{S}^g_{\gamma''(n)}$), such that $F(\gamma'')\geq F(\gamma)$.
\end{lemma}

\begin{figure}[h]
    \centering
    \includegraphics[width=0.75\linewidth]{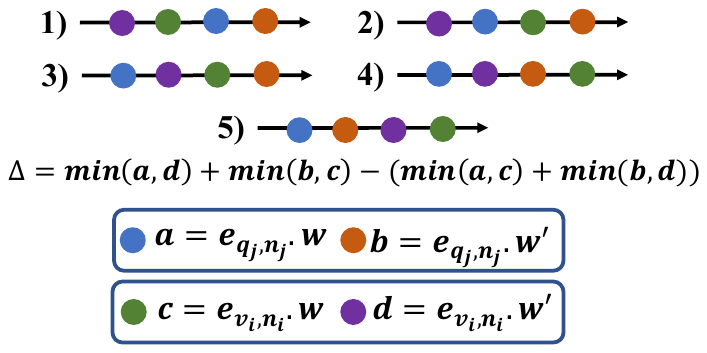}
    \vspace{-0.1in}
    \caption{A visual example of exchange-matching scenarios.}
    \label{fig:example}
\end{figure}

\begin{proof}
    We complete the proof in three parts (i.e., single swap non-decreasing, finite step convergence, and ordered optimality).

    {\bf (Single Swap Non-Decreasing)} Let $a = \mathcal{S}^q_a$, $b =S ^q_b$, $c=\mathcal{S}^g_{\gamma(a)}$ and $d=\mathcal{S}^g_{\gamma(a)}$ (where $a\geq b$ and $c\leq d$). We try to swap the mapping order of $c$ and $d$, which represents a reverse swap. 
    We consider the difference between sequence and reverse order mapping $\Delta = \min(a,d)+\min(b,c) - (\min(a,c)+\min(b,d))$.
    Specifically, there are five possible scenarios for $\Delta$:
    \begin{enumerate}
        \item ($a \leq c$) $\Delta=a+b-(a+b)=0$,
        \item ($c\leq a \leq d$ and $b \leq c$) $\Delta=a+b-(c+b)=a-c\geq 0$,
        \item ($a \geq d$ and $c \geq b$) $\Delta=d+b-(c+b)=d-c \geq 0$,
        \item ($a \geq d \geq b$ and $b \geq c$) $\Delta=d+c-(c+b)=d-b \geq 0$, and
        \item ($b \geq d$) $\Delta=d+c-(c+d)=0$
    \end{enumerate}
    Therefore, $\Delta$ always holds. Figure~\ref{fig:example} illustrates examples of swaps under different conditions based on edge weight assignment.

    {\bf (Finite Step Convergence)} We define the matching inverse number as: $I(\gamma)=|{{(i,j)|i < j, \mathcal{S}^g_{\gamma(i)}<\mathcal{S}^g_{\gamma(j)}}|}$. When each swap satisfies $\mathcal{S}^g_{\gamma(i)}\leq \mathcal{S}^g_{\gamma(j)}$ and $i < j$, if $\mathcal{S}^g_{\gamma(i)} < \mathcal{S}^g_{\gamma(j)}$, after the swap, the pair $(i,j)$ changes from reversed to sequential order, and does not affect the reverse order of the other pairs, so $I(\gamma)$ is strictly decreasing. And if $\mathcal{S}^g_{\gamma(i)} = \mathcal{S}^g_{\gamma(j)}$, $I(\gamma)$ remains unchanged. 
    Since $I(\gamma)$ is a non-negative integer, and each strictly decreases or equal swaps, $I(\gamma)=0$ can be reached in a finite number of steps, that is, $\mathcal{S}^g_{\gamma(1)} \geq \cdots \geq \mathcal{S}^g_{\gamma(n)}$.

    {\bf (Ordered Optimality)} Since $F(\gamma)$ does not decrease at each swap, so $F(\gamma'') \geq \cdots \geq F(\gamma') \geq \cdots \geq F(\gamma^0)=F(\gamma)$. Therefore, the sequential matching $\gamma''$ is optimal.
\end{proof}

{\color{blue}The significance of Lemma~\ref{lemma:SMS} lies in theoretically guaranteeing that a specific, easily constructed sequence (i.e., sorted by edge weight list) can be used to calculate the lower bound of GND/ND, thereby simplifying the accurate calculation of GND/ND score costs.}
According to Lemma~\ref{lemma:SMS}, we can derive the ND lower bound $lb\_ND(q_j, v_i)$ and the GND lower bound $lb\_GND(q,g)$.

}
}

\nop{

\subsection{Pre-Computation of Auxiliary Data}
\label{sec:auxiliary}
To facilitate efficient online queries, Algorithm~\ref{alg:offline} pre-computes aggregate data for each vertex, which can be utilized for online pruning to yield more concise candidate vertices/subgraphs. Each vertex $v_i$ has an auxiliary data item $v_i.Aux$ with two entries:
\begin{itemize}
    \item the keyword embedding MBR, $v_i.MBR$,  obtained by using the keyword embedding model $\Phi$ on the keyword set of vertex $v_i$ (i.e., $v_i.MBR=\vee_{\forall_{v_i.k \in v_i.K} MBR(\Phi(v_i.k))}$), and
    \item the ordered list, $v_i.list_w$, of edge weights, which is computed by sorting the set of edge weights of the neighbors of vertex $v_i$ (i.e., $v_i.list_w=sort(\vee_{\forall e_{v_i,v_l}|v_l\in N(v_i)})$).
\end{itemize}

}

\begin{algorithm}[!t]
    \caption{\bf Pre-Computations of Auxiliary Data}
    \label{alg:offline}
      \small
    \KwIn{
        a data graph $G$
    }
    \KwOut{
        the auxiliary data $v_i.Aux$ for each vertex $v_i$
    }

    \tcp{\bf keyword embedding training}

    construct a keyword hypergraph $H$ and train a keyword embedding model $\Phi$

    \For{each vertex $v_i \in V(G)$}{

        \tcp{\bf vertex keyword embedding MBR}

        use model $\Phi$ to obtain the keyword embedding vector $o(v_i.k_j)$ for each keyword $v_i.k_j \in v_i.K$
        
        use the keyword embedding vectors to compute the vertex keyword embedding MBR $v_i.MBR$

        \tcp{\bf edge weights with neighbors}

        calculate a sorted list, $v_i.list_w$, of edge weights for the neighborhood of vertex $v_i$
        
        $v_i.Aux = (v_i.MBR, v_i.list_w)$
    }
    
    \Return $v_i.Aux$
\end{algorithm}

\subsection{Indexing Mechanism}
\label{sec:index}

In this subsection, we first present how to offline pre-compute auxiliary data, and then discuss the construction of a tree index $\mathcal{I}$ over such auxiliary data to facilitate efficient online S$^3$GND querying.

\noindent{\bf Pre-Computation of Auxiliary Data, $v_i.Aux$:} As mentioned in Sections \ref{subsec:keyword_embedding_MBR} and \ref{subsec:ND_GND_LB}, in order to enable the pruning with keyword embedding MBRs and ND/GND lower bounds, respectively, we offline pre-compute the following auxiliary data, $v_i.Aux$, for each vertex $v_i$:
\begin{itemize}
    \item a keyword embedding MBR, $v_i.MBR$, minimally bounding embedding vectors $o(k_i)$ of all keywords $k_i \in v_i.K$, via the trained keyword embedding model $\Phi$, and;

    \item an ordered list, $v_i.list_w$, of edge weights, which is computed by sorting edge weights, $e_{v_i,n_i}.w$, between vertex $v_i$ and its 1-hop neighbors $n_i$ in non-ascending order.
\end{itemize}

Algorithm \ref{alg:offline} illustrates the pseudo code of pre-computing auxiliary data in detail.



\noindent\textbf{The Data Structure of Index $\mathcal{I}$}: We build a hierarchical tree index $\mathcal{I}$ over auxiliary data, which consists of leaf and non-leaf nodes.


\underline{\textit{Leaf Nodes}}: Each leaf node $\mathcal{N}$ contains multiple entries, corresponding to vertices $v_i$ and their auxiliary data $v_i.Aux$ (i.e., keyword embedding MBRs, $v_i.MBR$, and sorted lists, $v_i.list_w$, of edge weights).

\nop{

\begin{itemize}
    \item a keyword embedding MBR, $v_i.MBR$, and
    \item a sorted list of edge weights, $v_i.list_w$.
\end{itemize}

}

\underline{\textit{Non-Leaf Nodes}}: Each non-leaf node $\mathcal{N}$ contains multiple child entries $\mathcal{N}_i$, each of which is associated with the following aggregates, $\mathcal{N}_i.Aux$, over auxiliary data:
\begin{itemize}
    \item an aggregated keyword embedding MBR, $\mathcal{N}_i.MBR$, minimally bounding the MBRs, $v.MBR$, of all vertices $v$ under entry $\mathcal{N}_i$;
    \item an aggregated sorted list, $\mathcal{N}_i.list_w$, of edge weights, each element $\mathcal{N}_i.list_w [j] = \max_{\forall v \in \mathcal{N}_i} v.list_w[j]$, and;

    \item a pointer, $\mathcal{N}_i.ptr$, pointing to a child node.
\end{itemize}

$l$ is in Eq. (4) for layers, Section 5.2 for index of the sorted lists, Section 5.3, for vertex $v_l$, and Section 6.1 for subscripts of the sorted lists]



\begin{algorithm}[!t]
    \caption{\bf The Index Construction}
    \label{alg:index}
      \small
    \KwIn{
        \romannumeral1) the auxiliary data $v_i.Aux$ for each vertex $v_i\in V(G)$
        \romannumeral2) the fanout, $fanout$, of index nodes, and
        \romannumeral3) \# of iterations, $MaxIter$
    }
    \KwOut{
        a tree index, $\mathcal{I}$, over data graph $G$
    }

    initialize a tree root, $root(\mathcal{I})$, corresponding to all vertices in $V(G)$

    tree height $h=\lceil log_{fanout}(|V(G)|) \rceil$

    \For{level $x=h$ to $1$}{

        \For{each node $\mathcal{N}^{(x)}$ on $x$-th level of index $\mathcal{I}$}{


            select $fanout$ initial vertices from node $\mathcal{N}^{(x)}$ as initial centers $\{c_1, \dots, c_{fanout}\}$ for child entries $\mathcal{N}_i^{(x-1)}$ ($1 \leq i \leq fanout$) with the minimum overlapping areas of keyword embedding MBRs 
            
            
            \For{$iter = 1$ to $MaxIter$}{

                assign each vertex $v$ under $\mathcal{N}^{(x)}$ to a child entry $\mathcal{N}_i^{(x-1)}$ whose center $c_i$ results in the minimum MBR area expansion

                update centers $\{c_1, \dots, c_{fanout}\}$ based on new MBR centroids of child entries $\mathcal{N}_i^{(x-1)}$
            }


            update the aggregated auxiliary data, $\mathcal{N}_i^{(x-1)}.Aux$, in each child entry $\mathcal{N}_i^{(x-1)}$
            
        }

    }

    \Return $\mathcal{I}$
    
\end{algorithm}
\setlength{\textfloatsep}{0pt}

 
\noindent\textbf{The Index Construction}: 
We construct the index $\mathcal{I}$ in a top-down manner, as outlined in Algorithm~\ref{alg:index}. 
The root, $root(\mathcal{I})$, of index $\mathcal{I}$ is an aggregation of $V(G)$, and the height $h$ is $\lceil log_{fanout}(|V(G)|) \rceil$ (lines 1-2). 
For each non-leaf node on the $x$-th level of index $\mathcal{I}$, we partition its entries into $fanout$ child nodes on the $(x-1)$-th level based on their embedding MBRs (lines 3-9).
In particular, for each node $\mathcal{N}^{(x)}$ on the $x$-th level of index $\mathcal{I}$, we select $fanout$ initial vertices from $\mathcal{N}^{(x)}$ as initial centers $\{c_1,\cdots,c_{fanout}\}$ of $\mathcal{N}_i^{(x-1)}$ ($1 \leq i \leq fanout$) with the minimum overlapping areas of keyword embedding MBRs between each other (lines 4-5).
To optimize the partitioning, we iteratively assign each vertex in $\mathcal{N}^{(x)}$ to the child entry $\mathcal{N}_i^{(x-1)}$ that results in the least expansion of the MBR area (lines 6-7). After each assignment, we update centers based on new MBR centroids of child entries $\mathcal{N}_i^{(x-1)}$ (line 8). Once the maximum number of iterations is reached, we update each child entry $\mathcal{N}_i^{(x-1)}$, by aggregating auxiliary data, $\mathcal{N}_i^{(x-1)}.Aux$, of its vertices (line 9). 
Finally, we return the constructed tree index $\mathcal{I}$ (line 10).


\vspace{0.5ex}\noindent {\bf The Time Complexity Analysis of Offline Pre-Computations:} The time complexity of offline pre-computation mainly includes keyword embedding training, auxiliary data pre-computation, and index construction.
For the keyword embedding training, the average number of keywords per vertex is $\overline{|v_i.K|}$, building the keyword hypergraph requires $O(|V(G)|\cdot \overline{|v_i.K|})$, and the HGNN training time complexity is $O(\overline{|v_i.K|} \cdot d^{2})$.
For auxiliary data pre-computation, the computation of vertex keyword embedding MBR needs $O(|V(G)|\cdot\overline{|v_i.K|})$ and the computation of edge weight list needs $O(|V(G)|\cdot avg\_deg(G))$.
Finally, it requires $O(\lceil log_{fanout} |V(G)| \rceil \cdot fanout \cdot |V(G)|\cdot MaxIter)$ cost for the index construction.
Overall, the time complexity of offline pre-computations is given by: $O(\overline{|v_i.K|}\cdot(|V(G)|+d^{2}+avg\_deg(G))+\lceil log_{fanout} |V(G)| \rceil \cdot fanout \cdot |V(G)| \cdot MaxIter)$.


\section{Online S$^3$GND Query Processing}
\label{sec:online}

In this section, we first discuss our pruning strategies on the index node level, and then illustrate the algorithm for online S$^3$GND query answering.

\setlength{\textfloatsep}{0pt}

\subsection{Index-Level Pruning Strategies}
\noindent{\bf Index-Level Keyword Embedding MBR Pruning:}
We present \textit{index-level keyword embedding MBR pruning} to filter out node entries, $\mathcal{N}_i$, in index $\mathcal{I}$ 
whose candidate vertices do not contain all query keywords in $q_j.K$ for any query vertex $q_j \in V(q)$ (via keyword embedding MBRs).


\begin{lemma}
    \label{lemma:index_keyword_pruning}
    \textbf{(Index-Level Keyword Embedding MBR Pruning)} Given an entry $\mathcal{N}_i$ in index node $\mathcal{N}$ and a query graph $q$, entry $\mathcal{N}_i$ can be safely pruned, if it holds that: $\mathcal{N}_i.MBR \cap q_j.MBR \neq q_j.MBR$ (i.e., $q_j.MBR  \not\subseteq \mathcal{N}_i.MBR$), for all $q_j \in V(q)$.
\end{lemma}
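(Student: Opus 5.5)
The plan is to reduce the index-level statement to the vertex-level Lemma~\ref{lemma:keyword_pruning}, using the fact that $\mathcal{N}_i.MBR$ minimally bounds the MBRs $v.MBR$ of all vertices $v$ under entry $\mathcal{N}_i$. The argument is by contradiction. Suppose some vertex $v$ under $\mathcal{N}_i$ could match some query vertex $q_j \in V(q)$ in an S$^3$GND answer. By the keyword set containment condition of Definition~\ref{def:S3GND}, this requires $q_j.K \subseteq v.K$.

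\textbf{Step 1 (vertex-level containment).} Every keyword embedding $o(k)$ with $k \in q_j.K$ is then also an embedding of a keyword in $v.K$. Since $q_j.MBR$ is the minimum bounding rectangle of these vectors, and $v.MBR$ is a rectangle containing all of them, we get $q_j.MBR \subseteq v.MBR$. This is exactly the MBR property already used in the proof of Lemma~\ref{lemma:keyword_pruning}.

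\textbf{Step 2 (lifting to the index entry).} By the construction of non-leaf entries in the index $\mathcal{I}$, we have $v.MBR \subseteq \mathcal{N}_i.MBR$ for every vertex $v$ under $\mathcal{N}_i$. For an entry deeper in the tree, I will argue by a short induction over tree levels: each aggregated MBR bounds its children's MBRs, so it bounds every descendant vertex MBR. Transitivity of rectangle containment then gives $q_j.MBR \subseteq \mathcal{N}_i.MBR$, i.e., $\mathcal{N}_i.MBR \cap q_j.MBR = q_j.MBR$.

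\textbf{Step 3 (conclusion).} Step 2 contradicts the hypothesis of the lemma, which asserts $q_j.MBR \not\subseteq \mathcal{N}_i.MBR$ for all $q_j \in V(q)$. Hence no vertex under $\mathcal{N}_i$ can match any query vertex. Equivalently, every such vertex individually satisfies the pruning condition of Lemma~\ref{lemma:keyword_pruning}, so the whole subtree rooted at $\mathcal{N}_i$ contains no candidate vertices and can be discarded without false dismissals.

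The only step needing some care is Step 2: the containment $v.MBR \subseteq \mathcal{N}_i.MBR$ must hold for \emph{all} descendant vertices, not just direct children. I would state the level-by-level induction explicitly, relying on how Algorithm~\ref{alg:index} aggregates the auxiliary data $\mathcal{N}_i.Aux$. The remaining steps are direct applications of MBR monotonicity.
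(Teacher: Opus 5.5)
Your proposal is correct and follows essentially the same route as the paper's proof: a hypothetical match $q_j.K \subseteq v.K$ for some vertex $v$ under $\mathcal{N}_i$ would force $q_j.MBR \subseteq v.MBR \subseteq \mathcal{N}_i.MBR$, contradicting the hypothesis, so no vertex under the entry can match any query vertex. You spell out this containment chain explicitly, whereas the paper's short proof only asserts the conclusion. Your level-by-level induction in Step 2 is harmless but not needed, since the index definition already takes $\mathcal{N}_i.MBR$ to bound the MBRs of all vertices under $\mathcal{N}_i$.
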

\begin{proof}
    For all query vertices $q_j \in V(q)$, if $\mathcal{N}_i.MBR \cap q_j.MBR \neq q_j.MBR$ holds for an entry $\mathcal{N}_i$, it indicates that query keyword sets $q_j.K$ are not subsets of $v.K$ for any vertex $v \in \mathcal{N}_i$. Thus, we can infer that no vertex $v \in \mathcal{N}_i$ can match with any query vertex $q_j$. Thus, we can safely prune the entry $\mathcal{N}_i$.
\end{proof}

\noindent{\bf Index-Level ND Lower Bound Pruning:} Next, we provide the \textit{index-level ND lower bound pruning} to rule out entries, $\mathcal{N}_i$, that contain vertices with significant neighbor differences (i.e., ND scores) from query vertices $q_j\in V(q)$.

\begin{lemma}
    \label{lemma:index_ND_Lower_Bound_Pruning}
    \textbf{(Index-Level ND Lower Bound Pruning)} Given an entry $\mathcal{N}_i$ in index node $\mathcal{N}$, a query graph $q$, and a generalized neighbor difference threshold $\delta$, the entry $\mathcal{N}_i$ can be safely pruned, if it holds that $lb\_ND(q_j,\mathcal{N}_i) > \delta$ for all query vertices $q_j \in V(q)$, where the lower bound, $lb\_ND(q_j,\mathcal{N}_i)$, of the ND score for any vertex under entry $\mathcal{N}_i$ is calculated as follows:
    $$lb\_ND(q_j,\mathcal{N}_i) = \sum_{z=1}^{|N(q_j)|} \max\{q_j.list_w[z]-\mathcal{N}_i.list_w[z], 0\}.$$

\end{lemma}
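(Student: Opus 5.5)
The plan is to show that $lb\_ND(q_j,\mathcal{N}_i)$ lower-bounds $lb\_ND(q_j,v)$ for every vertex $v$ under entry $\mathcal{N}_i$. Lemma~\ref{lemma:lb_ND} then makes it a lower bound of $ND(q_j,v)$, and Lemma~\ref{lemma:ND_lb_pruning} finishes the argument for each query vertex.

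\textbf{Step 1 (elementwise domination).} Fix any vertex $v$ under $\mathcal{N}_i$. By construction of the non-leaf aggregate, $\mathcal{N}_i.list_w[z]=\max_{v'\in\mathcal{N}_i} v'.list_w[z]\ge v.list_w[z]$ for every position $z$. Missing positions are padded with weight $0$, as in the convention of Section~\ref{subsec:ND_GND_LB}, so the inequality holds for all $z\le |N(q_j)|$. For a child entry that is itself an inner node, the aggregate is a max of maxes. A one-line induction over tree levels therefore gives the inequality for all descendant vertices $v$.

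\textbf{Step 2 (monotonicity).} The map $t\mapsto \max\{r-t,0\}$ is non-increasing in $t$. Hence $\max\{q_j.list_w[z]-\mathcal{N}_i.list_w[z],0\}\le \max\{q_j.list_w[z]-v.list_w[z],0\}$ termwise. Summing over $z=1,\dots,|N(q_j)|$ gives $lb\_ND(q_j,\mathcal{N}_i)\le lb\_ND(q_j,v)\le ND(q_j,v)$, where the last step is Lemma~\ref{lemma:lb_ND}.

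\textbf{Step 3 (pruning).} If $lb\_ND(q_j,\mathcal{N}_i)>\delta$, then $ND(q_j,v)>\delta$. By Lemma~\ref{lemma:ND_lb_pruning}, using that MAX and SUM satisfy $GND\ge ND$, vertex $v$ cannot be matched to $q_j$. Since the hypothesis holds for all $q_j\in V(q)$, $v$ cannot match any query vertex, so it belongs to no answer subgraph. Because $v$ was arbitrary under $\mathcal{N}_i$, the whole entry can be pruned safely.

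The main obstacle is Step 1, specifically confirming that the elementwise-max aggregate is still a valid pointwise upper bound when vertices have lists of different lengths. I would handle this by zero-padding all lists to a common length before taking the max. The aggregate list need not be the sorted list of any real vertex, but the argument only uses pointwise domination and never uses that property.
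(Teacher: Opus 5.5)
Your proposal is correct and follows the same route as the paper: it reduces the entry-level bound to the vertex-level bound $lb\_ND(q_j,v)$ for every vertex $v$ under $\mathcal{N}_i$, then applies the vertex-level ND lower bound and pruning lemmas for every $q_j\in V(q)$. Your Steps 1--2 (the elementwise domination $\mathcal{N}_i.list_w[z]\ge v.list_w[z]$ with zero-padding, plus monotonicity of $t\mapsto\max\{r-t,0\}$) spell out the implication $lb\_ND(q_j,\mathcal{N}_i)>\delta \Rightarrow lb\_ND(q_j,v)>\delta$, which the paper's proof asserts without justification.
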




\begin{proof}
    If the ND lower bound $lb\_ND(q_j,\mathcal{N}_i)$ for all $q_j \in V(q)$ is greater than threshold $\delta$, it indicates that the ND lower bound $lb\_ND(q_j, v)$ is greater than threshold $\delta$, for any query vertex $q_j \in V(q)$ and vertex $v \in \mathcal{N}_i$. Based on Definition~\ref{def:S3GND}, vertices $v$ under node $\mathcal{N}_i$ cannot match with any query vertex $q_j \in V(q)$, and thus we can safely prune entry $\mathcal{N}_i$.
\end{proof}


\nop{
{\color{blue}\noindent\textbf{Discussion on How to Obtain $lb\_ND(q_j, \mathcal{N}_i)$}: {\color{red}\bf [can you directly give it in Lemma 7?]}
}

}

\nop{\noindent{\bf Index-Level GND Lower Bound Pruning}
Similarly, we propose the \textit{index-level GND lower bound pruning} to filter out entries, $\mathcal{N}_i$, that contain vertices with significant differences from the query graph/vertex.
\begin{lemma}
    \label{lemma:index_GND_Lower_Bound_Pruning}
    \textbf{(Index-Level GND Lower Bound Pruning)} Given an entry $\mathcal{N}_i$ in index node $\mathcal{N}$, a query graph $q$, and an generalized neighbor difference threshold $\delta$, entry $\mathcal{N}_i$ can be safely pruned, if it holds that $lb\_GND(q,\mathcal{N}_i) > \delta$, where $lb\_GND(q,\mathcal{N}_i)$ is a lower bound of $GND(q,\mathcal{N}_i)$.
\end{lemma}}

\subsection{S$^3$GND Query Answering Algorithm}

Algorithm~\ref{alg:online} presents our proposed S$^3$GND query processing algorithm, which traverses the tree index $\mathcal{I}$ by applying our effective pruning strategies, and refines candidate subgraphs to retrieve actual S$^3$GND query answers.


\noindent\textbf{Initialization:} To process an S$^3$GND query, we first initialize an empty subgraph answer set, $S$  (line 1). Then, for each query vertex $q_j \in V(q)$, we use the trained keyword embedding model $\Phi$ to obtain a keyword embedding MBR $q_j.MBR$ (bounding embedding vectors of all keywords in $q_j.K$), and initialize an empty set, $q_j.V_{cand}$, of candidate vertices matching with $q_j$ (lines 2-4).




\noindent\textbf{Index Traversal:} Next, we traverse the index $\mathcal{I}$ to obtain candidate vertices in $q_j.V_{cand}$, matching with each query vertex $q_j$ (lines 5-22). Specifically, we first initialize an empty queue, $\mathcal{H}$, to facilitate the index traversal (line 5).
Moreover, we let $V(q)$ be the query vertex set, $root(\mathcal{I}).Q$, relevant to the tree root $root(\mathcal{I})$, and insert entry $root(\mathcal{I})$ into queue $\mathcal{H}$ via an enqueue operator (lines 6-7).

Each time, we pop out an entry $\mathcal{N}$ from the queue $\mathcal{H}$ via a dequeue operator (lines 8-9). 
When $\mathcal{N}$ is a leaf node, we check each vertex $v_i$ in $\mathcal{N}$. If $v_i$ cannot be pruned by Lemmas~\ref{lemma:keyword_pruning} and ~\ref{lemma:ND_lb_pruning} for some query vertex $q_j \in \mathcal{N}.Q$, then we add candidate vertex $v_i$ to the candidate vertex set $q_j.V_{cand}$ (lines 10-14).

When $\mathcal{N}$ is a non-leaf node, we check each entry $\mathcal{N}_i$ in $\mathcal{N}$, and initialize an empty set $\mathcal{N}_i.Q$, which stores query vertices that may match with some vertex in entry $\mathcal{N}_i$ (lines 15-17). Then, for each query vertex $q_j \in \mathcal{N}.Q$, if $\mathcal{N}_i$ cannot be pruned by Lemmas~\ref{lemma:index_keyword_pruning} and ~\ref{lemma:index_ND_Lower_Bound_Pruning}, we add $q_j$ to the query vertex set $\mathcal{N}_i.Q$ of entry $\mathcal{N}_i$ (lines 18-20). When $\mathcal{N}_i.Q$ is not empty, we enqueue entry $\mathcal{N}_i$ into $\mathcal{H}$ (lines 21-22). 

Finally, we assemble and refine candidate vertices in $q_j.V_{cand}$ for $q_j \in V(q)$ by invoking Algorithm~\ref{alg:refine} (applying GND lower bound pruning in Lemma~\ref{lemma:GND_Lower_Bound_Pruning}) to return final S$^3$GND query answer set $S$ (lines 23-24).

\begin{algorithm}[!t]
    \caption{\bf Online S$^3$GND Query Processing}
      \small
    \label{alg:online}
    \KwIn{
        \romannumeral1) a data graph $G$,
        \romannumeral2) a query graph $q$,
        \romannumeral3) the keyword embedding model, $\Phi$,
        \romannumeral4) index $\mathcal{I}$ over $G$,
        \romannumeral5) the aggregate function $f(\cdot)$, and 
        \romannumeral6) a generalized neighbor difference threshold $\delta$
    }
    \KwOut{
        a set, $S$, of $S^3GND$ subgraph answers similar to $q$
    }
    
    \tcp{\bf initialization}
    
    $S \leftarrow \emptyset$

    \For{each query vertex $q_j \in V(q)$}{

        obtain the keyword embedding MBR $q_j.MBR$ using model $\Phi$

        $q_j.V_{cand} \leftarrow \emptyset$
    
    }

    \tcp{\bf index traversal}

    initialize an empty queue $\mathcal{H}$ to facilitate the index traversal

    $root(\mathcal{I}).Q = V(q)$

    insert $root(\mathcal{I})$ into queue $\mathcal{H}$

    \While{$\mathcal{H}$ is not empty}{

        $\mathcal{N} = \mathcal{H}.dequeue()$

        \eIf{$\mathcal{N}$ is a leaf node}{

            \For{each vertex $v_i \in \mathcal{N}$}{

                \For{each query vertex $q_j \in \mathcal{N}.Q$}{

                    \If{$v_i$ cannot be pruned by Lemmas~\ref{lemma:keyword_pruning} and ~\ref{lemma:ND_lb_pruning}}{

                        $q_j.V_{cand} \leftarrow q_j.V_{cand} \cup \{v_i\}$
                    
                    }
                    
                }
            
            }
        }{

        \tcp{$\mathcal{N}$ is a non-leaf node}

            \For{each entry $\mathcal{N}_i \in \mathcal{N}$}{

                $\mathcal{N}_i.Q \leftarrow \emptyset$

                \For{each query vertex $q_j \in \mathcal{N}.Q$}{

                    \If{$\mathcal{N}_i$ cannot be pruned by Lemmas~\ref{lemma:index_keyword_pruning} and ~\ref{lemma:index_ND_Lower_Bound_Pruning}}{

                        $\mathcal{N}_i.Q \leftarrow \mathcal{N}_i.Q \cup \{q_j\}$
                    
                    }
                }

                \If{$\mathcal{N}_i.Q$ is not empty}{


                    $\mathcal{H}.enqueue(\mathcal{N}_i)$
                }
            
            }
        }

    }

    \tcp{\bf refinement}
    invoke Algorithm~\ref{alg:refine} to assemble and refine the candidate vertices in $q_j.V_{cand}$ for $q_j \in V(q)$ with Lemma~\ref{lemma:GND_Lower_Bound_Pruning} to obtain similar subgraphs $g$ form $S$
    
    \Return $S$
\end{algorithm}

\begin{algorithm}[t]
    \caption{\bf Refinement}
    \label{alg:refine}
    \small
    \KwIn{
        \romannumeral1) a data graph $G$,
        \romannumeral2) a query graph $q$,
        \romannumeral3) a candidate vertex list of the query vertices, $q_j.V_{cand}$ $(1 \leq j \leq |V(q)|)$, and
        \romannumeral5) a generalized neighbor difference threshold $\delta$
        }
    \KwOut{
        a set, $S$, of $S^3GND$ subgraph answers similar to $q$
    }

    $S \leftarrow \emptyset, stack \leftarrow \emptyset$

    $q_{start} \leftarrow \arg\min_{j}(|q_j.V_{cand}|)$

    \For{each candidate vertex $v_i \in q_{start}.V_{cand}$}{

        $initial\_mapping \leftarrow \{q_{start}:v_i\}$

        $initial\_used \leftarrow \{v_i\}$

        $initial\_remaining \leftarrow V(q)-\{q_{start}\}$

        $stack.push(initial\_mapping, initial\_used, initial\_remaining)$
    }

    \While{$stack$ is not empty}{
        
        $(curr\_{map}, curr\_used, curr\_rem) \leftarrow stack.pop()$

        \If{$|curr\_rem|=0$}{

            \If{$curr\_{map}$ cannot pruned by Lemma~\ref{lemma:GND_Lower_Bound_Pruning}}{
            
                \If{$GND(q, curr\_{map})\leq \delta$}{
    
                    $S.append(curr\_{map})$
                }
            }
            
        }

        \Else
        {
    
            $q_{next} \leftarrow \max\_neighbors(curr\_rem)$

            $new\_rem \leftarrow curr\_rem- \{q_{next}\}$

            \For{each $v_i \in q_{next}.V_{cand}$}{

                \If{$v_i \notin curr\_used$}{


                    $new\_map \leftarrow curr\_map \cup \{q_{next}:v_i\}$ 

                    $new\_used\leftarrow curr\_used \cup \{v_i\}$

                    $stack.push(new\_map, new\_used, new\_rem)$
                }
                
            }

        }

    }
    
    \Return $S$
\end{algorithm}

\noindent\textbf{Refinement:} Algorithm~\ref{alg:refine} details our refinement process, which assembles candidate vertices into subgraphs and refines them.
Specifically, we first initialize an empty S$^3$GND query answer set $S$ and an empty stack, $stack$, and then choose a query vertex $q_{start}$ with the smallest set, $q_{start}.V_{cand}$, of candidate vertices, to start the search (lines 1-2). 
For each candidate vertex $v_i$ in the candidate vertex set $q_{start}.V_{cand}$, we initialize an entry consisting of three variables, that is, ($initial\_mapping$, $initial\_used$, $initial\_remaining$). 
Here, $initial\_mapping$ stores the matching relationship between the query vertex and the candidate vertex and initialized to $\{q_{start}:v_i\}$, $initial\_used$ records the usage status of the candidate vertex and is initialized to $\{v_i\}$, and $initial\_remaining$ contains query vertices that have not yet been checked and is initialized to $V(q)-\{q_{start}\}$ (lines 3-6). Then, we push this entry into the stack, $stack$, for the graph traversal (line 7).

Each time, we pop out an entry $(curr\_map, curr\_used, curr\_rem)$ from stack $stack$ (lines 8-9). 
If the number of currently unmatched query vertices in $curr\_rem$ is 0 (i.e., all query vertices have been matched), we apply Lemma~\ref{lemma:GND_Lower_Bound_Pruning} to prune the candidate subgraph with vertices in  $curr\_{map}$ (line 10). If the matching in $curr\_{map}$ cannot be pruned, we compute the actual GND score, $GND(q, curr\_{map})$, and add a candidate subgraph with vertices in $curr\_{map}$ to the candidate answer set $S$ (lines 11-13).


If $curr\_rem$ is not zero (i.e., not all query vertices have been matched), we select a query vertex from $curr\_rem$ with the largest adjacency weight among the currently matched query vertices in $curr\_map$ as the next query vertex $q_{next}$ to ensure the vertex connectivity of $curr\_map$ (i.e., candidate subgraph).  We then remove $q_{next}$ from $curr\_{rem}$, resulting in a new remaining set $new\_rem$ (lines 14-16). 

\nop{
we select a neighbor vertex with the 
{\color{red}highest edge weight \bf [what is this weight? give rationale; make it clear]} from $curr\_{rem}$ to serve as the next query vertex $q_{next}$.
}

Next, we evaluate each vertex $v_i$ in $q_{next}.V_{cand}$ (line 17). If $v_i$ is not already included in $curr\_used$, we create a new mapping set $new\_map$, which includes $curr\_map$ and one new mapping $\{q_{next}:v_i\}$ (lines 18-19). We also obtain a new used set $new\_used$ by unioning $curr\_used$ with $v_i$ (line 20). We then push the new entry $(new\_map, new\_used, new\_rem)$ into the stack, $stack$, for subsequent traversal (line 21). Finally, we return all S$^3$GND query results in $S$, after the stack, $stack$, becomes empty (line 22).

\nop{
we check whether or not $curr\_map$ can be pruned by {\bf Lemma~\ref{lemma:GND_Lower_Bound_Pruning} {\color{red}[why Lemma 3?]}}. If the answer is no, 
{\color{orange} {\bf [not very readable here! please rewrite this part]} 
then we select a neighbor vertex with the highest weight that has matched the 1-hop of currently matched query vertices as the next query vertex $next$ and remove the $next$ in $curr\_rem$ as a new remaining set $new\_rem$ of $next$ (lines 13-16). Next, we check each vertex $v_i$ in $next.V_{cand}$. If $v_i$ is not in $curr\_used$, we create a new mapping set $new\_map$ identical to $curr\_map$ and add $\{next:v_i\}$ to it. We add $v_i$ to the used set and pass it to a new vertex used set $new\_used$ (lines 17-21).} 

{\color{red}\bf [for refinement, i changed the if statement to if-else statement in lines 10-13. 

Also I changed from one single if to two if statements in lines 10-11. please make sure this is correct.

please check your Python source code to see if the code is correct.]}

}

\vspace{0.5ex}\noindent {\bf The Time Complexity Analysis of Online S$^3$GND Query Algorithm:} The time complexity of the initialization is $O(\sum_{j=1}^{|V(q)|} |q_j.K|)$ in Algorithm~\ref{alg:online}. Let $PP_i$ denote the pruning power (i.e., the percentage of node entries that can be pruned) on the $i$-th level of the tree index $\mathcal{I}$, where $1 \leq i \leq \lceil log_{fanout} |V(G)| \rceil$.
Then, for the index traversal process, the number of visited nodes is $\sum_{i=1}^{\lceil log_{fanout} |V(G)| \rceil} fanout \cdot (1-PP_i)$.
The update cost of each candidate vertex set and the computation of the ND lower bound is $O(1)$.
For the refinement, the time complexity of initializing the stack is $O(min(\vee_{q_j \in V(q)}|q_j.V_{cand}|))$. Let $PP_j$ denote the GND lower bound pruning power on the $j$-th query candidate vertex set. The refinement needs $O(\prod_{j=1}^{|V(q)|} |q_j.V_{cand}|\cdot (1-PP_j))$.
Therefore, the overall time complexity of online S$^3$GND query processing (i.e.,  Algorithm~\ref{alg:online}) is given by $O\big(\sum_{j=1}^{|V(q)|} |q_j.K| + \\ \sum_{i=1}^{\lceil log_{fanout} |V(G)| \rceil}  fanout \cdot (1-PP_i) +\prod_{j=1}^{|V(q)|} |q_j.V_{cand}|\cdot (1-PP_j)\big)$.

\section{Experimental Evaluation}
\label{sec:experiment}

\subsection{Experimental Settings}
\textbf{Real/Synthetic Graph Data sets}: We use both real and synthetic graph data sets to evaluate our S$^3$GND query algorithm.


\underline{\it Real-World Graph Data Sets}: We use five real-world graph data sets widely used in the literature~\cite{liang2025towards, zitnik2017predicting, zenggraphsaint, yang2023pane, pan2018adversarially, zhou2018prre}, including citation networks (e.g., $Cora$ and $PubMed$~\cite{yang2023pane}), city networks (e.g., $shanghai$~\cite{liang2025towards}), and social networks (e.g., $Wiki$ and $TWeibo$~\cite{yang2023pane}). Statistics of these real graphs are summarized in Table~\ref{tab:datasets}.

\underline{\it Synthetic Graph Data Sets}: In accordance with prior works~\cite{pavan2013counting, wen2025s3and, ye2024efficient, lu2017finding}, we generate synthetic small-world graph data sets via the Newman-Watts-Strogatz model~\cite{watts1998collective}, using NetworkX~\cite{hagberg2007exploring} and igraph~\cite{csardi2006igraph} libraries. With the generated graph structures, we obtain three synthetic graphs: \textit{Syn-Uni}, \textit{Syn-Gau}, and \textit{Syn-Zipf}, with different keyword distributions \textit{Uniform}, \textit{Gaussian}, and \textit{Zipf}, respectively, over the keyword domain $\Sigma$.


\begin{table}[t]\vspace{-2ex}
\begin{center}
\caption{\small Statistics of Real-World Graph Data Sets.}
\vspace{-2ex}
\label{tab:datasets}
\scriptsize
\begin{tabular}{|c||c||c||c||c||c|}
\hline
\textbf{Name} & \textbf{Abbr.} & $\bm{|V(G)|}$ & $\bm{|E(G)|}$ & $\bm{|\sum|}$ & \textbf{Avg. Deg.}\\
\hline\hline
    $Cora$ & CR & 2,708 & 5,429 & 1,433 & 4.01\\\hline
    $Wiki$ & WK & 2,405 & 17,981 & 4,973 & 14.95\\\hline
    $PubMed$ & PM & 19,717 & 44,338 & 500 & 4.50\\\hline
    $Shanghai$ & SH & 183,917 & 524,184 & 37 & 5.41\\\hline
    $TWeibo$ & TW & 2,320,895 & 9,840,066 & 1,658 & 8.48\\\hline
\end{tabular}
\end{center}\vspace{-2ex}
\end{table}

\begin{table}[t]\vspace{-1ex}
\begin{center}
\caption{\small Parameter Settings.}
\label{tab:parameters}
\vspace{-2ex}
\scriptsize
\resizebox{\linewidth}{!}{
\begin{tabular}{|p{5cm}||l|}
\hline
\textbf{Parameters}&\textbf{Values} \\
\hline\hline
    the threshold, $\delta_{MAX}$ ($=$ $\delta$), of MAX neighbor difference & {\bf1}, 2, 3, 4 \\\hline
    the threshold, $\delta_{SUM}$ ($=$ $\delta$), of SUM neighbor difference & 2, {\bf 3}, 4, 5 \\\hline
    the size, $|v_i.W|$, of keywords per vertex & 1, 2, {\bf 3}, 4, 5 \\\hline
    the size, $|\Sigma|$, of the keyword domain & 10, 20, {\bf50}, 80 \\\hline
    the size, $|V(q)|$, of query graph $q$ & 3, {\bf 5}, 8, 10 \\\hline
    the size, $|V(G)|$, of data graph $G$ & 10K, {\bf 50K}, 250K, 1M, 10M, 30M \\\hline    
\end{tabular}
}
\end{center}
\end{table}

\noindent{\bf Competitors}: To the best of our knowledge, no previous work provides an exact solution to our S$^3$GND problem (i.e., subgraph similarity search over a weighted data graph under GND semantics, considering vertex keyword sets and edge weights). Alternatively, we choose S$^3$AND~\cite{wen2025s3and} and Bloom Filter (BF)~\cite{xie2021hash} algorithms as comparative baselines. Specifically, S$^3$AND~\cite{wen2025s3and} cannot handle GND semantics in graphs with edge weights, so we compare our S$^3$GND approach with S$^3$AND over a weighted graph without using the pruning w.r.t. edge existences (i.e., AND scores~\cite{wen2025s3and}).
For BF, we use bloom filters to check the existence of query keyword sets and obtain candidate vertices. For a fair comparison, we set the size of a bloom filter the same as that of the keyword embedding MBR (i.e., 2$d$) in our S$^3$GND approach.






\noindent{\bf Query Graph}: We extract the connected subgraphs from the data graph $G$, by using random walks~\cite{xia2019random, wen2025s3and}. Then, we also sample keywords in the vertices of these subgraphs with a sampling rate of $0.9$, and sample edges inside subgraphs with an edge sampling rate of $0.7$, resulting in connected query graphs $q$.


\noindent{\bf Measures:} To highlight the effectiveness and efficiency of our proposed S$^3$GND approach, we evaluate the \textit{pruning power} and \textit{wall clock time}, which are defined as the percentage of candidate vertices that can be filtered out and the CPU time of S$^3$GND query processing, respectively. For each experiment, we report the average of 50 S$^3$GND queries.


\noindent{\bf Parameter Settings:} Table~\ref{tab:parameters} depicts the parameter settings of graph data sets and our S$^3$GND query predicates, where default values are in bold. Each time we vary one parameter, and set other parameters to their default values. By default, we set the embedding dimension, $d$, to 64 and the node fanout, $fanout$, of index $\mathcal{I}$ to 16. All experiments were conducted on a PC equipped with an NVIDIA RTX PRO 6000 featuring 96GB of video memory and an Intel Core i9-14900K with 192GB of RAM. All algorithms were implemented using Python 3.11. The source code and data sets are available at URL: {\it\url{https://github.com/Luminous-wq/S3GND}}.


\subsection{S$^3$GND Effectiveness Evaluation}

\begin{figure}[t]
    \centering\vspace{-2ex}
    \subfigure[real-world graphs]{
            \includegraphics[width=0.48\linewidth]{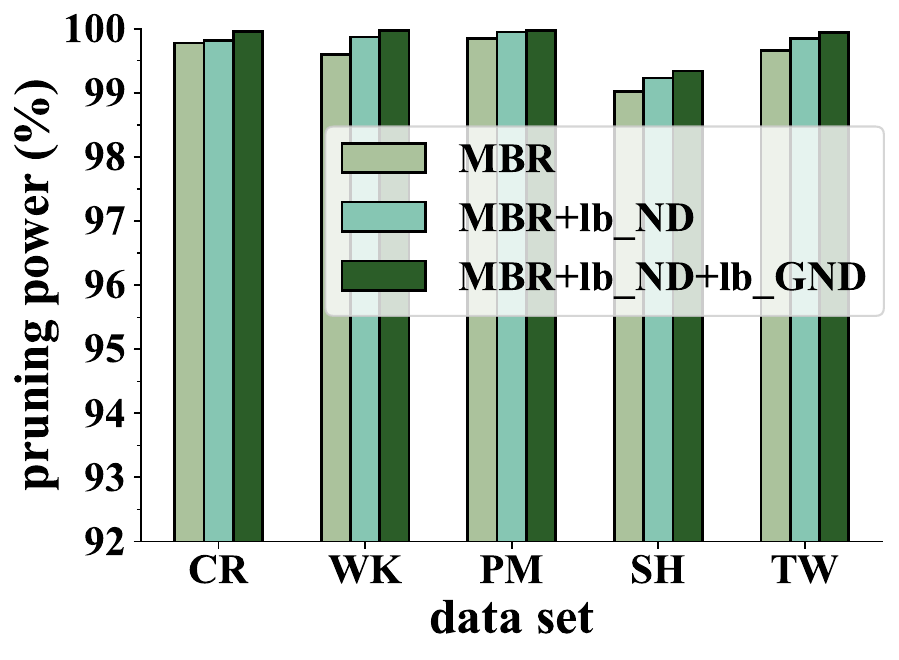}
            \label{subfig:S3GND_real_pruning}
    } \hspace{-0.2cm}
    \subfigure[synthetic graphs]{
        \includegraphics[width=0.48\linewidth]{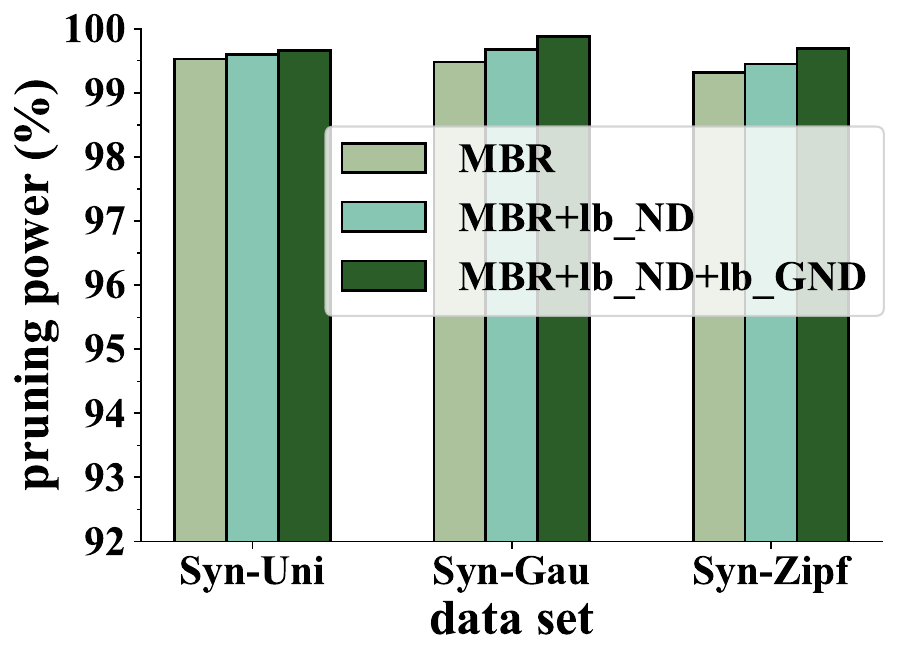}
        \label{subfig:S3GND_synthetic_pruning}
    }\vspace{-2ex}
    \caption{Evaluation of the S$^3$GND effectiveness.}
    \label{fig:pruning}
    \vspace{-1ex}
\end{figure}


Figure~\ref{fig:pruning} evaluates the effectiveness of our S$^3$GND approach via an ablation study, in terms of the \textit{pruning power}. Specifically, we test three combinations of pruning strategies: (1) \textit{keyword embedding MBR pruning} (denoted as MBR); (2) \textit{keyword embedding MBR pruning + ND lower bound pruning} (denoted as MBR+$lb\_ND$), and; (3) \textit{keyword embedding MBR pruning + ND lower bound pruning + GND lower bound pruning} (denoted as MBR+$lb\_ND$+$lb\_GND$), where all parameters are set to default values.
From experimental results, different combinations of pruning methods can achieve high pruning power (over 99\%) on both real and synthetic graph data sets. Moreover, applying additional pruning methods increases the pruning power. Our S$^3$GND approach (with all three pruning strategies) can reach $99.02\% \sim 99.99\%$ pruning power for real graphs, and $99.32\% \sim 99.88\%$ for synthetic graphs, which confirms the effectiveness of our pruning strategies.

\begin{figure}[!t]
    \centering\vspace{-2ex}
    \subfigure[real-world graphs]{
        \includegraphics[width=0.57\linewidth]{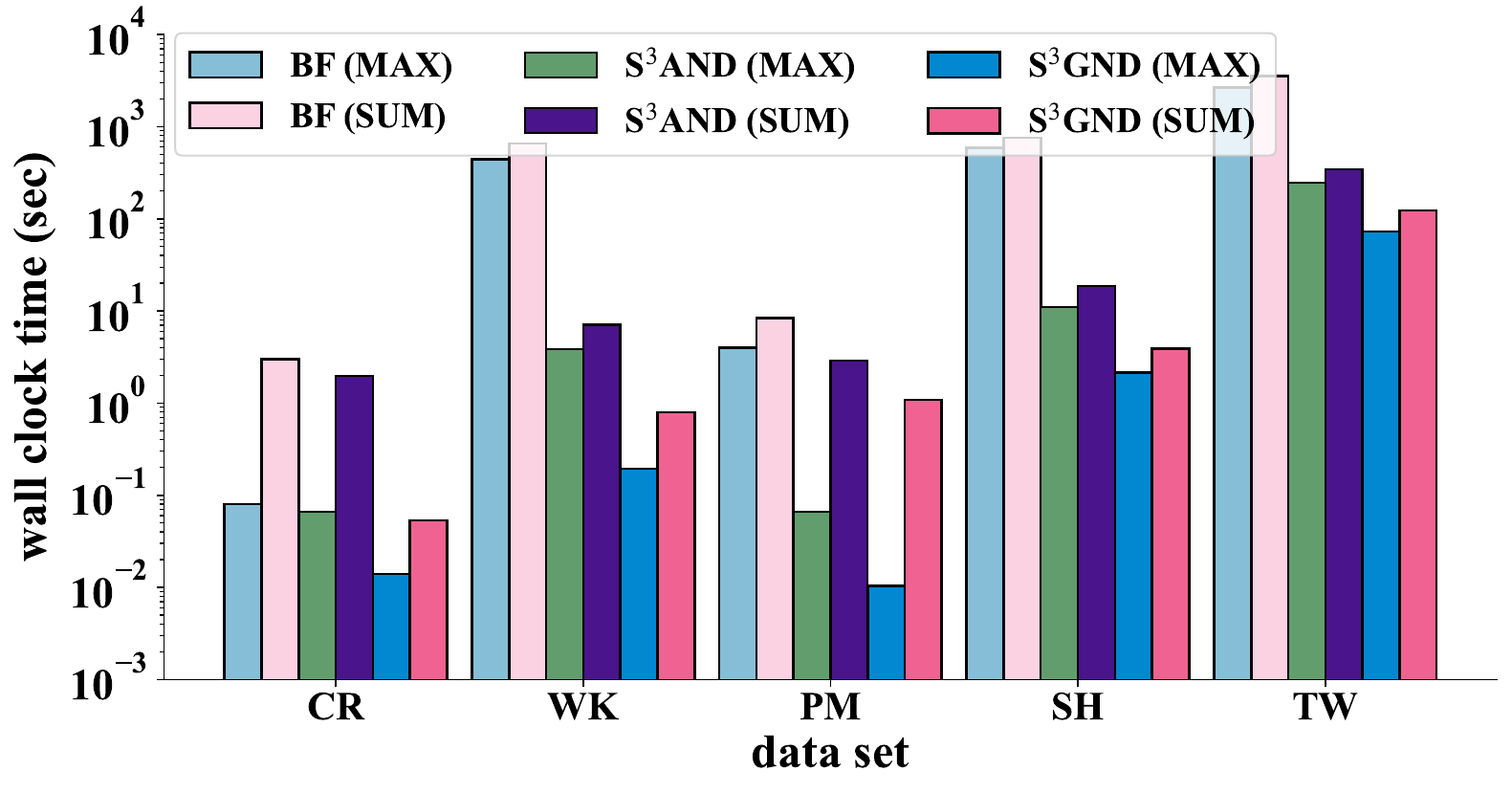}
            \label{subfig:bases_online_real}
    } \hspace{-0.3cm}
    \subfigure[synthetic graphs]{
        \includegraphics[width=0.41\linewidth]{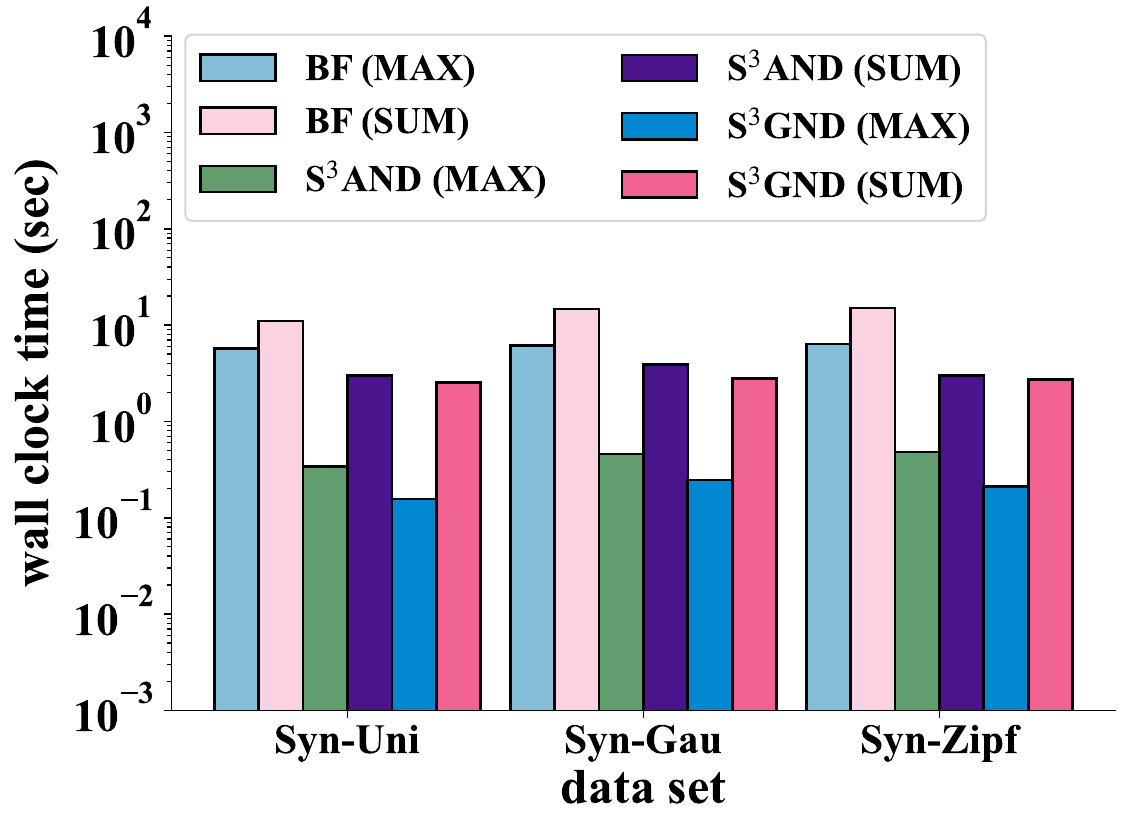}
        \label{subfig:bases_online_syn}
    }
    \vspace{-2ex}
    \caption{Comparison of our S$^3$GND approach with $BF$ and S$^3$AND baseline methods over real and synthetic graphs.}
    \label{fig:baseline}
    \vspace{0.5ex}
\end{figure}


\begin{figure*}[t]
    \centering
    \subfigure[{\scriptsize time vs. $\delta_{MAX}$ ($Syn\text{-}Uni$)}]{
            \includegraphics[width=0.15\linewidth]{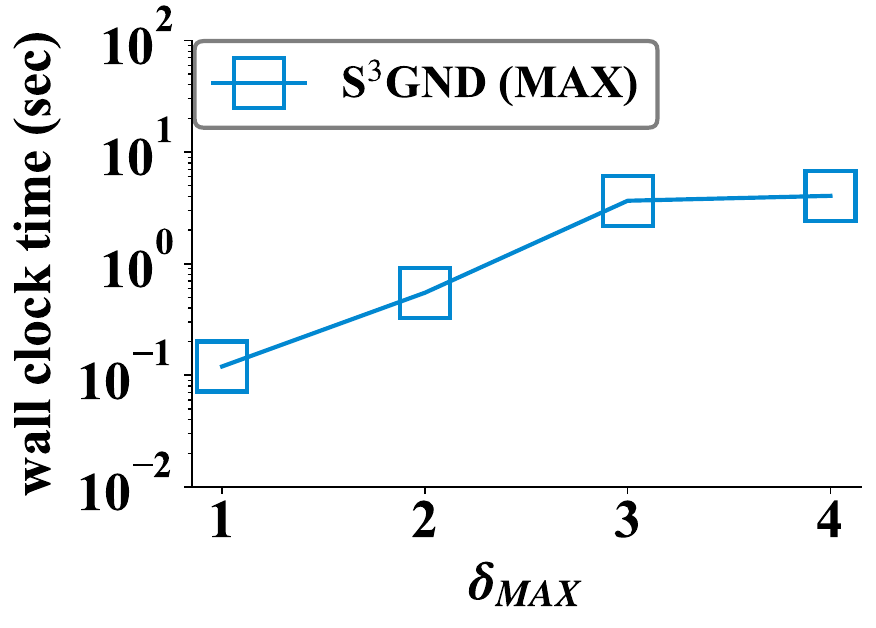}
            \label{subfig:uni-max}
    }
    \subfigure[{\scriptsize time vs. $\delta_{SUM}$ ($Syn\text{-}Uni$)}]{
            \includegraphics[width=0.15\linewidth]{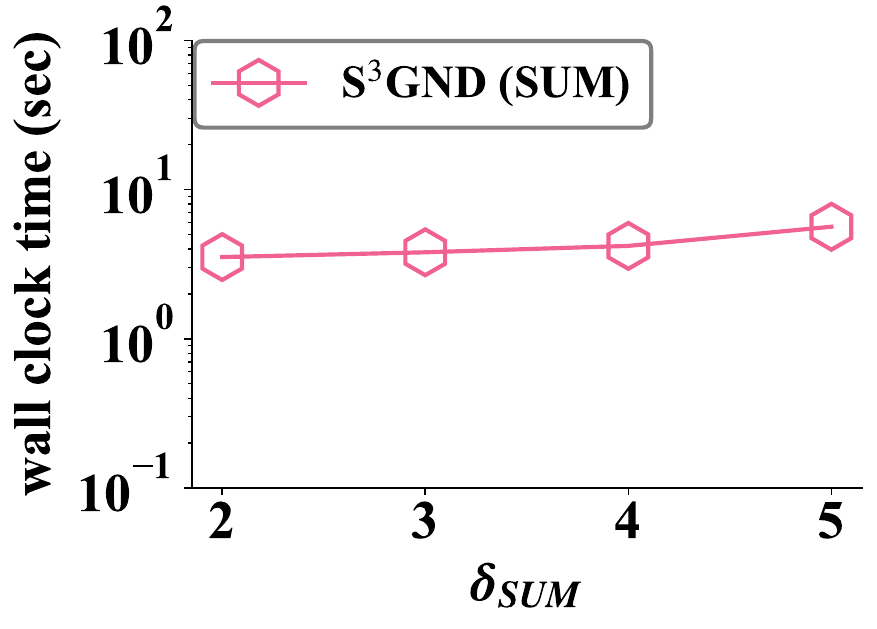}
            \label{subfig:uni-sum}
    }
    \subfigure[{\scriptsize time vs. $|v_i.W|$ ($Syn\text{-}Uni$)}]{
            \includegraphics[width=0.15\linewidth]{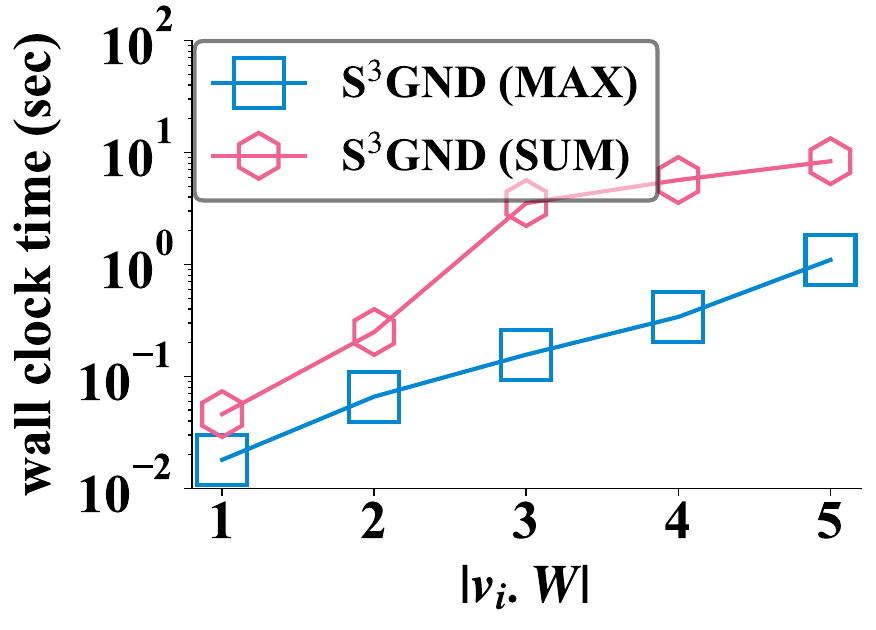}
            \label{subfig:uni-w}
    }
    \subfigure[{\scriptsize time vs. $|\Sigma|$ ($Syn\text{-}Uni$)}]{
            \includegraphics[width=0.15\linewidth]{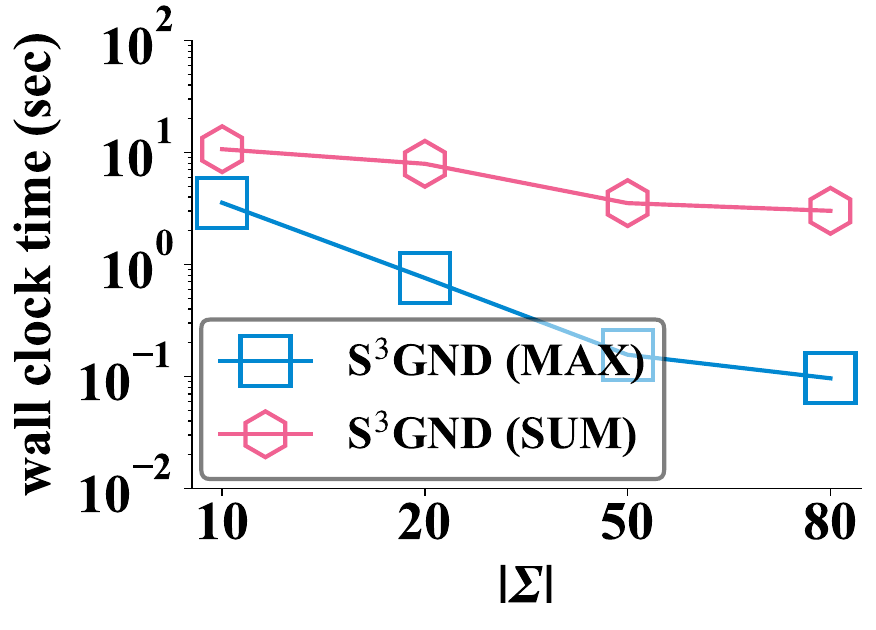}
            \label{subfig:uni-domain}
    }
    \subfigure[{\scriptsize time vs. $|V(q)|$ ($Syn\text{-}Uni$)}]{
            \includegraphics[width=0.15\linewidth]{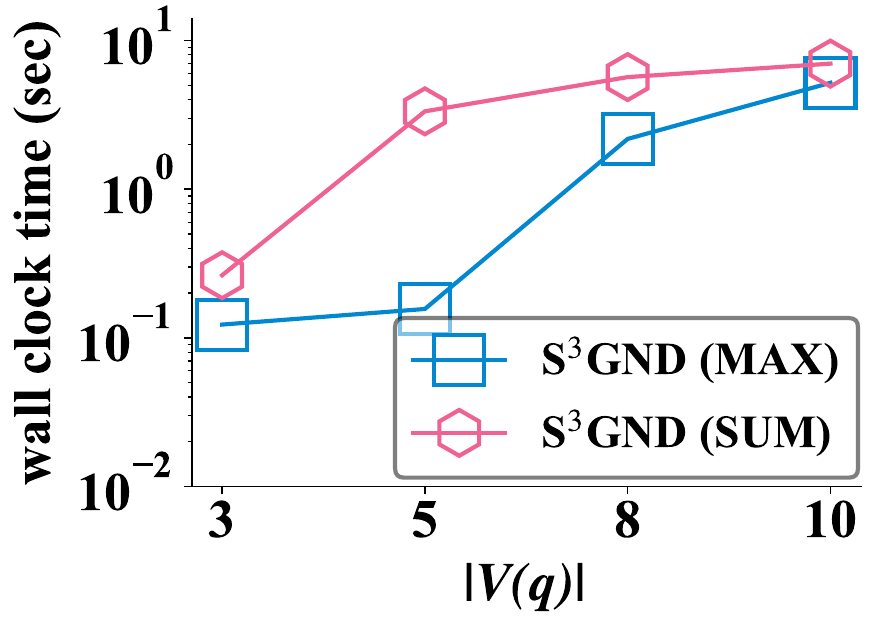}
            \label{subfig:uni-vq}
    }
    \subfigure[{\scriptsize time vs. $|V(G)|$ ($Syn\text{-}Uni$)}]{
            \includegraphics[width=0.15\linewidth]{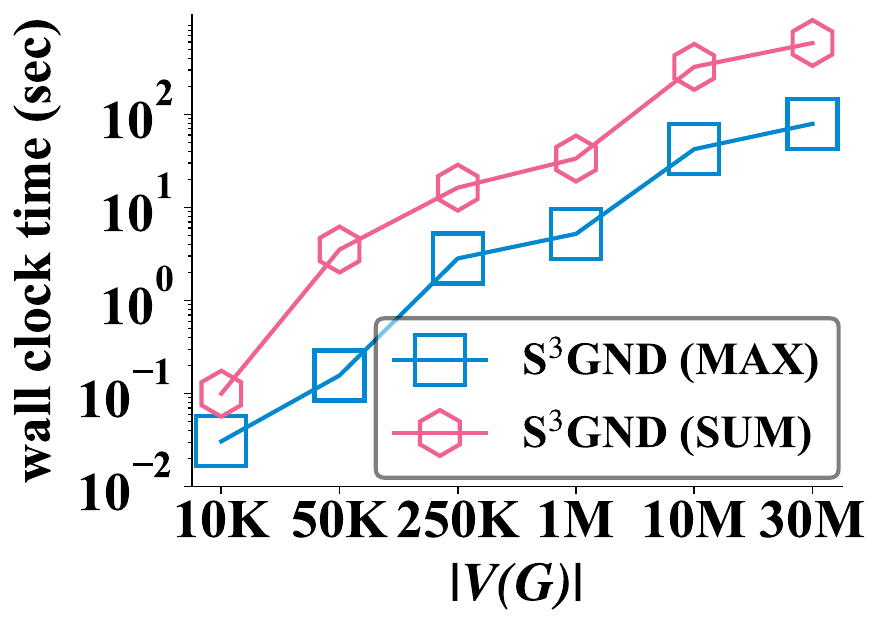}
            \label{subfig:uni-vG}
    }
    
    \subfigure[{\scriptsize time vs. $\delta_{MAX}$ ($Syn\text{-}Gau$)}]{
            \includegraphics[width=0.15\linewidth]{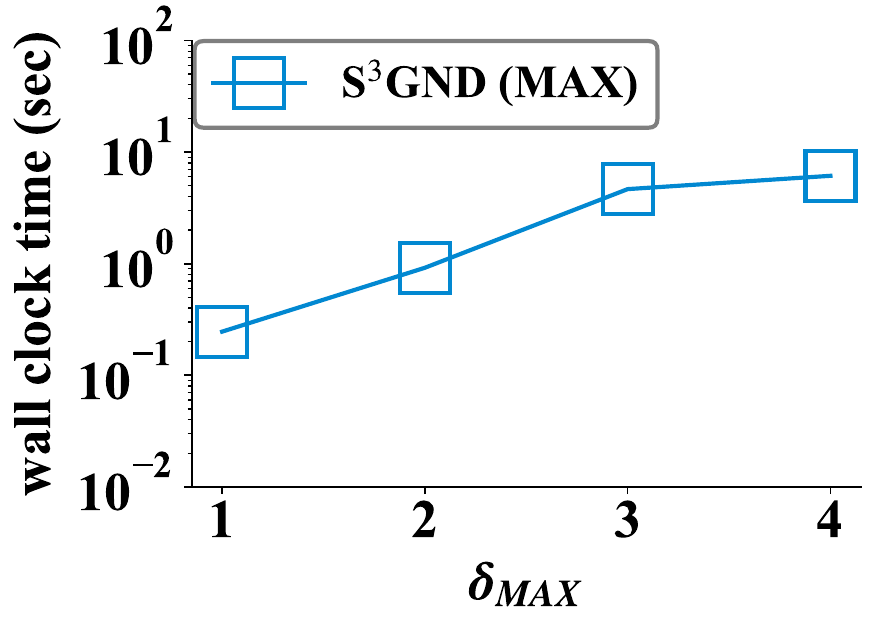}
            \label{subfig:gau-max}
    }
    \subfigure[{\scriptsize time vs. $\delta_{SUM}$ ($Syn\text{-}Gau$)}]{
            \includegraphics[width=0.15\linewidth]{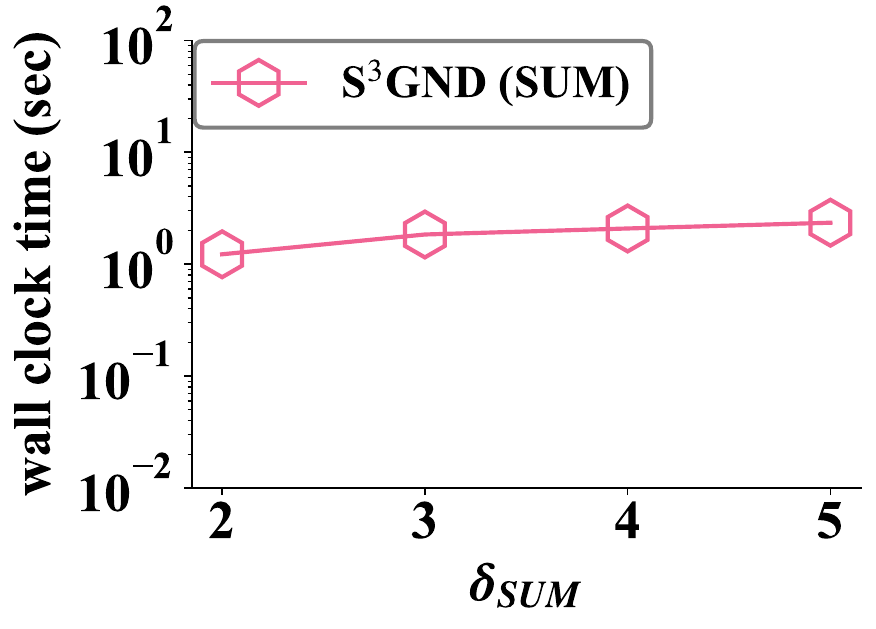}
            \label{subfig:gau-sum}
    }
    \subfigure[{\scriptsize time vs. $|v_i.W|$ ($Syn\text{-}Gau$)}]{
            \includegraphics[width=0.15\linewidth]{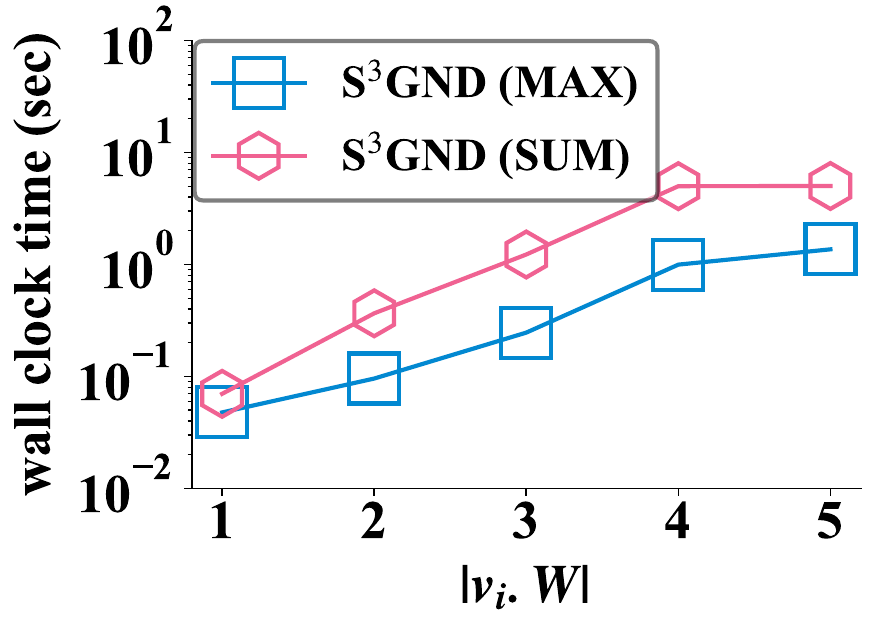}
            \label{subfig:gau-w}
    }
    \subfigure[{\scriptsize time vs. $|\Sigma|$ ($Syn\text{-}Gau$)}]{
            \includegraphics[width=0.15\linewidth]{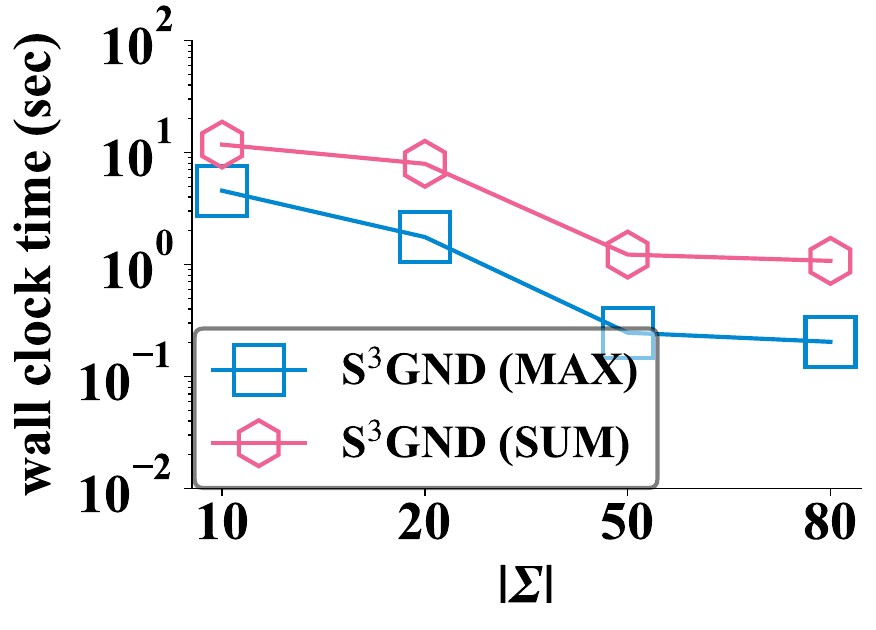}
            \label{subfig:gau-domain}
    }
    \subfigure[{\scriptsize time vs. $|V(q)|$ ($Syn\text{-}Gau$)}]{
            \includegraphics[width=0.15\linewidth]{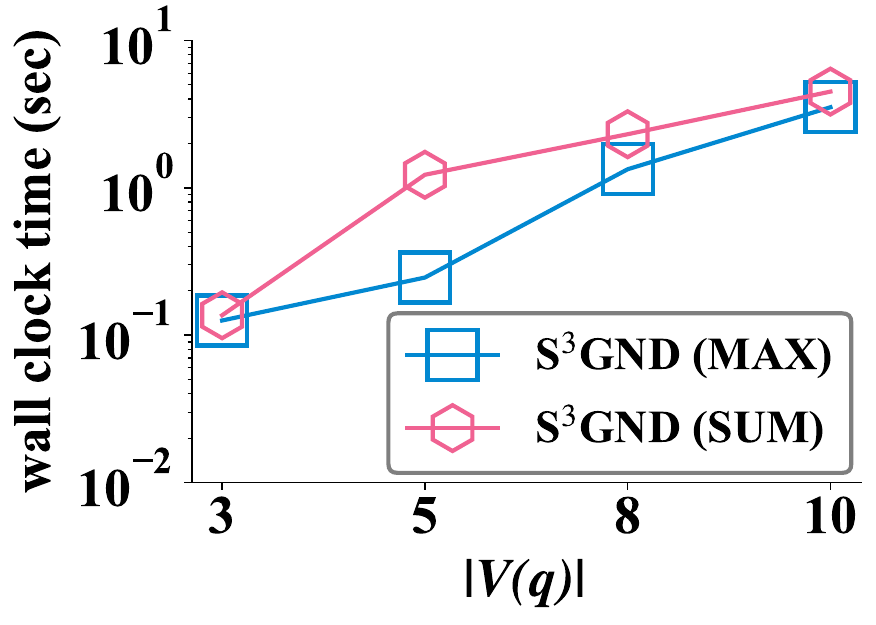}
            \label{subfig:gau-vq}
    }
    \subfigure[{\scriptsize time vs. $|V(G)|$ ($Syn\text{-}Gau$)}]{
            \includegraphics[width=0.15\linewidth]{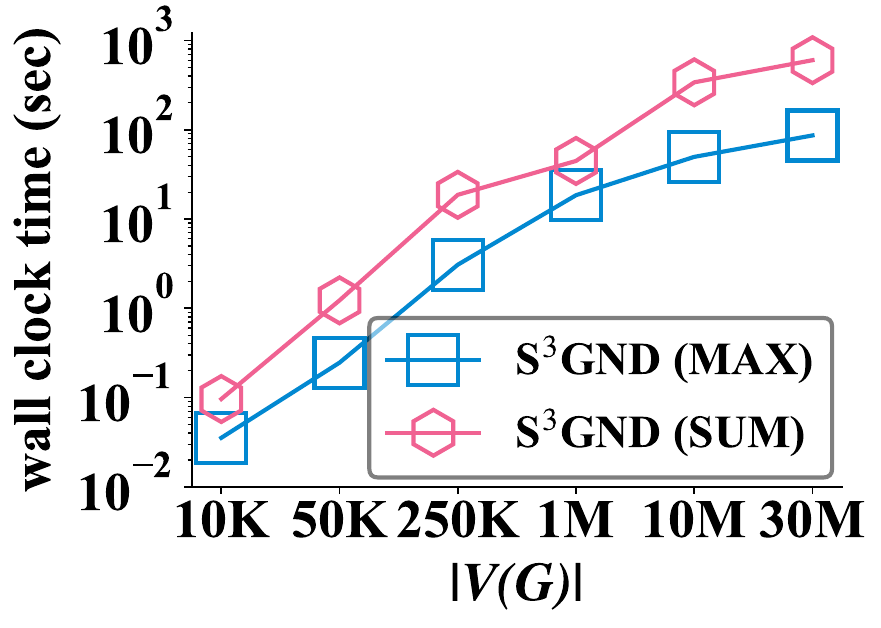}
            \label{subfig:gau-vG}
    }

    \subfigure[{\scriptsize time vs.$\delta_{MAX}$($Syn\text{-}Zipf$)}]{
            \includegraphics[width=0.15\linewidth]{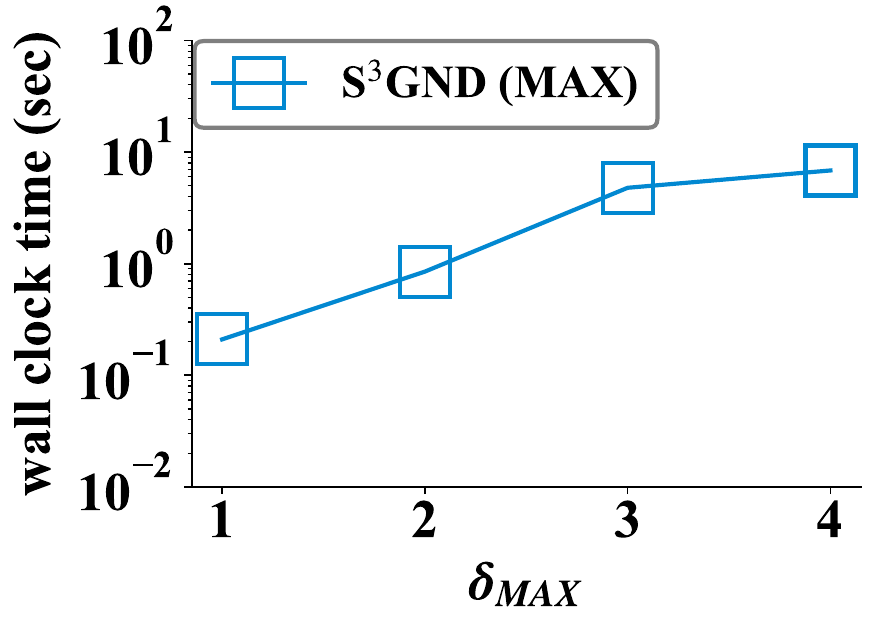}
            \label{subfig:zipf-max}
    }
    \subfigure[{\scriptsize time vs. $\delta_{SUM}$ ($Syn\text{-}Zipf$)}]{
            \includegraphics[width=0.15\linewidth]{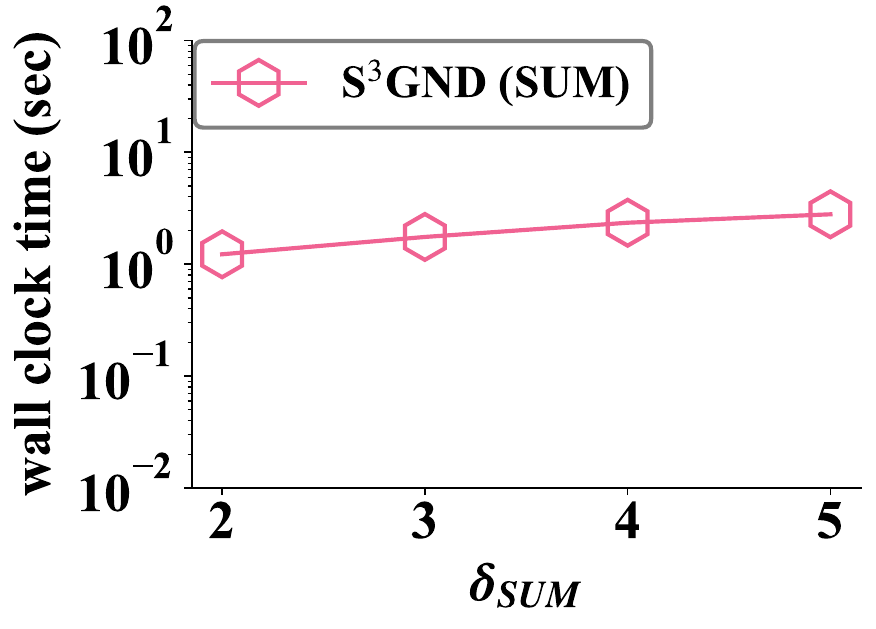}
            \label{subfig:zipf-sum}
    }
    \subfigure[{\scriptsize time vs. $|v_i.W|$ ($Syn\text{-}Zipf$)}]{
            \includegraphics[width=0.15\linewidth]{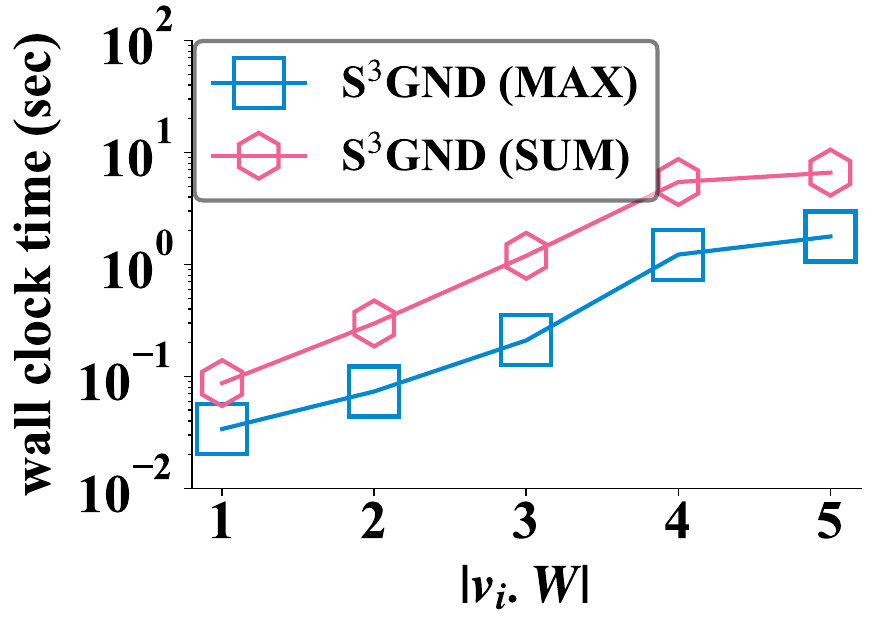}
            \label{subfig:zipf-w}
    }
    \subfigure[{\scriptsize time vs. $|\Sigma|$ ($Syn\text{-}Zipf$)}]{
            \includegraphics[width=0.15\linewidth]{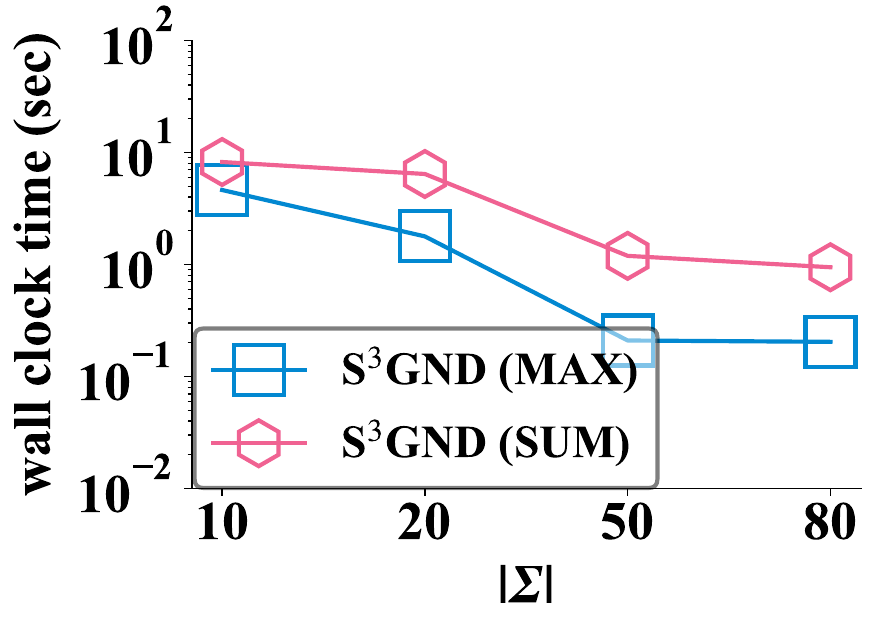}
            \label{subfig:zipf-domain}
    }
    \subfigure[{\scriptsize time vs. $|V(q)|$ ($Syn\text{-}Zipf$)}]{
            \includegraphics[width=0.15\linewidth]{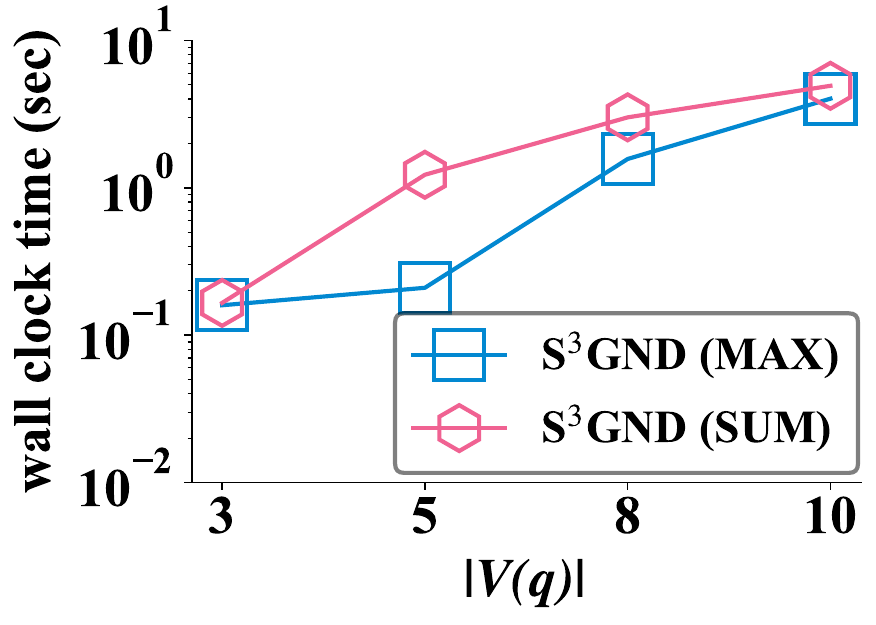}
            \label{subfig:zipf-vq}
    }
    \subfigure[{\scriptsize time vs.$|V(G)|$ ($Syn\text{-}Zipf$)}]{
            \includegraphics[width=0.15\linewidth]{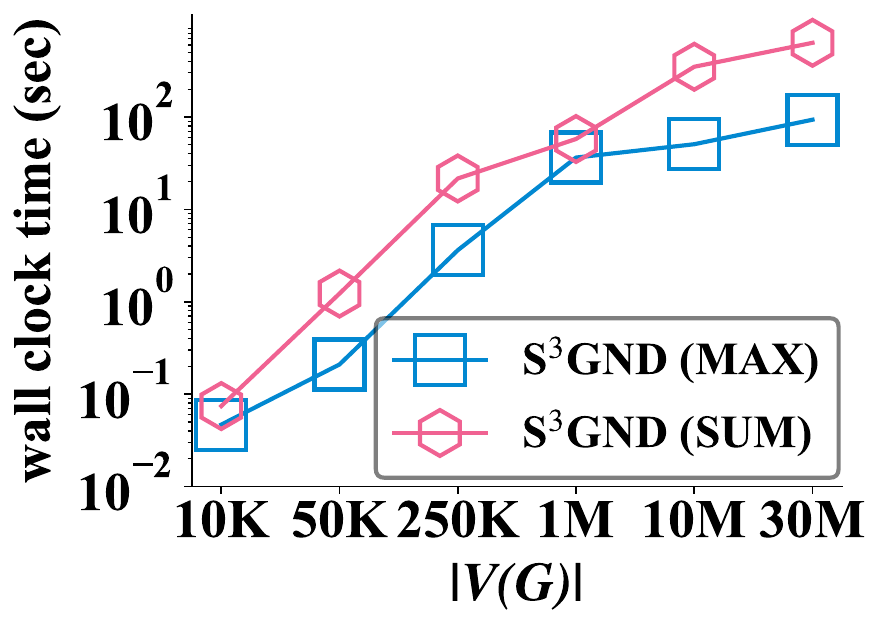}
            \label{subfig:zipf-vG}
    }
    
    \vspace{-0.1in}
    \caption{The S$^3$GND query efficiency on synthetic graphs.}
    \label{fig:parameter}
\end{figure*}

\subsection{S$^3$GND Efficiency Evaluation}


\noindent{\bf The S$^3$GND Efficiency Over Real/Synthetic Graphs:} Figure~\ref{fig:baseline} compares our S$^3$GND approach with S$^3$AND and $BF$ baselines on real-world and synthetic graphs, with default parameter settings. From experimental results, our S$^3$GND approach is 1-2 orders of magnitude faster than S$^3$AND and BF methods for both SUM and MAX aggregate functions $f(\cdot)$. $BF$ produces far more false positives (due to the hash collisions) than our S$^3$GND approach, and therefore requires significantly more time for online refinement than S$^3$GND.
Moreover, S$^3$AND cannot handle edge weights in weighted graphs under S$^3$GND semantics, therefore, cannot effectively filter out S$^3$GND false positives, compared with our S$^3$GND approach. Overall, S$^3$GND can achieve lower wall clock time than the other two baselines (e.g., for the MAX aggregate, our S$^3$GND query time is $0.014\sim 72.922~sec$ for real-world graphs and $0.156\sim 2.712$~sec for synthetic graphs). 


To further verify the robustness and efficiency of our S$^3$GND approach, we test different parameters (i.e., $\delta, |v_i.W|, |\sum|,$ $|V(q)| , |V(G)|$) over three synthetic graphs, \textit{Syn-Uni}, \textit{Syn-Gau}, and  \textit{Syn-Zipf} below.

\noindent{\bf The Efficiency w.r.t. the Threshold, $\delta_{MAX}$, of MAX Neighbor Difference:} Figures~\ref{subfig:uni-max}, ~\ref{subfig:gau-max}, and ~\ref{subfig:zipf-max} evaluate the S$^3$GND efficiency for MAX GND aggregate (i.e., $f=MAX$) on the three synthetic graphs, where the GND threshold $\delta_{MAX}$ varies from $1$ to $4$, and other parameters are set to default values.
From experimental results, we can see that as the threshold $\delta_{MAX}$ increases, the online query times for both S$^3$GND and S$^3$GND also increase. This is because a larger threshold yields more candidate vertices, which requires more refinement time. Nonetheless, the time cost of S$^3$GND still remains low (i.e., $0.12 \sim 6.84$~sec). 

\noindent{\bf The Efficiency w.r.t. Threshold, $\delta_{SUM}$, of SUM Neighbor Difference:} Figures~\ref{subfig:uni-sum},~\ref{subfig:gau-sum}, and ~\ref{subfig:zipf-sum} illustrate the S$^3$GND efficiency for SUM GND aggregate (i.e., $f=SUM$) over synthetic graphs, where the GND threshold $\delta_{SUM}$ varies from $2$ to $5$, and other parameters are by default.
In these figures, larger $\delta_{SUM}$ threshold incurs more candidate vertices for the refinement, leading to a higher query processing cost. 
Nevertheless, the wall clock time of our S$^3$GND approach remains low (i.e., $1.22 \sim 5.64$~sec).

\noindent{\bf The Efficiency w.r.t. the Number, $|v_i.W|$, of Keywords Per Vertex:} Figures~\ref{subfig:uni-w},~\ref{subfig:gau-w}, and ~\ref{subfig:zipf-w} examine the S$^3$GND query performance by varying the number, $|v_i.W|$, of keywords per vertex from $1$ to $5$ over synthetic graphs, where other parameters are set to default values. In these figures, with more keywords in $v_i.W$ of per vertex $v_i$, the pruning power of \textit{keyword embedding MBR pruning} becomes lower, which thus leads to more candidate vertices and higher time cost.
Nonetheless, the S$^3$GND query cost remains low (i.e., $0.018 \sim 8.348$~sec) for different $|v_i.W|$ values.

\noindent{\bf The Efficiency w.r.t. the Size, $|\Sigma|$, of Keyword Domain $\Sigma$:}
Figures~\ref{subfig:uni-domain}, ~\ref{subfig:gau-domain}, and ~\ref{subfig:zipf-domain} evaluate the S$^3$GND query efficiency, by setting $|\Sigma|$ to $10$, $30$, $50$, and $80$, where other parameters are set to the default values. 
From the results, with the same keyword per vertex number, a larger $|\Sigma|$ disperses keywords more widely across the domain, which enhances the efficacy of \textit{keyword embedding MBR pruning}, thereby reducing the number of candidate vertices. 
So that, for larger $|\Sigma|$ value, the S$^3$GND query cost decreases and remains low (i.e., $0.096 \sim 10.695$~sec).

\noindent{\bf The Efficiency w.r.t. the Size, $|V(q)|$, of Query Graph $q$:} Figures~\ref{subfig:uni-vq}, ~\ref{subfig:gau-vq}, and ~\ref{subfig:zipf-vq} present the effect of query graph size $|V(q)|$ on the S$^3$GND query efficiency, where $|V(q)|$ is set to $3$, $5$, $8$, and $10$, and other parameters are set to their default values. When the query graph size, $|V(q)|$, increases, more candidate vertices relative to query vertices will be retrieved and refined, resulting in higher query costs. Nevertheless, the average query time of our S$^3$GND approach remains low (i.e., $0.12 \sim 6.98$~sec).

\noindent{\bf The Efficiency w.r.t. the Size, $|V(G)|$, of Data Graph $G$:} Figures~\ref{subfig:uni-vG}, ~\ref{subfig:gau-vG}, and ~\ref{subfig:zipf-vG} test the scalability of our S$^3$GND approach for different data graph sizes $|V(G)|$ from $10K$ to $30M$, where default values are used for other parameters. For larger data graph (i.e., larger $|V(G)|$ value), the number of candidate vertices also increases, incurring higher retrieval and refinement costs. From experimental results, we can see that for a large graph, $Syn\text{-}Uni$, with as many as $30M$ vertices, the time cost is $586.364$~sec, which confirms the efficiency and scalability of our S$^3$GND approach over large-scale graphs.

Experimental results for other parameters and/or real-world graphs are similar and thus omitted here due to space limitations.

\vspace{-2ex}
\subsection{The Evaluation of S$^3$GND Offline Pre-Computations}
In this subsection, we report the offline computational cost of S$^3$GND, including the time cost of auxiliary data pre-computation and index construction, and the space cost of index storage.

\begin{figure}[t]
    \centering
    \subfigure[real-world graphs]{
        \includegraphics[width=0.48\linewidth]{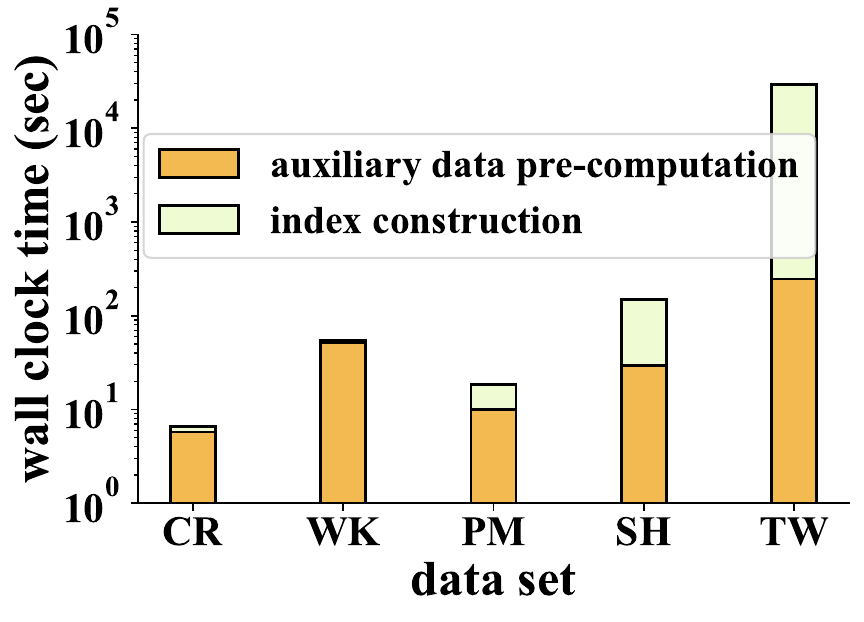}
            \label{subfig:S3GND_real_ofline-online}
    } \hspace{-0.2cm}
    \subfigure[synthetic graphs]{
        \includegraphics[width=0.48\linewidth]{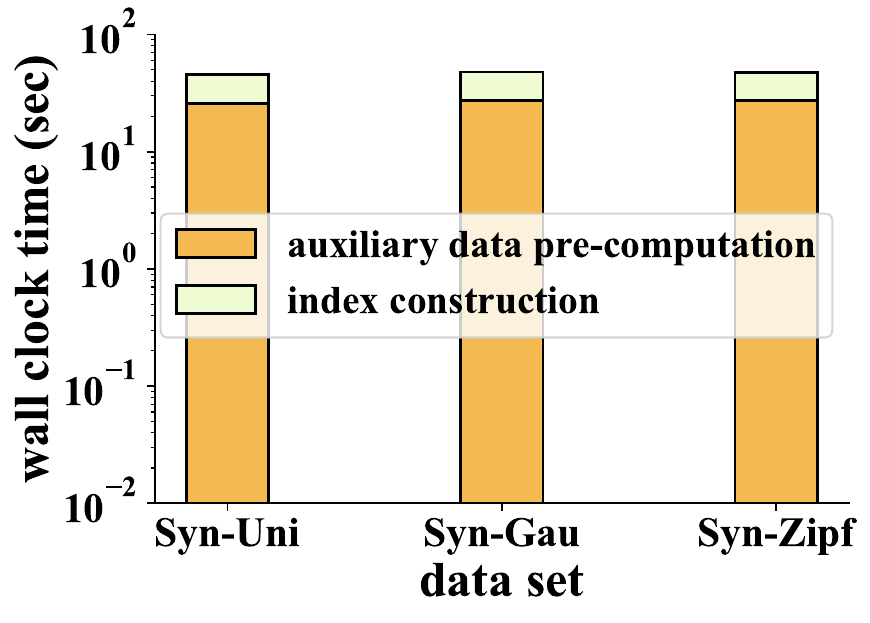}
        \label{subfig:S3GND_synthetic_ofline-online}
    }
    \vspace{-3ex}
    \caption{The time costs of offline auxiliary data pre-computation and index construction.}
    \label{fig:offline time}
\end{figure}

Figure~\ref{fig:offline time} presents the S$^3$GND offline pre-computation cost over real-world and synthetic graphs, where parameters are set to default values. From Figure~\ref{subfig:S3GND_real_ofline-online}, for the real-world graph size from $2K$ to $2.3M$, the overall offline pre-computation time (including the time costs of auxiliary data pre-computation and index construction ) varies from $6.55~sec$ to $8.2~h$. 
For synthetic graphs (with the graph size $|V(G)|=50K$) in Figure~\ref{subfig:S3GND_synthetic_ofline-online}, the overall offline pre-computation time varies from $25.78~sec$ to $27.32~sec$.


Figure~\ref{fig:space} illustrates the space costs of the transformed keyword hypergraph $H$, the pre-computed index $\mathcal{I}$, and the raw data graph $G$ over both real-world and synthetic graphs. 
Figures~\ref{subfig:S3GND_index_space_real} and ~\ref{subfig:S3GND_index_space_syn} show that the hypergraph $H$ requires between 3X and 43X less storage space than graph $G$, while the index $\mathcal{I}$ needs between 2X and 20X less storage space than graph $G$.

Overall, our experimental results indicate that the offline pre-computations of our S$^3$GND approach can effectively support large-scale graphs with reasonable time and space costs.


%

\begin{figure}[t]
    \centering
    \subfigure[real-world graphs]{
            \includegraphics[width=0.49\linewidth]{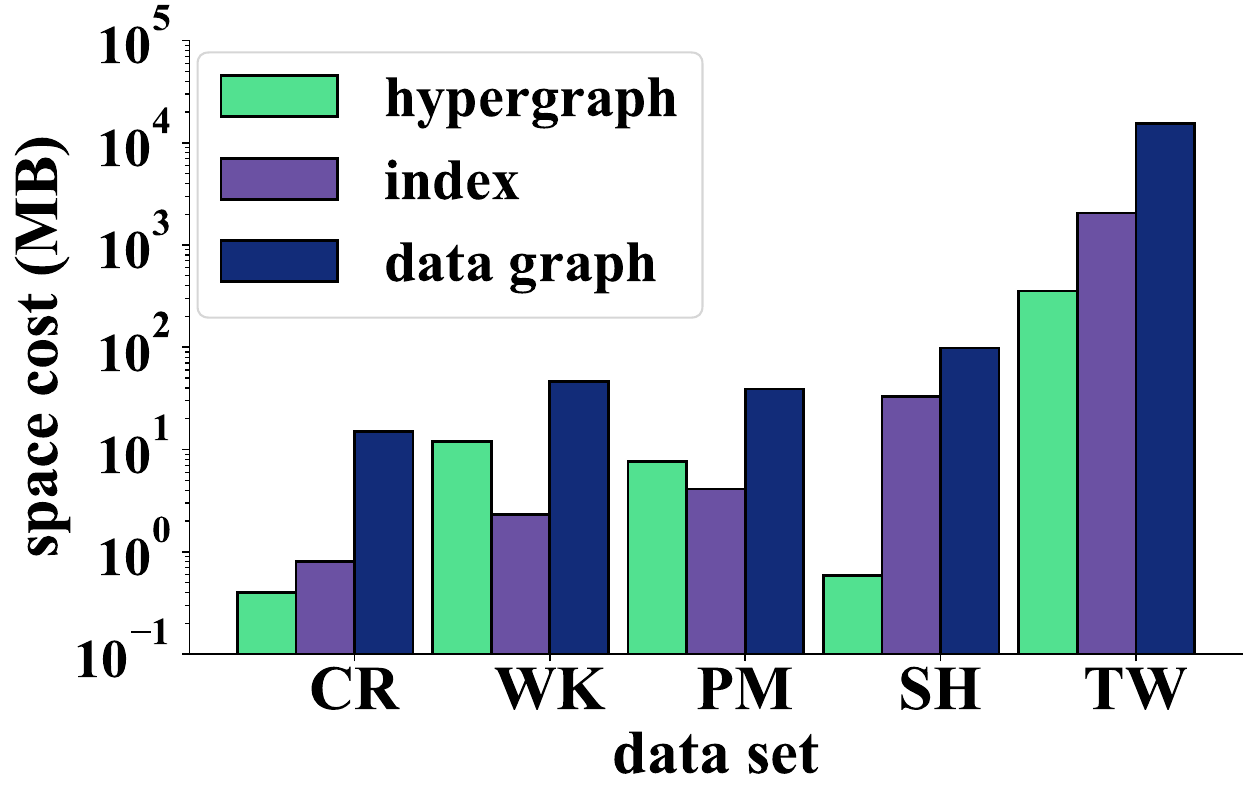}
            \label{subfig:S3GND_index_space_real}
    } \hspace{-0.4cm}
    \subfigure[synthetic graphs]{
        \includegraphics[width=0.49\linewidth]{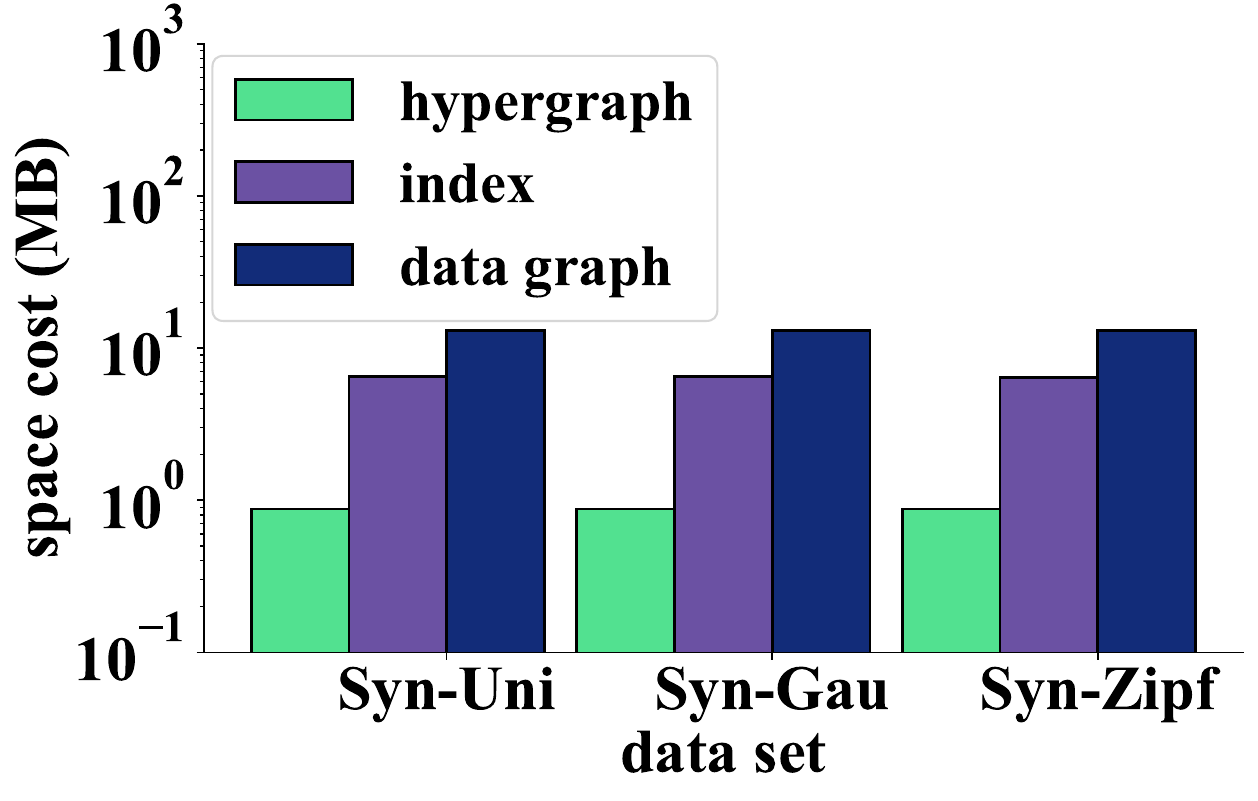}
        \label{subfig:S3GND_index_space_syn}
    }\vspace{-2ex}
    \caption{The space cost of the offline pre-computed hypergraph and index.}
    \label{fig:space}
\end{figure}

\subsection{Case Study}
\label{sec:case}
To further validate the effectiveness of our S$^3$GND approach, we present a case study of our S$^3$GND query over a real-world graph, $shanghai$, compared with S$^3$AND. Given a same query graph $q$ ($|V(q)|=5$) and the aggregate function $f=MAX$, Figure~\ref{fig:case} evaluates the differences between top-1 query result of S$^3$GND (denoted as $g_{GND}$) and that of S$^3$AND~\cite{wen2025s3and} (denoted as $g_{AND}$), under GND and AND~\cite{wen2025s3and} semantics, respectively. 

From  Figure~\ref{fig:case}, the same colors of vertices in subgraphs ($g_{GND}$ or $g_{AND}$) and query graph $q$  indicate the matching of vertex keyword sets (i.e., the keyword set containment relationship). Under different semantics, the AND score of the S$^3$AND answer $g_{AND}$ is 2, which is the same as that of our S$^3$GND answer $g_{GND}$. However, compared with S$^3$AND, S$^3$GND can retrieve more edges with large weights (e.g., edge $e_{v_{66371},v_{68338}}$ in $g_{GND}$ has weight 5), which are usually more important (e.g., higher connectivity) than those with lower edge weights. 
Moreover, under GND semantics, the S$^3$AND result $g_{AND}$ has a GND score of $GND(q, g_{AND})=5$, whereas our S$^3$GND answer $g_{GND}$ has a much smaller (better) GND score $GND(q, g_{GND})=2$, which confirms the quality of our retrieved S$^3$GND answer over the weighted graph.




\nop{\color{orange}
Under the keyword-set matching condition, this case intuitively demonstrates the difference in matching accuracy between the S$^3$GND and S$^3$AND algorithms.
Although the candidate subgraph returned by S$^3$AND achieves a topological match (i.e., $AND(q, g_{1}^{AND})=0$), it shows a significant deviation from the query graph in terms of semantic information such as edge weights.
For example, in the candidate subgraph $g_{1}^{AND}$, the weight of edge $e_{v_1,v2}$ cannot be precisely aligned with the weight of $e_{q_1,q_2}$ in the query graph $q$, resulting in a lower GND score under different aggregate function $f$ (i.e., $GND(q, g_{1}^{AND})=0.25$ if $f=MAX$, and $GND(q, g_{1}^{AND})=0.5$ if $f=SUM$).
In contrast, our S$^3$GND algorithm guarantees not only structural matching but also precise semantic matching. 
The edge weights of the returned subgraph perfectly match the GND semantic matching of the query graph.
This is because while S$^3$AND focuses solely on topological connectivity, S$^3$GND considers both topological structure and content semantics. As a result, S$^3$GND enables the retrieval of higher-quality subgraph results that better align with practical needs.
}

\begin{figure}[!t]
    \centering
    \includegraphics[width=0.995\linewidth]{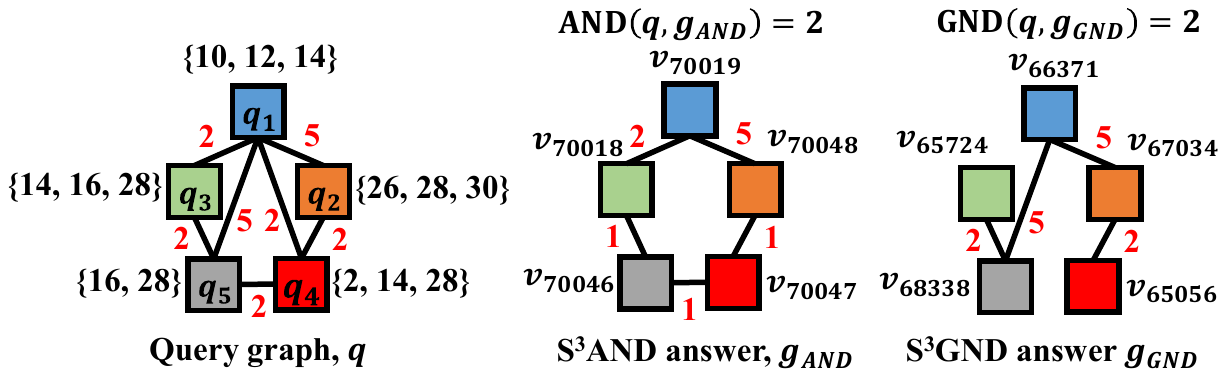}\vspace{-1ex}
    \caption{A case study of S$^3$GND and S$^3$AND queries over $shanghai$ graph with query graph size $|V(q)|=5$ and $f=MAX$.}
    \label{fig:case}
\end{figure}

\balance
\section{Related Work}
\label{sec:related_work}

\noindent{\bf Graph Data Management:} Prior works on graph data management have primarily focused on effective storage~\cite{li2023graphar, hao2025rapidstore}, indexing~\cite{yang2024revisiting,sehgal2025navix}, and querying over massive graph structures~\cite{zhang2024comprehensive,gao2025efficient}. Many existing graph management systems~\cite{chakraborty2025robust, skavantzos2025entity} provide persistent storage services and graph analysis, enhanced by effective graph indexes~\cite{zhang2025botbin, mang2024efficient}. Various fundamental graph queries have been studied, including path search queries~\cite{liu2025wolverine, shen2025efficient}, keyword search queries~\cite{lin2015reverse, jiang2023dkws}, (sub)graph matching~\cite{zhang2024comprehensive, ye2024efficient}, community search/detection~\cite{hsu2025attacking, desilva2026eiga}, and so on. In this paper, we focus on the subgraph similarity search problem over a large-scale weighted graph, in which each vertex is associated with a set of keywords. 




\noindent{\bf Subgraph Similarity Search:} The subgraph similarity search has attracted substantial attention for decades~\cite{yuan2015graph, liu2025masked, wen2025s3and, yuan2012efficient, yan2005substructure, yuan2015efficient, chang2022accelerating}.
Existing approaches to subgraph similarity search can be broadly grouped into \textit{exact methods based on filtering and pruning}~\cite{cheng2007fg, yan2005graph, han2013turboiso, kim2022efficient, kim2021boosting} and \textit{approximate methods based on partitioning and heuristics}~\cite{zhao2013partition, zhao2017efficient}.
The FG-index~\cite{cheng2007fg} is an exact method that builds a hierarchical graph index and integrates pruning strategies to eliminate false positives, thereby supporting exact search for subgraphs similar to a given query graph. For approximate methods, Zhao et al.~\cite{zhao2017efficient} employed graph partitioning techniques together with heuristic search to retrieve the top-$k$ similarity results for large-scale networks. 
While many existing studies incorporate graph topology and/or node attributes into graph similarity measures, they typically assume an unweighted graph model, without considering edge weights (and their similarities) or the fact that each vertex is associated with a single keyword/label. In contrast, in this work, our S$^3$GND problem takes into account a more general weighted graph model, containing edge weights and a keyword set per vertex, and, moreover, provides a novel subgraph matching semantic, S$^3$GND, over such a general graph, whose graph similarity measure integrates edge weights, topology, and vertex keyword sets.



\nop{
\noindent{\bf Non-learning Subgraph Similarity Search:}
As an extension of subgraph matching, subgraph similarity search has attracted substantial attention in recent years~\cite{yuan2015graph, liu2025masked, wen2025s3and, yuan2012efficient, yan2005substructure, yuan2015efficient, chang2022accelerating}.
Existing approaches to subgraph similarity search can be broadly grouped into \textit{exact methods based on filtering and pruning}~\cite{cheng2007fg, yan2005graph, han2013turboiso, kim2022efficient, kim2021boosting} and \textit{approximate methods based on partitioning and heuristics}~\cite{zhao2013partition, zhao2017efficient}.
The FG-index~\cite{cheng2007fg} is one of the exact methods, which builds a hierarchical graph index and integrates pruning strategies to eliminate false positives, thereby supporting exact search for subgraphs similar to a given query graph. 
For approximate methods, Zhao et al.~\cite{zhao2017efficient} employ graph partitioning techniques and heuristic search to retrieve the top-$k$ similarity results for large-scale networks. 
While most existing studies overlook edge weight similarity, our S$^3$GND framework considers structural matching across both vertex keyword-set and edge weight semantics.
}

\nop{
\noindent{\bf Graph Data Management:} As a core problem in graph querying, subgraph similarity search has attracted substantial attention in recent years~\cite{yuan2015graph, liu2025masked, wen2025s3and, yuan2012efficient, yan2005substructure, yuan2015efficient, chang2022accelerating}.
Existing approaches to subgraph similarity search can be broadly grouped into three classes: \textit{exact methods based on filtering and pruning}~\cite{cheng2007fg, yan2005graph, han2013turboiso, kim2022efficient, kim2021boosting}, \textit{approximate methods based on partitioning and heuristics}~\cite{zhao2013partition, zhao2017efficient}, and \textit{approximate methods based on graph neural networks}~\cite{li2024self, bai2021glsearch, nie2020subgraph}. 
The FG-index~\cite{cheng2007fg} constructs a hierarchical graph index and integrates pruning strategies to reduce false positives, thereby supporting exact search for subgraphs similar to a given query graph. Nonetheless, constrained by its computational efficiency, it does not scale well to very large graphs.
Zhao et al.~\cite{zhao2017efficient} employ graph partitioning techniques together with heuristic search to retrieve top-k subgraph similarity results on large-scale graphs.
Graph neural network-based methods, such as ~\cite{li2024self, nie2020subgraph}, learn node embeddings from graphs and then conduct approximate subgraph similarity search by comparing these embedding vectors.
Although such approximate techniques are typically more efficient than exact search methods, they cannot guarantee consistently high-quality subgraphs.
In addition, most existing studies overlook edge weight similarity. Our S$^3$GND framework addresses this gap by simultaneously enforcing accurate structural and content matching (i.e., keyword-set matching and GND-based semantic matching).


}

Recently, learning-based subgraph similarity search~\cite{ling2021multilevel, ye2024efficient, bai2021glsearch, wang2023neural, yang2023extract, sima2025deep, nikolentzos2017matching, bai2019simgnn} has been introduced to better balance query efficiency and matching accuracy. 
Ye et al.~\cite{ye2024efficient} proposed an effective and efficient GNN-PE approach, which achieves exact subgraph search via the learned dominance-embeddings for paths. Moreover, some works on approximate subgraph similarity~\cite{nikolentzos2017matching, bai2019simgnn} measured subgraph similarity based on node embeddings and distances in the embedding space. These prior works usually considered either a simple graph model with a single keyword per vertex and unweighted edges, or a graph matching measure (e.g., graph isomorphism, graph edit distance, etc.), which have different graph models and/or matching semantics from our S$^3$GND problem settings (e.g., weighted graph with a keyword set per vertex, and S$^3$GND semantic that considers edge weights/keyword sets/topology). Thus, we cannot directly apply previous learning-based techniques for subgraph similarity search to processing our S$^3$GND queries. 

\nop{
With the continuous expansion of graph data, an increasing number of learning-based subgraph similarity search methods~\cite{ling2021multilevel, ye2024efficient, bai2021glsearch, wang2023neural, yang2023extract, sima2025deep, nikolentzos2017matching, bai2019simgnn} have been introduced to better balance query efficiency and matching accuracy. 
Ye et al.~\cite{ye2024efficient} propose the GNN-PE approach, which achieves precise subgraph search by learning path-dominance embeddings for subgraphs. Nevertheless, GNN-PE is constrained by its dependence on a single keyword representation and its substantial pre-computation overhead.
The methods in~\cite{nikolentzos2017matching, bai2019simgnn} measure subgraph similarity based on node embeddings and distances in the embedding space, thus producing approximate subgraph similarity results.
While these approaches can return answers quickly, the matched subgraphs often fail to exhibit consistent similarity in both structure and content, and they typically require a large amount of training data to achieve satisfactory accuracy.
In contrast, our s$^3$GND method does not embed the original nodes directly; instead, it embeds the associated keywords. We leverage Minimum Bounding Rectangle (MBR) containment relationships among node keyword sets for both embedding training and containment decisions, thereby substantially reducing the need for training data generation while ensuring accurate query results.
}

\section{Conclusion}
\label{sec:conclusion}

In this paper, we formulate a novel problem, \textit{subgraph similarity search under the generalized neighbor difference semantics} (S$^3$GND), which has broad applications (e.g., protein discovery, social network analysis, and recommender systems) in real-world scenarios.
To efficiently and accurately address the S$^3$GND problem, we propose an effective learning-based approach for training a keyword-hypergraph-based embedding model and generating high-quality embedding MBRs for vertex keyword sets, which can be used for reducing the S$^3$GND search space. We devise a tree index over offline pre-computed auxiliary data (including keyword embedding MBRs and sorted lists of edge weights), which can facilitate the proposed effective pruning strategies (i.e., \textit{keyword embedding MBR}, \textit{ND lower bound}, and \textit{GND lower bound pruning}) for our developed efficient online S$^3$GND query algorithm. Finally, we conduct extensive experiments on real/synthetic graphs to validate the effectiveness and efficiency of our S$^3$GND approach.

\clearpage

\bibliographystyle{ACM-Reference-Format}

\bibliography{dblp}

\end{document}